\newif\ifblog
\newif\iftex
\newcommand{\thmref}[1]{Theorem~{\rm \ref{#1}}}
\newcommand{\lemref}[1]{Lemma~{\rm \ref{#1}}}
\newcommand{\corref}[1]{Corollary~{\rm \ref{#1}}}
\newcommand{\propref}[1]{Proposition~{\rm \ref{#1}}}
\newcommand{\defref}[1]{Definition~{\rm \ref{#1}}}
\newcommand{\be}{\begin{eqnarray}}
\newcommand{\ee}{\end{eqnarray}}
\newcommand{\bee}{\begin{eqnarray*}}
\newcommand{\eee}{\end{eqnarray*}}
\newcommand{\cc}{\overline{c}}
\def\wv{\widehat{v}}
\def\wu{\widehat{u}}
\def\d{\:{\rm d}}
\def\dc{\:{\rm d}c}
\def\dt{\:{\rm d}t}
\def\ds{\:{\rm d}s}
\def\dy{\:{\rm d}y}
\def\P{{\mathbb P}}
\def\E{{\mathbb E}}
\def\R{{\mathbb R}}
\def\Q{{\cal Q}}
\def\p{{\partial}}
\newcommand{\cF}{{\cal F}}
\newcommand{\ep}{\varepsilon}
\newcommand{\al}{\alpha}
\newcommand{\nd}{\noindent}
\newcommand{\ga}{\gamma}
\newcommand{\si}{\sigma}
\newcommand{\Om}{\Q_{0}}
\newcommand{\esssup}{\mathop{\rm ess\;sup}\limits}
\newtheorem{theorem}{Theorem}[section]
\newtheorem{lemma}[theorem]{Lemma}
\newtheorem{definition}[theorem]{Definition}
\newtheorem{corollary}[theorem]{Corollary}
\newtheorem{proposition}[theorem]{Proposition}
\newenvironment{proof}{\noindent{\sc Proof:}}{\strut\hfill $\Box$\medskip\\} 
\newcommand{\LL}{{\cal L}}
\def\wv{\widehat{v}}
\def\wu{\widehat{u}}
\def\wx{\widehat{x}}
\def\ou{\overline{u}}
\def\ov{\overline{v}}
\def\Dc{\Delta c}
\def\A{{\cal A}}
\def\B{{\cal B}}
\def\setb{{\cal C}}
\def\calE{{\cal E}}
\def\sw{\mathcal{X}}
\def\swi{\mathfrak{C}_{[c,\cc]}}
\def\Wploc{W^{2,\:p}_{\rm loc}}
\def\Wp{W^{2, p}}
\def\seta{C^{\infty,0}(\Om)}
\def\setb{\mathcal{D}}
\def\DD{\mathcal{C}}
\title{Optimal ratcheting of dividend payout under Brownian motion surplus}
\author{Chonghu Guan\thanks{School of Mathematics, Jiaying University, Meizhou 514015, Guangdong, China. Email: \url{gchonghu@163.com}.}
\and Zuo Quan Xu\thanks{Department of Applied Mathematics, The Hong Kong Polytechnic University, Kowloon, Hong Kong, China. Email: \url{maxu@polyu.edu.hk}}
}
\date{}
\begin{document}
\maketitle
\vspace{-20pt}
\begin{abstract} 
This paper is concerned with a long standing optimal dividend payout problem subject to the so-called ratcheting constraint, that is, the dividend payout rate shall be non-decreasing over time and is thus self-path-dependent. The surplus process is modeled by a drifted Brownian motion process and the aim is to find the optimal dividend ratcheting strategy to maximize the expectation of the total discounted dividend payouts until the ruin time.
Due to the self-path-dependent control constraint, the standard control theory cannot be directly applied to tackle the problem. The related Hamilton-Jacobi-Bellman (HJB) equation
is a new type of variational inequality. In the literature, it is only shown to have a viscosity solution, which is not strong enough to guarantee the existence of an optimal dividend ratcheting strategy. This paper proposes a novel partial differential equation method to study the HJB equation.
We not only prove the the existence and uniqueness of the solution in some stronger functional space, but also prove the strict monotonicity, boundedness, and $C^\infty$-smoothness of the dividend ratcheting free boundary. Based on these results, we eventually derive an optimal dividend ratcheting strategy, and thus solve the open problem completely.
Economically speaking, we find that if the surplus volatility is above an explicit threshold, then one should pay dividends at the maximum rate, regardless the surplus level. Otherwise, by contrast, the optimal dividend ratcheting strategy relays on the surplus level and one should only ratchet up the dividend payout rate when the surplus level touches the dividend ratcheting free boundary. {Moreover, our numerical results suggest that one should invest into those companies with stable dividend payout strategies since their income rates should be higher and volatility rates smaller.}
\bigskip\\
\nd {\bf Keywords.} Free boundary; variational inequity; self-path-dependent constraint.
\bigskip\\
\nd {\bf 2010 Mathematics Subject Classification.} 
35R35; 35Q93; 91G10; 91G30; 93E20.

\end{abstract}
 
 \section{Introduction}
The study of optimally paying dividends from a dynamic stochastic surplus process goes back at least to De Finetti \cite{de1957impostazione} and Gerber \cite{gerber1969entscheidungskriterien}.
In an optimal dividend payout problem, the objective (of the company) is to find an optimal dividend payout strategy to maximize the expectation of the total discounted dividend payouts until bankruptcy (i.e. the ruin time).
The optimal dividend payout strategy shall be a tradeoff between the dividend compensations to the shareholders and the managed surplus process to secure the position so as to avoid or delay the ruin time. The underlying surplus process can be modeled in many ways. The most popular ones include compound Poisson model (\cite{gerber2006optimal}, \cite{albrecher2020optimal}), drifted Brownian motion model (\cite{asmussen1997controlled}, \cite{gerber2004optimal}, \cite{azcue2005optimal}), jump-diffusion model (\cite{belhaj2010optimal}), {L\'evy model (\cite{Kyprianou2012},\cite{BayraktarKyprianou2013}), and other diffusion models (\cite{BayraktarEgami2010}, \cite{reppen2020optimal})}. The dividend payout strategy can be constrained to different types as well; for instance, {
Bayraktar and Young \cite{BayraktarYoung2008} considered strategies which are functions of historical maximum wealth; Bayraktar and Egami \cite{BayraktarEgami2010} investigated impulse strategies; Angoshtari et al. \cite{angoshtari2019optimal} studied excess dividend rate with a drawdown constraint}; 
see \cite{avanzi2009strategies} for an overview).

In this paper, we focus on the long standing optimal dividend payout problem, where the surplus process is modeled by a drifted Brownian motion, and the dividend payout is subject to the so-called \textit{dividend ratcheting} constraint, that is, the dividend payout rate shall be non-decreasing over time.
The ratcheting (namely, non-decreasing) constraint is a particular type of habit-formation that
has been extensively investigated in financial economic literature. Dybvig \cite{dybvig1995dusenberry} first considered a life-time portfolio selection model with consumption ratcheting where no decline is allowed in the consumption rate of the agent. Similar problems with ratcheting over the wealth has been studied by Roche \cite{roche2006optimal}, Elie and Touzi \cite{elie2008optimal} and Chen et al. \cite{CLLL15}.
Concerning the dividend payout problem,
Albrecher \cite{albrecher2018dividends} considered a two-level ratcheting constraint problem, that is, one can only ratchet up once from a lower level dividend rate to a higher one. 
Albrecher et al. \cite{albrecher2020optimal} and \cite{albrecher2022optimal} investigated the optimal dividend problems with drawdown and ratcheting constraint under different surplus models by viscosity solution theory.

Technically, with ratcheting constraint involved, the optimal dividend payout problems shall lead to optimal control problems with self-path-dependent control constraint. To the best of our knowledge, there seems no universal way to deal with such type of stochastic control problems. The related Hamilton-Jacobi-Bellman (HJB) equations for this new type of problems are variational inequalities with at least two arguments: a state argument (representing the surplus level) and a control argument (representing the historical maximum dividend payout rate).
Unlike those classical variational inequalities with function constraint (such as Black-Scholes partial differential equations for American options), the HJB equations form a new type of variational inequalities where a gradient constraint on the value function against the dividend payout rate is involved.
Although similar HJB equations with gradient constraints have appeared in the problems involving transaction costs (see \cite{DY09,DXZ10}), they are different in nature. In transaction problems, the gradient constraints are put on the state argument, whereas in dividend ratcheting problems, they are on the control argument.
They are so different such that cannot be treated by similar methods. 
Albrecher et al. \cite{albrecher2022optimal} considered an optimal dividend payout problem with ratcheting constraint. They showed that the value function is the unique viscosity solution to corresponding HJB equation. Also, the value function can be approximated by problems with finite dividend ratcheting. But, as is well-known, viscosity solution is so weak such that they cannot provide an optimal dividend payout strategy for the original problem.

In this paper, we study the same problem as in \cite{albrecher2022optimal}. Different from the existing literature that majorly use viscosity solution technique to study the HJB equations, we propose a novel partial differential equation (PDE) method to study the HJB equation.
Following the similar idea of discretization in \cite{albrecher2022optimal}, we first disperse the parameter to obtain a sequence of ordinary differential equations (ODEs) whose solvability is well-known.
By establishing various estimates and taking limit, we can get a fairly strong solution to the HJB equation. We next define a dividend ratcheting free boundary and derive its properties such as boundedness, monotonicity, and $C^{\infty}$-smoothness.
Finally, we can use this dividend ratcheting free boundary to construct a complete answer to the original optimal dividend payout problem under ratcheting constraint. 
We also perform a numerical analysis to exam the effects of different parameters (including the maximum allowable dividend payout rate, the discount rate, the income rate and the volatility rate) on the value function and the free boundary. The results suggest that one should invest
into those companies with stable dividend payout strategies since their income rates should be higher and volatility rates smaller.

The remainder of this paper is organized as follows. We first introduce our optimal dividend ratcheting problem in Section \ref{section2}. The boundary case and a simple case will be solved in this section as well. We introduce the HJB equation for the complicated case and solve the problem completely in Section \ref{section3}. A numerical analysis is presented in Section \ref{section4} to exam the effects of different parameters on the value function and the free boundary. 
Section \ref{section5} is devoted to the solvability of the HJB equation and Section \ref{section6} devoted to the study of the dividend ratcheting free boundary.


\section{Optimal Dividend Ratcheting Problem} \label{section2}

We use a filtered complete probability space $(\Omega, \cF, \P, \{\cF_t\}_{t\geq0})$ to
represent the financial market, where the filtration $\{\cF_t\}_{t\geq0}$ is generated by a standard one-dimensional Brownian motion $\{W_t, t\geq 0\}$ defined in the probability space, argumented with all $\P$ null sets.

We assume the income of a company follows a drifted Brownian motion process. After paying dividends, the surplus process of the company follows the following stochastic differential equation (SDE):
\begin{equation}\label{X_eq}
\d X_t=(\mu-\DD_t) \dt+\sigma \d W_t,~\; t\geq 0,
\end{equation}
where the constant $\mu>0$ denotes the income rate of the company, and the constant $\sigma>0$ represents the volatility rate of the surplus process, $\DD_t$ is the dividend payout rate at time $t$ to be chosen by the company.

Now let us introduce the set of admissible dividend payout strategies, which is the distinguish feature of our model. Economically speaking, in order to survive, a rational company should not pay dividends at a rate higher than its income, so we fix a maximum dividend payout rate $\cc$ throughout the paper such that
$$\cc\in[0, \mu].$$
Given an initial (historical maximum) dividend payout rate $c\in[0,\cc]$, we call a dividend payout strategy $\{\DD_t\}_{t\geq0}$ an admissible
{\it dividend ratcheting strategy}
if it is an $\{\cF_t\}_{t\geq0}$ adapted, non-decreasing process such that $c\leq \DD_t\leq \cc$ for all $t\geq 0$.
The set of these ratcheting strategies is denoted by $\Pi_{[c,\cc]}$.
Note the value of a ratcheting strategy chosen at the current time will affect all the future choices of it since the future rates cannot be lower than the current one. The higher the current choice, the less choices the future, so ratcheting strategies are self-path-dependent. This is a critical difference between our problem and those standard stochastic control problems in Yong and Zhou \cite{YZ99} where controls do not relay on its historical choices. 

The company's objective is to find an admissible dividend ratcheting strategy $\{\DD_t\}_{t\geq0}$ to maximize the total discounted dividend payouts until the ruin time.
Mathematically, we need to solve
\begin{align}\label{value}
V(x,c)=
\sup\limits_{\{\DD_t\}_{t\geq0}\in \Pi_{[c,\cc]}}\E \Bigg[\int_0^\tau e^{-r t} \DD_t \dt \;\bigg|\; X_0=x\Bigg],~ (x,c)\in\Q,
\end{align}
where the surplus process $X_{t}$ follows the SDE \eqref{X_eq} with an initial value $X_{0}=x\in \R^{+}$, $r>0$ is a constant discount rate,
$\tau$ is the ruin time defined by
$$\tau:=\inf\big\{t\geq 0 \;\big|\; X_t\leq 0\big\}$$ and 
$$\Q:=\R^+\times [0,\cc],~ \R^{+}:=[0,\infty).$$
Note that the ruin time $\tau$ does depend on the dividend ratcheting strategy $\{\DD_t\}_{t\geq0}$. We will not emphasis this point in the future. {Because the surplus process $X$ in \eqref{X_eq} is continuous, we always have $X_{\tau}=0$. If the surplus process were modeled with downward jumps, one may not always have $X_{\tau}=0$. The mathematical treatement would be more involved than our model. }

Since $0\leq \tau\leq \infty$ and $0\leq \DD_t\leq \cc$ for any admissible strategy $\{\DD_t\}_{t\geq0}$, it yields 
\begin{align}\label{valueestimate}
0\leq V(x,c)\leq\int_0^{\infty} e^{-r t} \cc \dt=\frac{\cc}{r}.
\end{align}
It is also easy to observe that $V(x,c)$ is non-decreasing in $x$ and non-increasing in $c$.

We emphasis again, the problem \eqref{value} is an optimal stochastic control problem with self-path-dependent control constraint, that is, the value of $\DD_t$ will affect the choice of $\DD_s$ for any time $s>t$. Unlike those standard stochastic control problems studied in \cite{YZ99},
there seems no uniform way to deal with such type of stochastic control problems.
This paper proposes a PDE method to tackle the problem \eqref{value}. To this end, we need first to give its HJB equation. Since the boundary value is required for the HJB equation, let us start with the boundary case.

\subsection{The Boundary Case: $V(x,\cc)$}
In the boundary case, i.e., the initial dividend payout rate $c$ is equal to the maximum rate $\cc$, so there is only one admissible (and thus optimal) strategy in $\Pi_{[\cc,\cc]}$, namely $\DD_t\equiv\cc$. It yields 
\begin{align*} 
V(x,\cc)=\E \Bigg[\int_0^\tau e^{-r t} \cc\;\dt \;\Bigg|\; X_0=x \Bigg]
=\frac{\cc}{r} \Big(1-\E\big[e^{-r \tau} \big]\Big),~ x\in\R^{+},
\end{align*}
where the ruin time becomes
$$\tau=\inf\big\{t\geq 0 \;\big|\; x+(\mu-\cc)t+\sigma W_t\leq 0\big\}.$$

\begin{lemma} It holds that
$V(x,\cc)=g(x),~ x\in\R^{+},$ where
\begin{align}\label{def:g}
g(x):=\frac{\cc}{r}(1-e^{-\ga x})\geq 0,~ x\in\R^{+},
\end{align}
and
$\ga:=\frac{(\mu-\cc)+\sqrt{(\mu-\cc)^2+2 \si^2 r}}{\si^2}>0$
is the positive root of
$$-\frac{1}{2}\si^2 \ga^2+(\mu-\cc) \ga+r=0.$$
\end{lemma}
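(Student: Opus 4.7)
The only task is to compute the Laplace transform $\mathbb{E}[e^{-r\tau}]$ of the ruin time, since we already have the reduction $V(x,\bar c)=\frac{\bar c}{r}\bigl(1-\mathbb{E}[e^{-r\tau}]\bigr)$ from the uniqueness of the admissible strategy in $\Pi_{[\bar c,\bar c]}$. My plan is a textbook martingale/optional-stopping argument tuned to the exponential ansatz suggested by the stated $g$.

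First, I would verify that $\gamma$ is well defined: the characteristic polynomial $p(\lambda)=-\tfrac12\sigma^2\lambda^2+(\mu-\bar c)\lambda+r$ has discriminant $(\mu-\bar c)^2+2\sigma^2 r>0$ and leading coefficient of opposite sign to its constant term, so it has one positive and one negative real root, and the positive one is exactly the expression given for $\gamma$. Then I would observe that $h(x):=e^{-\gamma x}$ satisfies the linear ODE
\begin{equation*}
\tfrac12\sigma^2 h''(x)+(\mu-\bar c)h'(x)-r\,h(x)=0,\qquad x\in\mathbb{R},
\end{equation*}
which is just a restatement of $p(\gamma)=0$.

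Next, apply Itô's formula to $M_t:=e^{-r t}\,h(X_t)$ along the dynamics $\d X_t=(\mu-\bar c)\dt+\sigma\d W_t$. The drift term vanishes by the ODE above, so $M_t$ is a local martingale with $\d M_t=-\sigma\gamma\,e^{-rt}h(X_t)\,\d W_t$. Stopping at $\tau\wedge T$ for $T>0$ gives a bounded process on $\{t\leq\tau\}$ (indeed $X_t\geq 0$ there, so $0<h(X_t)\leq 1$), hence a true martingale, and
\begin{equation*}
e^{-\gamma x}=\mathbb{E}\bigl[e^{-r(\tau\wedge T)}h(X_{\tau\wedge T})\bigr].
\end{equation*}
Letting $T\to\infty$ and using dominated convergence (the integrand is bounded by $1$), together with $h(X_\tau)=h(0)=1$ on $\{\tau<\infty\}$ and $e^{-r(\tau\wedge T)}h(X_{\tau\wedge T})\to 0$ on $\{\tau=\infty\}$ (since $e^{-rT}\to 0$), yields
\begin{equation*}
\mathbb{E}\bigl[e^{-r\tau}\mathbf{1}_{\{\tau<\infty\}}\bigr]=e^{-\gamma x}.
\end{equation*}
With the standard convention $e^{-r\tau}=0$ on $\{\tau=\infty\}$, substituting into $V(x,\bar c)=\tfrac{\bar c}{r}\bigl(1-\mathbb{E}[e^{-r\tau}]\bigr)$ gives $V(x,\bar c)=g(x)$.

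The only mildly delicate point is the passage to the limit $T\to\infty$; however, because $h$ is bounded on $\mathbb{R}^+$ and the discount factor $e^{-r\cdot}$ provides a uniform bound, dominated convergence applies immediately and there is no genuine obstacle. No positivity assumption on $\tau<\infty$ is needed thanks to the discount factor. Hence the proof reduces to the two short computations above: solving the characteristic quadratic, and a one-line martingale/optional-stopping argument.
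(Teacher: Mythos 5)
Your proof is correct and takes essentially the same route as the paper: both hinge on optional stopping of the same exponential martingale, since your $e^{-rt}e^{-\ga X_t}$ coincides with the paper's $e^{-\frac{1}{2}\ga^2\si^2 t-\ga\si W_t}$ up to the constant factor $e^{\ga x}$, and both identities reduce to $\ga$ being the positive root of $-\frac{1}{2}\si^2\ga^2+(\mu-\cc)\ga+r=0$. Your write-up is merely more explicit about the truncation at $T$, the boundedness justifying true-martingale optional stopping, and the event $\{\tau=\infty\}$, which the paper's shorter argument handles implicitly.
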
 
\begin{proof} 
Let $M_{t}=e^{-\frac{1}{2}\ga^{2}\sigma^{2}t-\ga\sigma W_{t}}$. Then it is a martingale. If $\tau<\infty$, we then have $x+(\mu-\cc)\tau+\sigma W_{\tau}=0,$ so that, for any constant $T>0$, 
$$M_{\tau}\mathbf{1}_{\tau\leq T}=e^{-\frac{1}{2}\ga^{2}\sigma^{2}\tau-\ga\sigma W_{\tau}}\mathbf{1}_{\tau\leq T}=e^{-\frac{1}{2}\ga^{2}\sigma^{2}\tau+\ga(\mu-\cc)\tau+\ga x}=e^{-r\tau+\ga x}\mathbf{1}_{\tau\leq T}.$$ 
Hence, by Doob's optional stopping theorem,
\begin{align}\label{doob}
\E\big[e^{-r\tau+\ga x}\mathbf{1}_{\tau\leq T}\big]+\E\big[M_{T}\mathbf{1}_{\tau> T}\big]=\E\big[M_{\tau}\mathbf{1}_{\tau\leq T}+M_{T}\mathbf{1}_{\tau> T}\big]=\E\big[M_{\tau\wedge T}\big]=1.
\end{align} 
When $\tau>T$, we have $x+(\mu-\cc)T+\sigma W_{T}>0$, so
$$0\leq M_{T}\mathbf{1}_{\tau> T}\leq e^{-\frac{1}{2}\ga^{2}\sigma^{2}T+\ga(\mu-\cc)T+\ga x}\mathbf{1}_{\tau> T}=e^{-rT+\ga x}\mathbf{1}_{\tau> T}\leq e^{-rT+\ga x}.$$
Sending $T\to\infty$ in \eqref{doob} and using the dominated convergence theorem, we get 
$\E\big[e^{-r\tau+\ga x}\mathbf{1}_{\tau< \infty}\big]=1$, so $\E\big[e^{-r\tau}\big]=e^{-\ga x}$. This leads to $V(x,\cc)=g(x)$.
\end{proof}

We remark that $g(\cdot)$ is the unique power growth (indeed bounded) solution to the following ODE with Dirichlet boundary condition:
\begin{align}\label{g_eq}
\begin{cases}
-\LL_{\cc}g-\cc=0,~ x\in\R^{+}, \medskip\\
g(0)=0,
\end{cases}
\end{align}
where the operator $\LL_c$ is defined as
\begin{align}\label{defop}
\LL_c v:=\frac{1}{2}\si^2v_{xx}+(\mu-c)v_x-r v.
\end{align}
We use the function $g$ and operator $\LL_c$ defined above throughout this paper.

Figure \ref{fig:b1} illustrates the function $g$ when $\mu=0.4$, $r=0.05$, $\sigma=0.4$, $\cc=0.3$:
\begin{figure}[H]
\centering
\includegraphics[width=0.5\linewidth]{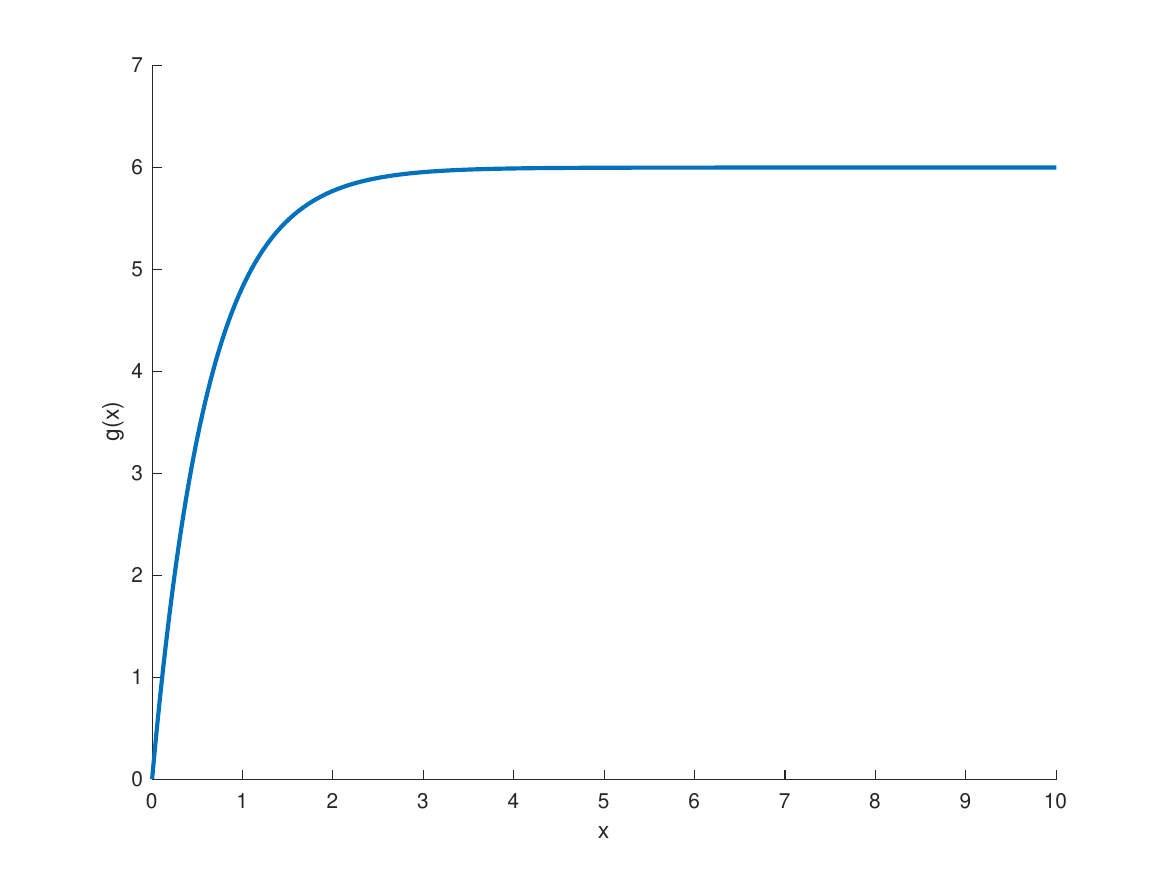}
\caption{The function $g(x)$ when $\mu=0.4$, $r=0.05$, $\sigma=0.4$, $\cc=0.3$}
\label{fig:b1}
\end{figure}


Before giving the HJB equation, we first study a simple case, for which we can give an explicit optimal dividend ratcheting strategy and optimal value.

\subsection{The Simple Case: $2\mu\cc\leq \si^2r.$}

\begin{theorem}\label{simplecase}
Suppose $2\mu\cc\leq \si^2r$. Then $V(x,c)=g(x)$ for all $(x,c)\in\Q$ and $\DD^{*}_t\equiv\cc$
is an optimal dividend ratcheting strategy to the problem \eqref{value}.
\end{theorem}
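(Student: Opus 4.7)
The plan is a verification argument based on the explicit candidate $g$. The lower bound $V(x,c)\geq g(x)$ is free: since $V$ is non-increasing in $c$ and $V(x,\cc)=g(x)$, we immediately obtain $V(x,c)\geq V(x,\cc)=g(x)$, and this lower bound is attained by the admissible strategy $\DD^*_t\equiv \cc\in\Pi_{[c,\cc]}$. So the substantive task is the matching upper bound $V(x,c)\leq g(x)$ for every admissible ratcheting strategy.

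For the upper bound, I would fix an arbitrary $\{\DD_t\}\in\Pi_{[c,\cc]}$ and apply It\^o's formula to $e^{-rt}g(X_t)$ up to a stopping time $\tau\wedge T\wedge \tau_n$ (with $\tau_n$ localising the stochastic integral). Using the ODE $\LL_{\cc}g+\cc=0$ from \eqref{g_eq} to rewrite the drift, the $dt$--integrand becomes
\begin{equation*}
e^{-rt}\Big[-\cc+(\cc-\DD_t)g'(X_t)\Big].
\end{equation*}
The boundary value $g(0)=0$ handles the contribution at $\tau$, $g$ and $g'$ are bounded on $\R^+$ so the stochastic integral is a true martingale and dominated convergence lets me pass $T,n\to\infty$. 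Rearranging yields the identity
\begin{equation*}
g(x)-\E\Bigg[\int_0^\tau e^{-rt}\DD_t\dt\Bigg] = \E\Bigg[\int_0^\tau e^{-rt}(\cc-\DD_t)\big(1-g'(X_t)\big)\dt\Bigg].
\end{equation*}
Since $\DD_t\leq\cc$, the factor $\cc-\DD_t$ is non-negative, so it remains to verify $g'(x)\leq 1$ on $\R^+$.

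From the explicit form \eqref{def:g}, $g'(x)=\tfrac{\cc\gamma}{r}e^{-\gamma x}$ is decreasing in $x$ and attains its maximum $\tfrac{\cc\gamma}{r}$ at $x=0$. Thus $g'\leq 1$ on $\R^+$ is equivalent to $\cc\gamma\leq r$, i.e.
\begin{equation*}
\cc\sqrt{(\mu-\cc)^2+2\sigma^2 r}\leq r\sigma^2-\cc(\mu-\cc).
\end{equation*}
The right-hand side is non-negative under $2\mu\cc\leq\sigma^2 r$ (as $\cc\mu\leq r\sigma^2/2\leq r\sigma^2$), so squaring is lossless and the inequality collapses, after cancellation, to $r\sigma^2(2\mu\cc-\sigma^2 r)\leq 0$, which is precisely the hypothesis. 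Hence the right-hand side of the identity is non-negative, giving $V(x,c)\leq g(x)$ as required.

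The main obstacles are essentially bookkeeping rather than conceptual: (i) justifying the vanishing of $\E[e^{-r(\tau\wedge T\wedge\tau_n)}g(X_{\tau\wedge T\wedge\tau_n})]$ in the limit (uses boundedness of $g$ and $g(0)=0$), and (ii) the algebraic equivalence between $2\mu\cc\leq\sigma^2 r$ and $g'\leq 1$, which is exactly the place where the simple-case hypothesis is used. Everything else is a direct verification via the ODE characterisation of $g$.
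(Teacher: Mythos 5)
Your proposal is correct and follows essentially the same route as the paper: a verification argument applying It\^o's formula to $e^{-rt}g(X_t)$, with the whole matter reducing to $g'\leq 1$ on $\R^+$, i.e.\ $\cc\gamma\leq r$, which you check by direct algebra where the paper uses the sign of the quadratic $-\tfrac12\si^2 y^2+(\mu-\cc)y+r$ (Lemma~\ref{simpleineq}). The remaining differences — your exact identity with the $(\cc-\DD_t)(1-g'(X_t))$ term versus the paper's supersolution inequality $-\LL_c g-c=(c-\cc)(g'-1)\geq 0$, and your lower bound via monotonicity of $V$ in $c$ versus the paper's evaluation of the constant strategy — are only cosmetic rearrangements of the same computation.
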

To prove this theorem, we need the following technical result.
\begin{lemma}\label{simpleineq}
The inequality $\cc \ga \leq r$ holds if and only if $2\mu\cc\leq \si^2r$.
\end{lemma}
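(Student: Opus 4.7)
The plan is to avoid squaring the irrational expression for $\gamma$ directly, and instead exploit the quadratic equation $-\tfrac12\sigma^2\gamma^2+(\mu-\cc)\gamma+r=0$ that $\gamma$ satisfies. This replaces $r$ by a polynomial in $\gamma$ and makes the algebra almost automatic.

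First I would substitute $r=\tfrac12\sigma^2\gamma^2-(\mu-\cc)\gamma$ into the inequality $\cc\gamma\le r$, obtaining
\[
\cc\gamma+(\mu-\cc)\gamma\le \tfrac12\sigma^2\gamma^2,\qquad\text{i.e.}\qquad \mu\gamma\le \tfrac12\sigma^2\gamma^2.
\]
Since $\gamma>0$, this is equivalent to $\gamma\ge 2\mu/\sigma^2$. So the task reduces to showing that $\gamma\ge 2\mu/\sigma^2$ if and only if $2\mu\cc\le \sigma^2 r$.

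Next I plug in the explicit formula for $\gamma$:
\[
\gamma\ge\frac{2\mu}{\sigma^2}\iff (\mu-\cc)+\sqrt{(\mu-\cc)^2+2\sigma^2 r}\ge 2\mu\iff \sqrt{(\mu-\cc)^2+2\sigma^2 r}\ge \mu+\cc.
\]
Since $\mu>0$ and $\cc\ge 0$, both sides of the last inequality are non-negative, so squaring is an equivalence. The identity $(\mu+\cc)^2-(\mu-\cc)^2=4\mu\cc$ then reduces the condition to $2\sigma^2 r\ge 4\mu\cc$, which is the desired $2\mu\cc\le \sigma^2 r$.

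There is no serious obstacle: the only minor care needed is checking that $\mu+\cc\ge 0$ so that squaring is reversible, and that $\gamma>0$ so that the cancellation step is legitimate, both of which are immediate from the standing assumptions $\mu>0$ and $\cc\in[0,\mu]$. The whole argument is a short chain of equivalences once the quadratic relation for $\gamma$ is exploited.
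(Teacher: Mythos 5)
Your proof is correct, but it takes a different route from the paper's. The paper never touches the explicit radical formula for $\ga$: it observes that the concave quadratic $f(y)=-\tfrac12\si^2y^2+(\mu-\cc)y+r$ is positive on $(0,\ga)$ and nonpositive on $[\ga,\infty)$, so $\ga\le r/\cc$ is equivalent to $f(r/\cc)\le 0$, which after clearing denominators is exactly $2\mu\cc\le\si^2 r$. You instead use the quadratic relation in the opposite direction, substituting $r=\tfrac12\si^2\ga^2-(\mu-\cc)\ga$ to reduce $\cc\ga\le r$ to $\ga\ge 2\mu/\si^2$, and then invoke the explicit expression for $\ga$ and square (legitimately, since $\mu+\cc\ge 0$). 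Both arguments are short chains of equivalences; the paper's is slightly slicker in that it avoids the radical and the squaring step altogether, though as written it implicitly assumes $\cc>0$ in order to form $r/\cc$ (the case $\cc=0$ being trivial on both sides), whereas your substitution handles $\cc=0$ uniformly at the cost of needing the closed-form root and a nonnegativity check before squaring.
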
 
\begin{proof}
Note that the function $f(y)=-\frac{1}{2}\si^2 y^2 +(\mu-\cc) y+r$ satisfies $f(y)> 0$ if $y\in (0,\ga)$ and $f(y)\leq 0$ if $y\geq \ga$, so $ \ga \leq r/\cc$ is equivalent to $f(r/\cc)\leq 0$, i.e., $2\mu\cc\leq \si^2r$.
\end{proof}

\noindent{\sc Proof of Theorem \ref{simplecase}:}
If $2\mu\cc\leq \si^2r$, then by Lemma \ref{simpleineq}, $$g'(x)=\frac{\cc \ga}{r}e^{-\ga x}\leq 1,~ x\in\R^{+}.$$
This together with \eqref{g_eq} yields
\begin{align*}
-\LL_c g-c 
=-\LL_c g-c-(-\LL_{\cc} g-\cc)=(c-\cc)(g'-1)\geq 0,~ x\in\R^{+}.
\end{align*}
Therefore, for any constant $T>0$ and admissible strategy $\{\DD_t\}_{t\geq 0}\in \Pi_{[c,\cc]}$, applying It\^o's formula to $e^{-rt}g(X_{t})$ on $[0,\tau\wedge T]$ gives
\begin{align*}
g(x)
&=\E [e^{-r(\tau\wedge T) }g(X_{\tau\wedge T})]
-\E \Bigg[\int_0 ^{\tau\wedge T }e^{-r t } \LL_{\DD_t}g(X_t) \dt\Bigg]\\
&\quad\;-\E \Bigg[\int_0 ^{\tau\wedge T }e^{-r t }\si g'(X_t) \d W_t\Bigg] \geq \E \Bigg[\int_0 ^{\tau\wedge T }e^{-r t } \ \DD_t \dt\Bigg],
\end{align*}
where we used the fact that $g\geq 0$ and $g'$ is bounded to get the inequality.
Sending $T\to\infty$ in above, we get from the monotone convergence theorem that
\begin{align}\label{gbiggerthan}
g(x) &\geq \E \Bigg[\int_0 ^{\tau }e^{-r t } \ \DD_t \dt\Bigg].
\end{align}
Since $\{\DD_t\}_{t\geq 0}\in \Pi_{[c,\cc]}$ is arbitrary selected, the above shows that $g\geq v$ on $\Q$.

On the other hand,
under the special dividend ratcheting strategy $\DD^{*}_t\equiv\cc$, which is clearly admissible, one can show similarly to the boundary case that 
\begin{align*}
g(x)&= \E \Bigg[\int_0 ^{\tau }e^{-r t } \ \DD^{*}_t \dt\Bigg].
\end{align*}
This together with $g\geq v$ on $\Q$ shows that $\DD^{*}_t\equiv\cc$
is an optimal dividend ratcheting strategy to the problem \eqref{value} and $v=g$ on $\Q$.
\strut\hfill $\Box$\\

From this result we can see that if the maximum dividend payout rate is relatively small (i.e. $\cc\leq \si^2r/(2\mu))$, then it is optimal for the company to pay dividends to the shareholders at the maximum rate all the time, regardless its surplus level. Intuitively speaking, a higher dividend payout rate benefits the utility more than its negative impact on the survival time.
We can also interpret the result from another point of view. If the income process has a relatively high risk/uncertainty (i.e. $2\mu\cc/r\leq \si^2$), then it is optimal to pay dividends at the maximum rate. Intuitively speaking, because of the highly uncertainty of the income process, the company has a higher risk to be bankrupt in the near future, so it is better to pay dividends as soon as possible.

In the rest of the paper, we deal with the more involved case: $2\mu\cc>\si^2r$, which is assumed from now on. Note the condition $2\mu\cc>\si^2r$ by Lemma \ref{simpleineq} is equivalent to that $g'(0)>1$.

\section{The HJB equation and optimal dividend ratcheting strategy in the complicated case $2\mu\cc>\si^2r$}\label{section3}

The HJB equation for the optimization problem \eqref{value} is a variational inequality on $v:\Q\to \R$:
\begin{align}\label{v_pb}
\begin{cases}
\min\big\{\!-\LL_c v- c,~-v_c\big\}=0, & (x,c)\in\Q,\medskip\\
v(0,c)=0, & c\in [0,\cc],\medskip\\
v(x,\cc)=g (x), & x\in \R^+,
\end{cases}
\end{align}
where
the function $g$ is defined in \eqref{def:g}, and the operator $\LL_c$ is defined in \eqref{defop}.
Thanks to the estimate \eqref{valueestimate}, it suffices to study bounded solution(s) to the above HJB equation \eqref{v_pb}. Note that $v=g$ is a solution to \eqref{v_pb} if $2\mu\cc\leq\si^2r$, and not a solution otherwise.

It is proved in \cite{albrecher2020optimal} that the HJB equation \eqref{v_pb} admits a unique viscosity solution, which is the value function $V$ defined in \eqref{value}.
In this paper, we will further prove that \eqref{v_pb} admits a solution in the following stronger sense. Unlike \cite{albrecher2020optimal}, our solution will allow us to construct an optimal dividend ratcheting strategy for the problem \eqref{value}, thus solve the problem completely.
\begin{definition}\label{hjb1}
A function $v:\Q\to\R$ is called a solution to the HJB equation \eqref{v_pb} if
\begin{enumerate}
\item it holds that $v\in\A$, where
\begin{align*}
\A=\Big\{v:\Q\to \R \;\Big|\; & \mbox{$v$, $v_{x}$ and $v_{c}$ are continuous and bounded }\\
&\mbox{in $\Q$; $v_{c}\leq 0$ in $\Q$; and $v(\cdot,c)\in \Wploc(\R^+)$}\\
&\mbox{for each $c\in [0,\cc]$ and each $p>1$}\Big\};
\end{align*}
\item it satisfies the boundary conditions in \eqref{v_pb};
\item it holds that
\begin{align}\label{-Lv>=0}
-\LL_c v(\cdot,c)-c \geq 0 \hbox{\; a.e.\;in $\R^+$, for each $c\in[0,\cc]$};
\end{align}

\item if $v_{c}(x,c)<0$, then
\begin{align}\label{-Lv=0}
-\LL_c v(y,c)-c=0 \hbox{\; for all $y\in[0,x]$}.
\end{align}

\end{enumerate}
\end{definition}
As usual, the notation $\Wploc$ stands for the Sobolev space.
We remark that the last requirement \eqref{-Lv=0} is in the classical sense rather than the strong sense. 

In the variational inequality in \eqref{v_pb}, the obstacle is against $v_c$, but the equation is for $v$, so it is natural to transform the variational inequality \eqref{v_pb} into someone for $u=-v_c$ which together with $v(x,\cc)=g(x)$ implies 
$$v(x,c)=g (x)+\int_c^{\cc} u(x,s)\ds.$$
It thus follows $$\p_c(-\LL_c v - c)=\LL_c (-v_{c})+v_x-1=\LL_c u+g'(x)+\int_c^{\cc} u_x(x,s) \ds-1.$$
Therefore, we conjecture an obstacle problem for $u$ as follows:
\begin{align}\label{u_pb}
\begin{cases}
\displaystyle\min\bigg\{\!-\LL_c u-g'(x)-\int_c^{\cc} u_x(x,s) \ds+1,~u\bigg\}=0, & (x,c)\in\Q,\medskip\\
u(0,c)=0, & c\in [0,\cc].
\end{cases}
\end{align}
This is a single-obstacle problem with a nonlocal operator.
\begin{definition}\label{hjb2}
We call a function $u:\Q\to\R^{+}$ is a solution to the variational inequality \eqref{u_pb} if
\begin{enumerate}
\item it holds that $u\in\B$, where
\begin{align*}
\B=\Big\{u:\Q\to \R^{+} \;\Big|\; & \mbox{both $u$ and $u_{x}$ are continuous and bounded in $\Q$; and} \\
&\mbox{$u(\cdot,c)\in \Wploc(\R^+)$ for each $c\in [0,\cc]$ and each $p>1$}\Big\};
\end{align*}
\item for each $c\in[0,\cc]$, it holds that
\begin{align}\label{-Lu>=0}
-\LL_c u-g'(x)-\int_c^{\cc} u_x(x,s) \ds+1\geq 0 \hbox{\; a.e.\;in $\R^+$};
\end{align}

\item if $u(x,c)>0$ for some $(x,c)\in\Q$, then it holds that
\begin{align}\label{-Lu=0}
-\LL_c u(y,c)-g'(y)-\int_c^{\cc} u_x(y,s) \ds+1=0 \hbox{\; for all $y\in[0,x]$}.
\end{align}
\end{enumerate}
\end{definition}

Our next main result concerns the solvability of \eqref{v_pb} and \eqref{u_pb}. 
\begin{theorem}\label{thm:u}
There exists a pair $(v,u)$ that satisfies the relation 
\begin{align}\label{v_u}
v(x,c)=g (x)+\int_c^{\cc} u(x,s)\ds, ~(x,c)\in\Q.
\end{align}
Also, $v$ and $u$ are, respectively, solutions to \eqref{v_pb} and \eqref{u_pb} in the sense of \defref{hjb1} and \defref{hjb2}.
Furthermore,
for each constant $\alpha\in(0,1)$, there is a constant $K>0$ such that, for all $(x,c)\in\Q$,
the following estimates hold: 
\begin{gather}
\label{v} 0\leq v\leq \frac{\cc}{r},\\
\label{vx} 0\leq v_x\leq K,\\
\label{vc} 0\leq -v_c\leq K,\\ 
\label{vxx} v_x(y,c)\leq \max\{v_x(x,c), 1\},~\forall\; 0\leq x\leq y,\\
\label{ub}|u(\cdot,c)|_{C^{1+\al}(\R^+)}\leq K,\\
\label{ucb} |u_c(\cdot,c)|\leq K.
\end{gather}
\end{theorem}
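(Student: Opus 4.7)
The plan is to construct the pair $(v,u)$ by discretizing the control variable $c$, following the spirit of the scheme used in \cite{albrecher2022optimal} but working at the PDE level rather than with viscosity solutions. Fix a uniform mesh $0=c_0<c_1<\cdots<c_N=\cc$ of step $h=\cc/N$. Starting from the terminal datum $v^h_N(x)=g(x)$, I would recursively solve, for $k=N-1,N-2,\ldots,0$, the single-obstacle ODE on $\R^+$
\[
\min\bigl\{-\LL_{c_k}v^h_k(x)-c_k,\ v^h_k(x)-v^h_{k+1}(x)\bigr\}=0,\qquad v^h_k(0)=0,
\]
whose unique bounded $\Wploc$-solution exists by standard penalization together with comparison. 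Comparing with the explicit sub/super-solutions $0$ and $\cc/r$ gives $v^h_k\in[0,\cc/r]$, while the obstacle forces $v^h_k\geq v^h_{k+1}$, i.e.\ the discrete version of $v_c\leq 0$; these yield \eqref{v} and \eqref{vc} in the limit.

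Next I would establish uniform-in-$h$ regularity. The bound $0\leq v^h_{k,x}\leq K$ of \eqref{vx} follows by differentiating the ODE on the continuation region $\{v^h_k>v^h_{k+1}\}$ and applying the maximum principle to $v^h_{k,x}$, with boundary value at $x=0$ controlled through the ODE itself and matching with $v^h_{k+1,x}$ across the coincidence boundary, using $g'(x)\leq g'(0)$. For the subtle structural statement \eqref{vxx}, I would show that $\{v^h_{k,x}>1\}$ is an interval of the form $[0,x^\ast_k]$: on this region the obstacle is strict, so $-\LL_{c_k}v^h_k=c_k$, and then $w:=v^h_{k,x}-1$ satisfies a linear equation that, combined with $r>0$, prevents $w$ from reverting from non-positive to positive values as $x$ increases.

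To obtain $u$, I would work with the discrete derivative $u^h_k(x):=(v^h_{k-1}(x)-v^h_k(x))/h\geq 0$. Subtracting the PDEs at successive levels, $u^h_k$ satisfies a single-obstacle problem that is a Riemann-sum approximation of \eqref{u_pb}, of the form
\[
\min\Bigl\{-\LL_{c_{k-1}}u^h_k-g'(x)-h\sum_{j=k}^{N-1}u^h_{j+1,x}+1-\varepsilon^h_k,\ u^h_k\Bigr\}=0,
\]
with $\varepsilon^h_k\to 0$ as $h\downarrow 0$. A uniform $L^\infty$ bound on $u^h_k$ is precisely \eqref{vc}; combined with the bound on $\int u^h_{j,x}\ds$ coming from \eqref{vx}, interior $\Wploc$ and Schauder theory for obstacle problems deliver the uniform $C^{1+\al}$ estimate \eqref{ub}. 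The estimate \eqref{ucb} on $u_c$ requires a genuinely second-order argument: differentiate the discrete $u$-equation in $c$ on the non-coincidence set $\{u^h_k>0\}$, treat the change in drift coefficient as a controlled perturbation, and use a barrier at the free boundary $\{u^h_k=0\}$ to propagate the bound while avoiding loss of derivative.

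The main obstacle I foresee is establishing these uniform $C^{1+\al}$ and $c$-Lipschitz estimates on $u^h$, since the nonlocal term $\int_c^{\cc}u_x(x,s)\ds$ couples all discrete levels and naive bounds deteriorate as $h\downarrow 0$. Once the estimates are in hand, Arzel\`a--Ascoli produces subsequential limits $v^h\to v$ and $u^h\to u$ in the required topologies, and passing to the limit in the discrete variational inequalities yields a pair satisfying \eqref{vu_pb} together with the representation \eqref{v_u}. The classical equality \eqref{-Lv=0} on $\{v_c<0\}$ and its analogue \eqref{-Lu=0} on $\{u>0\}$ are inherited from the corresponding discrete identities using the uniform $C^{1+\al}$ regularity, and the boundary conditions at $x=0$ and $c=\cc$ pass to the limit directly, completing the verification that $(v,u)$ solves \eqref{v_pb} and \eqref{u_pb} in the sense of \defref{hjb1} and \defref{hjb2}.
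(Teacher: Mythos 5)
Your overall strategy is the same as the paper's: discretize $c$, solve the resulting chain of single-obstacle ODEs recursively from the boundary datum $g$, derive uniform estimates, and pass to the limit by Arzel\`a--Ascoli. However, there are genuine gaps at the steps that carry the real weight. The most serious one is your claim that ``subtracting the PDEs at successive levels'' makes $u^h_k=(v^h_{k-1}-v^h_k)/h$ solve a single-obstacle problem. This is not automatic: on the non-coincidence set $\{v^h_{k-1}>v^h_k\}$ you only know the equation $-\LL_{c_{k-1}}v^h_{k-1}=c_{k-1}$ at that level, and to cancel the zeroth-order terms you also need $-\LL_{c_k}v^h_k=c_k$ there, i.e.\ you need the coincidence regions to be nested (the free-boundary points satisfy $x_k\leq x_{k-1}$). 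The paper devotes Lemma \ref{lem:xi} to exactly this, via an induction involving an auxiliary solution $\wu$ of the candidate $u$-problem, the strict inequality $v_{j-1}'(x_j)<1$ of Lemma \ref{lem:vj'<1}, a comparison function $\ou$ solving \eqref{ou_pb}, and Hopf-lemma arguments; without this ordering your discrete $u$-equation simply is not established, and everything downstream (the $W^{2,p}$/$C^{1+\al}$ bounds on $u^h$, the limit variational inequality) rests on it.

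Two further steps are asserted rather than proved. First, the uniform gradient bound \eqref{vx} reduces to bounding $v^h_{k,x}(0)$ uniformly in $k$ and $h$; saying the boundary value is ``controlled through the ODE itself'' does not work, since $v_{xx}(0)$ is unknown. The paper's mechanism is a comparison with the explicit solution $\ov$ of the classical singular dividend problem $\min\{-\LL_0\ov,\ \ov_x-1\}=0$: one checks $\ov$ is a supersolution of each discrete problem, so $v^h_k\leq\ov$ with equality at $x=0$, hence $v^h_{k,x}(0)\leq\ov'(0)$, and then \eqref{vxx} propagates the bound. Second, for the $c$-Lipschitz estimate \eqref{ucb} you explicitly flag the difficulty but offer only a vague plan (differentiating in $c$ plus a barrier at the free boundary). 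The paper's resolution is much simpler and avoids any differentiation: with $K=\frac{2}{r}\max_i\sup|u_i'|$ (already controlled by the $C^{1+\al}$ bound), the constants-shifted functions $u_{i-1}\pm K\Dc$ are super/sub-solutions of the obstacle problem for $u_i$, giving $|u_i-u_{i-1}|\leq K\Dc$ directly. So while the architecture of your proposal matches the paper, the three key estimates that make the scheme converge are either missing their proofs or would need the specific constructions above to go through.
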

This result will be proved in Section \ref{sec:solvability}.

We can also see from \eqref{vxx} that the map $x\to v_x(x,c)$ is concave in the region $\{v_{x}>1\}$.
Indeed, the numerical result in Figure \ref{fig:b1v} indicates that it is concave for all $x\in\R^{+}$. However, we cannot prove this.
\begin{figure}[H]
\centering
\includegraphics[width=0.8\linewidth]{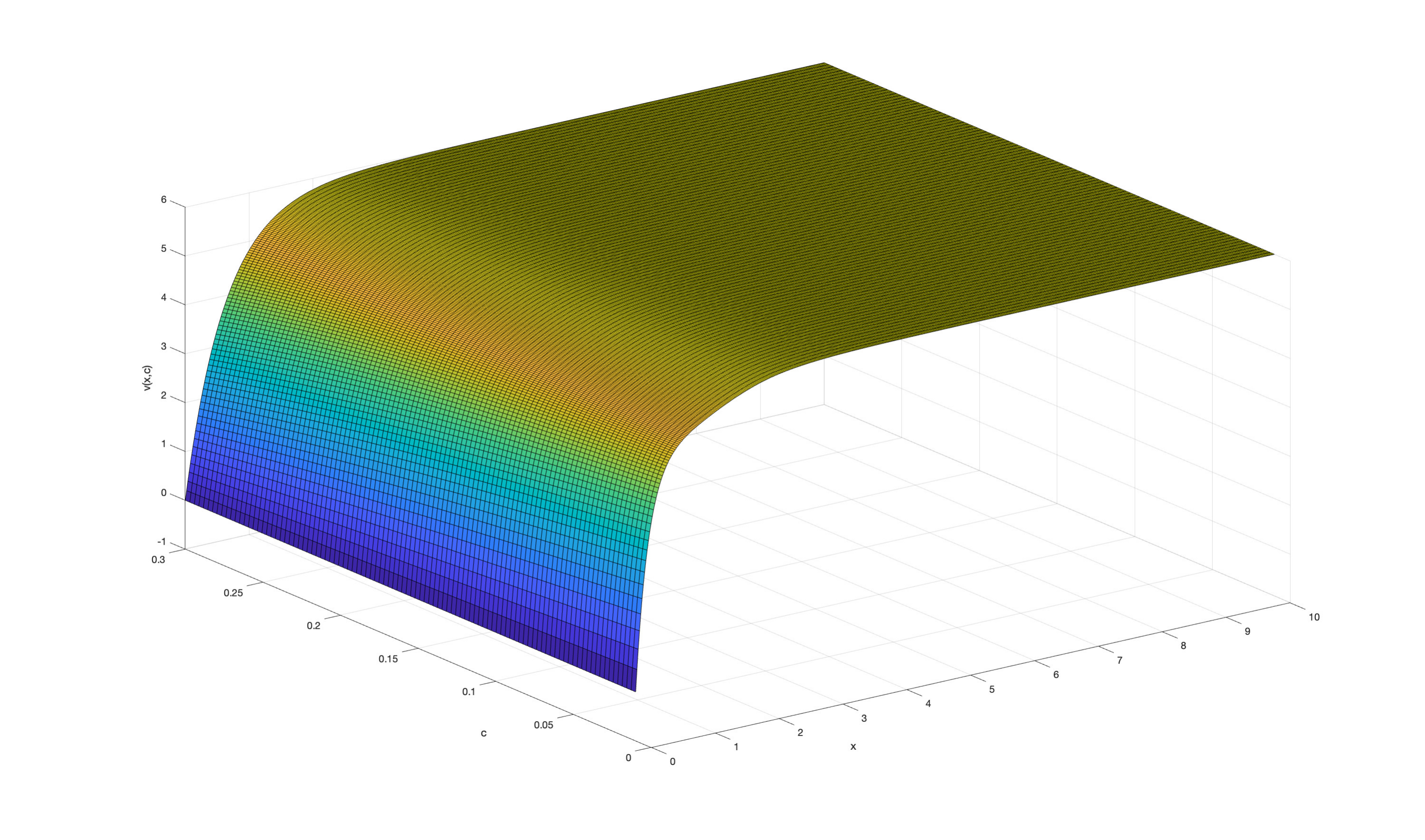}
\caption{The value function $v$ where $\mu=0.4$, $r=0.05$, $\sigma=0.4$, $\cc=0.3 $.}
\label{fig:b1v}
\end{figure}
The following result fully characterizes the free boundary, which is crucial to determine the optimal dividend ratcheting strategy.
\begin{theorem}
\label{thm:freeboundary}
Let $v$ be a solution to \eqref{v_pb} given in \thmref{thm:u}. Define the dividend ratcheting free boundary:
$$\sw(c)=\inf\big\{x>0 \;\big|\; v_c(x,c)=0\big\},~ c\in[0,\cc].$$
Then $\sw(c)>0$ for all $c\in[0,\cc].$ Also,
\begin{enumerate}
\item it holds for all $(x,c)\in\Q$ that $v_{c}(x,c)=0$ if $x\geq \sw(c)$ and $v_{c}(x,c)<0$ if $x< \sw(c)$; 
\item it holds that $\sw(\cdot)\in C^\infty[0,\cc]$; 
\item there exit two constants $0<K_{1}<K_{2}$ such that $K_{1}\leq \sw'(c)\leq K_{2}$ for all $c\in[0,\cc]$. As a consequence, both $\sw(\cdot)$ and its inverse $\sw^{-1}(\cdot)$ are strictly increasing and Lipschitz continuous.
\end{enumerate}
\end{theorem}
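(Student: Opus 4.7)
The plan is to work with the transformed obstacle problem \eqref{u_pb} for $u=-v_c$ and to identify $\sw(c)$ as the interface between the non-coincidence region $\{u>0\}$ and the coincidence (ratcheting) region $\{u=0\}$. Two identities drive the analysis. Differentiating the HJB equation for $v$ in $c$ yields, wherever $u>0$,
\[ \LL_c u = 1 - v_x, \qquad v_x(x,c)=g'(x)+\int_c^{\cc}u_x(x,s)\ds. \]
At the free boundary one expects the smooth-fit relation $v_x(\sw(c),c)=1$, obtained by matching $u_{xx}(\sw(c)^-)=\tfrac{2}{\sigma^2}(1-v_x(\sw(c),c))$ (from the PDE with $u(\sw(c))=u_x(\sw(c))=0$) against $u_{xx}(\sw(c)^+)=0$ (from $u\equiv 0$ to the right). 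Throughout I would use the estimates of \thmref{thm:u}, especially \eqref{vxx}, which implies that $\{x:v_x(x,c)>1\}=[0,z(c))$ for some $z(c)\ge 0$.

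For $\sw(c)>0$ I argue by contradiction: if $u(\cdot,c)\equiv 0$ on some $[0,\delta]$, then $u=u_x=u_{xx}=0$ on $(0,\delta)$ and the obstacle inequality \eqref{-Lu>=0} reduces to $g'(x)+\int_c^{\cc}u_x(x,s)\ds\le 1$ on $(0,\delta)$. Sending $x\to 0^+$ gives $g'(0)+\int_c^{\cc}u_x(0,s)\ds\le 1$, which contradicts $g'(0)=\cc\ga/r>1$ (by \lemref{simpleineq} and the standing hypothesis $2\mu\cc>\sigma^2r$) combined with $u_x(0,s)\ge 0$ (a consequence of $u\ge 0$ and $u(0,\cdot)\equiv 0$). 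For part 1, I would then show $\sw(c)=z(c)$: the same obstacle inequality forbids $u\equiv 0$ on any subinterval of $[0,z(c))$, while an isolated zero $y_0\in[0,z(c))$ of $u$ would force, via the PDE at $y_0$ with $u(y_0)=u_x(y_0)=0$ and $u_{xx}(y_0)\ge 0$, the inequality $v_x(y_0)\le 1$, contradicting $y_0\in\{v_x>1\}$; hence $\sw(c)\ge z(c)$. Conversely on $[z(c),\infty)$ we have $v_x\le 1$, so $\LL_c u\ge 0$, and a comparison/maximum-principle argument using the vanishing Cauchy data at $\sw(c)$ yields $u\equiv 0$ on $[z(c),\infty)$, giving $\sw(c)\le z(c)$.

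With $\sw(c)$ characterized as the unique $x$ solving $\Phi(c,x):=v_x(x,c)-1=0$, I apply the implicit function theorem. The required non-degeneracy $\Phi_x=v_{xx}(\sw(c),c)\ne 0$ should follow from the strict crossing of $v_x$ through $1$ at the free boundary (since $v_x>1$ on $[0,\sw(c))$) combined with smooth fit, giving $v_{xx}(\sw(c),c)<0$ strictly. The IFT then yields $\sw\in C^1$ with
\[ \sw'(c) = -\frac{v_{xc}(\sw(c),c)}{v_{xx}(\sw(c),c)}. \]
A standard bootstrap (once $\sw\in C^1$, the non-coincidence domain has a $C^1$ boundary and Schauder theory upgrades $u$ to $C^\infty$ up to $x=\sw(c)$; reapplying the IFT then delivers $\sw\in C^\infty$) handles part 2. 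For part 3, combining the uniform bounds on $|v_{xc}|=|u_x|$ and $|v_{xx}|$ from \thmref{thm:u} with uniform-in-$c$ \emph{lower} bounds on $|v_{xx}(\sw(c),c)|$ and $|v_{xc}(\sw(c),c)|$ yields $K_1\le\sw'\le K_2$, and strict monotonicity follows from the definite signs.

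The hardest step is the quantitative uniform-in-$c$ non-degeneracy: establishing $|v_{xx}(\sw(c),c)|\ge \epsilon_0$ and $|v_{xc}(\sw(c),c)|\ge \epsilon_0$ for some $\epsilon_0>0$ uniform in $c\in[0,\cc]$. This underpins both the IFT-based $C^1$ smoothness and the lower Lipschitz bound $\sw'\ge K_1$. I expect to prove it by carefully tracking uniform estimates through the penalized ODE approximation scheme used in the construction of $(v,u)$ in \thmref{thm:u}, where monotone limits preserve the sharp bounds at the free boundary.
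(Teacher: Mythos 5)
The central step of your plan---the $C^2$ smooth-fit identity $v_x(\sw(c),c)=1$, obtained by matching $u_{xx}(\sw(c)^-)=\tfrac{2}{\si^2}\bigl(1-v_x(\sw(c),c)\bigr)$ with $u_{xx}(\sw(c)^+)=0$---is false, and the paper proves exactly the opposite. The solution $u$ is only $C^{1+\al}/\Wp$ across the free boundary and $u_{xx}$ genuinely jumps there: \lemref{lem:vx<1} shows $v_x(\sw(c),c)\leq 1-\delta$ uniformly in $c$ (assuming $v_x(\sw(c),c)=1$ forces $v_{xx}\leq 0$ on $[0,\sw(c)]$, hence $-\LL_c u=v_x-1\geq 0$ there, and the Hopf lemma then gives $u_x(\sw(c),c)<0$, contradicting the $C^1$ fit $u_x(\sw(c),c)=0$), and correspondingly \lemref{lem:uxx>0} gives $\tfrac{\si^2}{2}u_{xx}(\sw(c)-,c)=1-v_x(\sw(c),c)\geq\delta>0$ while $u_{xx}(\sw(c)+,c)=0$. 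Consequently your identification $\sw(c)=z(c)$ with the level set $\{v_x=1\}$ fails: since $v_x(\sw(c),c)<1$, one has $z(c)<\sw(c)$ strictly. Your comparison step ``$\LL_c u\geq 0$ on $[z(c),\infty)$ plus vanishing Cauchy data gives $u\equiv 0$ there'' also does not work, because the Cauchy data vanish at $\sw(c)$, not at $z(c)$ (indeed $u(z(c),c)>0$); the paper instead obtains $u\equiv 0$ on $[\sw(c),\infty)$ from the bound $v_x\leq 1$ for $x\geq\sw(c)$ together with the uniqueness result \lemref{lem:tech} applied on $[\sw(c),\infty)$. (Your argument that $\sw(c)>0$ via $v_x(0,c)\geq g'(0)>1$ is fine and matches the paper.)

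This error propagates through parts 2 and 3. The implicit function theorem applied to $\Phi(c,x)=v_x(x,c)-1$ does not parametrize the free boundary, and the formula $\sw'(c)=-v_{xc}(\sw(c),c)/v_{xx}(\sw(c),c)$ cannot be correct: since $u(\cdot,c)\geq 0$ attains its minimum $0$ at the interior point $\sw(c)$, we have $v_{xc}(\sw(c),c)=-u_x(\sw(c),c)=0$, so your proposed uniform lower bound $|v_{xc}(\sw(c),c)|\geq\eps_0$ is impossible and the formula would return $\sw'\equiv 0$, contradicting the strict monotonicity you need. The paper's route differentiates the $C^1$-fit identity $u_x(\sw(c),c)\equiv 0$ to get $\sw'(c)=-u_{xc}(\sw(c),c)/u_{xx}(\sw(c)-,c)$ (\lemref{xc1}), with non-degeneracy supplied by $u_{xx}(\sw(c)-,c)\geq\delta$ (which rests precisely on the strict inequality your smooth fit denies) and by $u_{xc}(\sw(c),c)\neq 0$ (\lemref{lem:xms}, a separate Hopf-lemma/integration argument), while the lower bound $\sw'\geq K_1$ also uses the monotonicity of $\sw$ established independently in \lemref{lem:x'c}. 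Finally, regularity in the $c$-direction is not automatic (the operator has no diffusion in $c$, so interior Schauder theory does not upgrade $u_c$, $u_{cc}$, \dots up to the boundary); making sense of $u_c$ and the higher $c$-derivatives on $\Om$ is exactly what the paper's Lemmas \ref{lem:wuu}--\ref{lem:xn+1} are for, and your proposal has no substitute for that step.
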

The claims 1, 2 and 3 in \thmref{thm:freeboundary} will be proved, respectively, in \lemref{lem:xc}, \propref{smooth} and \propref{lip}. These properties will be used to construct an optimal dividend ratcheting strategy to the problem \eqref{value} in the next section.

\begin{figure}[H]
\centering
\includegraphics[width=0.8\linewidth]{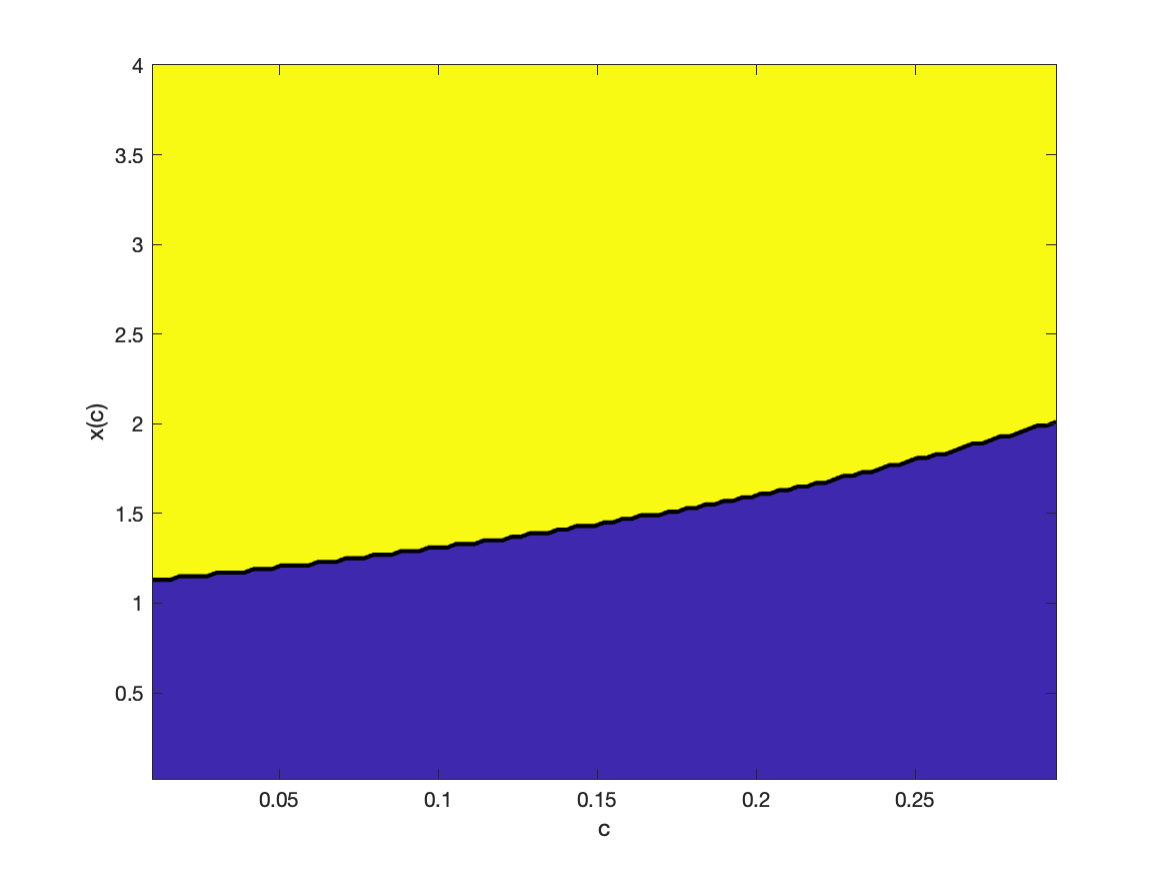}
\caption{The free boundary $\sw(\cdot)$ where $\mu=0.4$, $r=0.05$, $\sigma=0.4$, $\cc=0.3 $.}
\label{fig:fb1}
\end{figure}
Figure \ref{fig:fb1} illustrates the dividend ratcheting free boundary $\sw(\cdot)$.
Above the free boundary, we have $v_{c}=0$, which means it will not change the optimal value if one ratchet up the dividend payout rate up to the boundary.
In other words, the function $v$ takes a constant value on each horizon line on the left of the free boundary (where $v_{c}=0$). In particular, we have $v(x,c)=v(x,\cc)=g(x)$ if $x\geq \sw(\cc)$.
By contrast, the function $v$ is strictly decreasing on each horizon line on the right of the free boundary (where $v_{c}<0$), therefore, one should not ratchet up the dividend payout rate below the free boundary since it will reduce the optimal value.

\subsection{Optimal Dividend Ratcheting Strategy}\label{sec:verify}
Since $\sw(\cdot)$, given in \thmref{thm:freeboundary}, is strictly increasing, we may define
\begin{align}\label{swi}
\swi(x)=\begin{cases}
c,&0\leq x\leq \sw(c);\\
\sw^{-1}(x),&\sw(c)<x< \sw(\cc);\\
\cc,&x\geq \sw(\cc).
\end{cases}
\end{align}
By \thmref{thm:freeboundary}, the function $\swi(\cdot)$ is Lipschitz continuous, bounded and non-decreasing on $\R^{+}$.
Also, by the definition \eqref{swi},
\begin{align}\label{swi2}
\sw\big(\swi(x)\big)=\max\big\{\sw(c),~x\big\} \hbox{\; if $\swi(x)<\cc$.}
\end{align}

We are now ready to provide a complete answer to the problem \eqref{value}.
\begin{theorem}
\label{thm:averi}
Let $v$ be the solution to \eqref{v_pb} given in \thmref{thm:u}.
Then it is the optimal value of the optimal dividend ratcheting problem \eqref{value}.
Moreover,
$\{\DD^*_t\}_{t\geq 0}$ is an optimal dividend ratcheting strategy to the problem \eqref{value}, where
\[\DD^*_t :=\swi\Big(\max\limits_{r\in[0,t]}X^*_r\Big),\]
and $\swi(\cdot)$ is defined in \eqref{swi}, and
$X^*_t$ is the unique strong solution to the following SDE:
\begin{align}\label{optimalstate}
X^*_t=x+\int_0^t\Big(\mu-\swi\big(\max\limits_{r\in[0,s]}X^*_r\big)\Big) \ds+ \sigma W_t.
\end{align}
\end{theorem}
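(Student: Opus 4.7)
The plan is a verification argument in two halves: an upper bound $v\geq V$ valid for every admissible strategy, and a matching lower bound realized by $\DD^{*}$. Before either, I verify that $(X^{*},\DD^{*})$ is well-defined and admissible. Since $\swi:\R^{+}\to[c,\cc]$ is Lipschitz (by the flat extensions in \eqref{swi} together with \thmref{thm:freeboundary}(3)) and the running-maximum map is $1$-Lipschitz in the supremum norm, the drift in \eqref{optimalstate} is a Lipschitz path-functional; a standard Picard iteration then delivers a unique strong solution $X^{*}$. Adaptedness, continuity, monotonicity, and the range $[c,\cc]$ of $\DD^{*}_{t}=\swi(\max_{p\in[0,t]}X^{*}_{p})$ follow at once, so $\DD^{*}\in\Pi_{[c,\cc]}$.

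For the upper bound, fix an arbitrary $\{\DD_{t}\}\in\Pi_{[c,\cc]}$ with state $X_{t}$ and ruin time $\tau$, and apply Krylov's extension of It\^{o}'s formula to $e^{-rt}v(X_{t},\DD_{t})$ on $[0,\tau\wedge T]$. This is legitimate since $v(\cdot,c)\in \Wploc(\R^{+})$, $v_{x}$ and $v_{c}$ are continuous and bounded by \thmref{thm:u}, and $\DD_{t}$ is continuous of bounded variation. The resulting decomposition contains a drift integral $\int_{0}^{\tau\wedge T} e^{-rs}\LL_{\DD_{s}}v\, ds$, which is $\leq -\int_{0}^{\tau\wedge T} e^{-rs}\DD_{s}\,ds$ by \eqref{-Lv>=0}; a Stieltjes integral $\int_{0}^{\tau\wedge T} e^{-rs} v_{c}\,d\DD_{s}$, which is $\leq 0$ because $v_{c}\leq 0$ and $\DD$ is non-decreasing; and a true martingale since $v_{x}$ is bounded. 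Taking expectations and using $v\geq 0$ gives
\[
v(x,c)\geq \E\bigg[\int_{0}^{\tau\wedge T} e^{-rs}\DD_{s}\,ds\bigg];
\]
letting $T\to\infty$ by monotone convergence produces $v\geq V$.

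For the lower bound I apply the same It\^{o} decomposition to $e^{-rt}v(X^{*}_{t},\DD^{*}_{t})$ and show both inequalities collapse to equalities. The measure $d\DD^{*}_{t}$ is supported on the growth set of $M^{*}_{t}:=\max_{p\in[0,t]}X^{*}_{p}$, which lies inside $\{X^{*}_{t}=M^{*}_{t}\}$; on the sub-portion where $d\DD^{*}_{t}>0$, relation \eqref{swi2} gives $X^{*}_{t}=M^{*}_{t}=\sw(\DD^{*}_{t})$, and \thmref{thm:freeboundary}(1) with the continuity of $v_{c}$ then forces $v_{c}(X^{*}_{t},\DD^{*}_{t})=0$, so $\int_{0}^{\tau\wedge T} e^{-rs}v_{c}\,d\DD^{*}_{s}=0$. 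Next, a case analysis based on \eqref{swi} shows that for $ds$-a.e.\ $t$ either $X^{*}_{t}<\sw(\DD^{*}_{t})$, in which case \eqref{-Lv=0} applied at any nearby interior point yields $\LL_{\DD^{*}_{t}}v(X^{*}_{t},\DD^{*}_{t})=-\DD^{*}_{t}$, or $\DD^{*}_{t}=\cc$, in which case $v(\cdot,\cc)=g$ and $\LL_{\cc}g(X^{*}_{t})=-\cc$ by \eqref{g_eq}. Substituting into It\^{o} produces equality throughout; letting $T\to\infty$ (the boundary term vanishes because $v(0,\cdot)=0$ on $\{\tau<\infty\}$ while $v$ is bounded on $\{\tau=\infty\}$) gives $v(x,c)=\E[\int_{0}^{\tau} e^{-rs}\DD^{*}_{s}\,ds]\leq V(x,c)$, and combined with $v\geq V$ closes the loop.

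The principal obstacle is that $d\DD^{*}_{t}$ may carry a singular, local-time-like component concentrated on the running-maximum set, while $v$ is only $\Wploc$ in $x$. Both issues are resolved by the free-boundary structure from \thmref{thm:freeboundary}: the singular part of $d\DD^{*}_{t}$ is automatically supported inside $\{v_{c}=0\}$, and the Sobolev regularity of $v$ is sufficient for Krylov's It\^{o} formula to convert the classical identity \eqref{-Lv=0} into the desired equality along the diffusion.
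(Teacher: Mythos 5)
Your proposal is correct and follows essentially the same verification argument as the paper: Itô's formula applied to $e^{-rt}v(X_t,\DD_t)$ together with \eqref{-Lv>=0}, $v_c\leq 0$ and boundedness of $v_x$ for the upper bound $v\geq V$, and then, for the strategy $\DD^*$, the same two collapses to equality — $v_c\,d\DD^*=0$ because $d\DD^*$ charges only times with $X^*_t=\sw(\DD^*_t)$ where $v_c=0$, and $-\LL_{\DD^*_t}v-\DD^*_t=0$ via the dichotomy $\DD^*_t=\cc$ (using $v(\cdot,\cc)=g$) versus $X^*_t\leq\sw(\DD^*_t)$ (using \eqref{-Lv=0}) — followed by the same limiting argument as $T\to\infty$. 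Your well-posedness discussion of \eqref{optimalstate} (Lipschitz path-functional drift) matches the paper's appeal to the standard strong existence theorem, so there is no substantive difference in route.
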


\begin{proof}
Suppose $V(x,c)$ is the value function defined in \eqref{value} and $v(x,c)$ is the solution to the HJB equation \eqref{v_pb} given in \thmref{thm:u}. We come to prove $v(x,c)=V(x,c)$ for any $(x,c)\in \Q$.

For any admissible dividend ratcheting strategy $\{\DD_t\}_{t\geq 0}\in \Pi_{[c,\cc]}$, let $X_{t}$ be the corresponding solution to \eqref{X_eq} with the initial value $X_{0}=x$. Let $T>0$ be any constant.
Then It\^o's formula gives
\begin{align*}
v(x,c)
&=\E [e^{-r(\tau\wedge T) }v(X_{\tau\wedge T},\DD_{\tau\wedge T})]
-\E \Bigg[\int_0 ^{\tau\wedge T }e^{-r t } \LL_{\DD_t}v(X_t,\DD_t) \dt\Bigg]\\
&\quad\;-\E \Bigg[\int_0 ^{\tau\wedge T }e^{-r t } v_c (X_t,\DD_t)\d \DD_t\Bigg]
-\E \Bigg[\int_0 ^{\tau\wedge T }e^{-r t }\si v_x (X_t,\DD_t) \d W_t\Bigg].
\end{align*}
Thanks to the boundedness of $v_{x}$, the last expectation is zero.
Since $-\LL_{\DD_t}v(X_t,\DD_t)-\DD_t\geq 0$, $-v_c (X_t,\DD_t)\geq 0$ and $\d \DD_t\geq 0$, we have
\begin{align}\label{vE}
v(x,c)
\geq
\E [e^{-r(\tau\wedge T)}v(X_{\tau\wedge T},\DD_{\tau\wedge T})]
+\E \Bigg[\int_0 ^{\tau\wedge T}e^{-r t } \DD_t \dt\Bigg].
\end{align}
Since $v\geq 0$, the first expectation can be dropped. 
Since $\DD_t$ is nonnegative, applying the monotone convergence theorem to the second integral leads to
\begin{align*}
v(x,c) \geq \E \Bigg[\int_0 ^{\tau}e^{-r t } \DD_t \dt\Bigg].
\end{align*}
Since $\{\DD_t\}_{t\geq 0}\in \Pi_{[c,\cc]}$ is arbitrary selected, we obtain $v(x,t)\geq V(x,t)$.

Because $\swi(\cdot)$ is Lipschitz continuous and bounded, by \cite[Theorem 2.2, p.150]{M08}, there is a unique strong solution $X^*_t$ to the SDE \eqref{optimalstate}.
Set $$\DD^*_t =\swi\Big(\max\limits_{r\in[0,t]}X^*_r\Big).$$
Then it is not hard to check that $\{\DD^{*}_t\}_{t\geq0}$ is an admissible dividend ratcheting strategy in $\Pi_{[c,\cc]}$.
Clearly, $X^*_{t}$ is the corresponding solution to \eqref{X_eq} under the strategy $\{\DD^*_t\}_{t\geq 0}$ with the initial value $X_{0}=x$.
We now prove that $\{\DD^*_t\}_{t\geq 0}$ is an optimal dividend ratcheting strategy to the problem \eqref{value} and $v(x,c)=V(x,c)$.

We first prove
\begin{align}\label{LV*}
-\LL_{\DD^*_t} v(X^*_t,\DD^*_t)-\DD^*_t=0.
\end{align}
In fact, if $\DD^*_t=\cc$, then \eqref{LV*} holds true since
$$-\LL_{\cc} v(x,\cc)-\cc=-\LL_{\cc} g(x)-\cc=0,~ x\in\R^{+}.$$
Now suppose $\DD^*_t<\cc$. Then by \eqref{swi2},
$$\sw(\DD^*_t)=\max\Big\{\sw(c),~\max\limits_{r\in[0,t]}X^*_r\Big\},
$$
so
$X^*_t\leq \sw(\DD^*_t).$ 
It thus follows from \thmref{thm:freeboundary} and \eqref{-Lv=0} that \eqref{LV*} holds.

On the other hand, we have $\d \DD^*_t =0$ if $X^*_t\neq \sw(\DD^*_t)$, and thanks to \thmref{thm:freeboundary}, $v_c(X^*_t,\DD^*_t)=0$ if $X^*_t=\sw(\DD^*_t)$,
so it always holds that
$$-v_c(X^*_t,\DD^*_t)\d \DD^*_t=0,~ 0\leq t\leq \tau.$$
Therefore, under the strategy $\{\DD^*_t\}_{t\geq 0}$, the inequality \eqref{vE} becomes an equation
\begin{align}\label{vE2}
v(x,c)=
\E [e^{-r(\tau\wedge T) }v(X^*_{\tau\wedge T},\DD^*_{\tau\wedge T})]
+\E \Bigg[\int_0 ^{\tau\wedge T }e^{-r t } \DD^*_t \dt\Bigg].
\end{align} 
If $\tau<\infty$, then $X^*_{\tau}=0$ and hence,
\begin{align*}
\lim_{T\to\infty}e^{-r(\tau\wedge T) } v(X^*_{\tau\wedge T},\DD^*_{\tau\wedge T})=
e^{-r\tau}v(X^*_{\tau},\DD^*_{\tau})=e^{-r\tau} v(0,\DD^*_{\tau})=0.
\end{align*}
If $\tau=\infty$, then, since $v$ is bounded and $r>0$,
\begin{align*}
\lim_{T\to\infty}e^{-r(\tau\wedge T) } v(X^*_{\tau\wedge T},\DD^*_{\tau\wedge T})=
\lim_{T\to\infty}e^{-r T } v(X^*_{T},\DD^*_{T})=0.
\end{align*} 
Now applying the dominated convergence theorem to the first integral and the monotone convergence theorem to the second integral in \eqref{vE2}, we get
$$
v(x,c)=\E \Bigg[\int_0 ^{\tau }e^{-r t } \DD^*_t \dt\Bigg].
$$
In view of the definition of $V(x,t)$, we obtain $v (x,t)\leq V(x,t)$.
Therefore, $v (x,t)=V(x,t)$, and $\{\DD^*_t\}_{t\geq 0}$ is an optimal dividend ratcheting strategy.
\end{proof}

Different from the simple case, we can see from this result that, in the complicated case, the optimal dividend ratcheting strategy relays on the surplus level and one should only ratchet up the dividend payout rate when the surplus level touches the dividend ratcheting free boundary.

We will prove Theorems \ref{thm:u} and \ref{thm:freeboundary} by a novel PDE method in Sections \ref{section5} and \ref{section6}. Before doing that, we first present a numerical analysis of the problem in Section \ref{section4}.

\section{Numerical Analysis}\label{section4}

We preform a numerical study in this section to exam the effects of different parameters (including the maximum allowable dividend payout rate $\cc$ in Figure \ref{comparecbar}, the discount rate $r$ in Figure \ref{comparer}, the income rate $\mu$ in Figure \ref{comparemu} and the volatility rate $\sigma$ in Figure \ref{comparevol}) on the value function $V$ and the free boundary $\sw(\cdot)$. 
\begin{figure}[H]
\centering
\includegraphics[width=0.4785\linewidth]{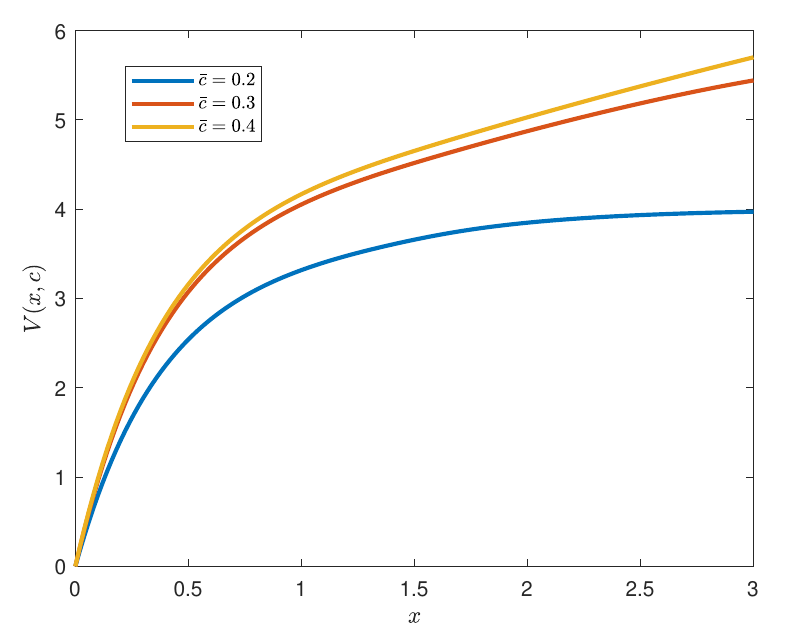} 
\includegraphics[width=0.485\linewidth]{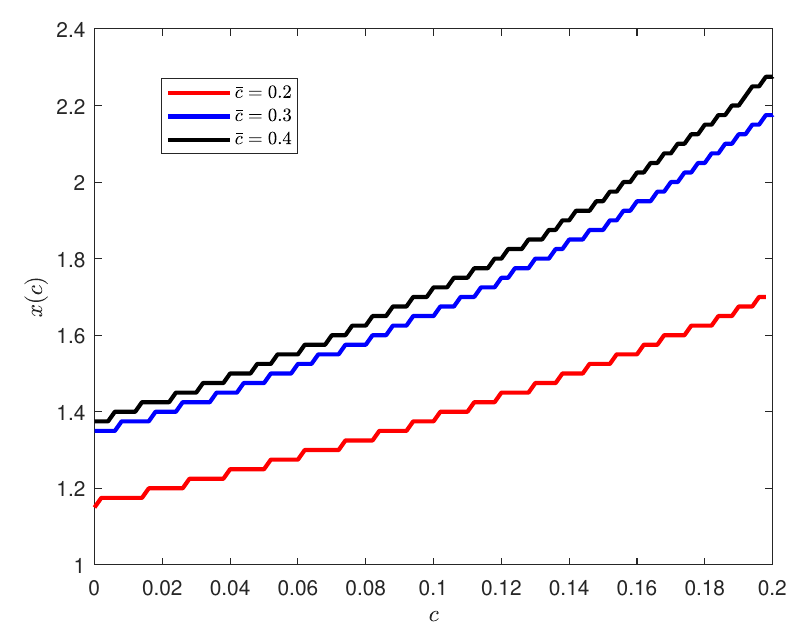}
\caption{The value function $V(x,c)$ and free boundary $\sw(\cdot)$ with different $\cc$ where $c=0.1$, $\mu=0.4$, $r=0.05$, $\sigma=0.4$.}
\label{comparecbar}
\end{figure}
Figure \ref{comparecbar} exams the effect of the maximum allowable dividend payout rate $\cc$ on the value function $V$ and the free boundary $\sw(\cdot)$. 
The upper panel displays the value functions $V$ (sliced at $c=0.1$). As expected, the bigger the maximum rate $\cc$, the bigger the value. 
If the current smallest rate $c=0.1$ is far from the maximum rate $\cc$ (say $\cc=0.3$ or $0.4$), 
the value function has a higher speed of increasing with respect to the surplus $x$, meaning that one can do significantly better if a higher surplus is given. This is because one has a large room to increase the dividend payout rate. By contrast, there is no much increasing with respect to the surplus in the value function when the smallest and maximum rates are not far. 
The lower panel depicts the free boundary $\sw(\cdot)$. As expected, they are increasing in all scenarios. But there are no much differences in the speed of increasing, in particular, when the current smallest rate $c=0.1$ is far from the maximum rate $\cc$.

\begin{figure}[H]
\centering
\includegraphics[width=0.47\linewidth]{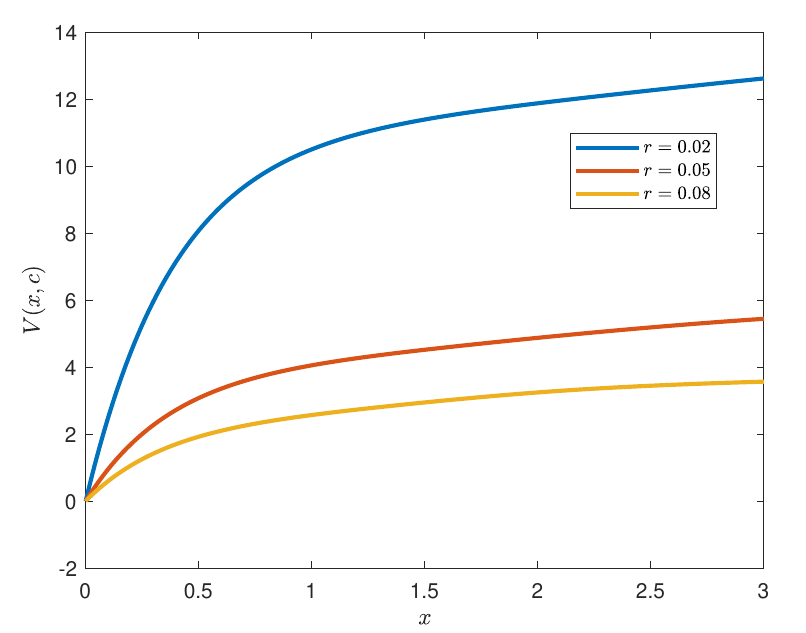} \includegraphics[width=0.455\linewidth]{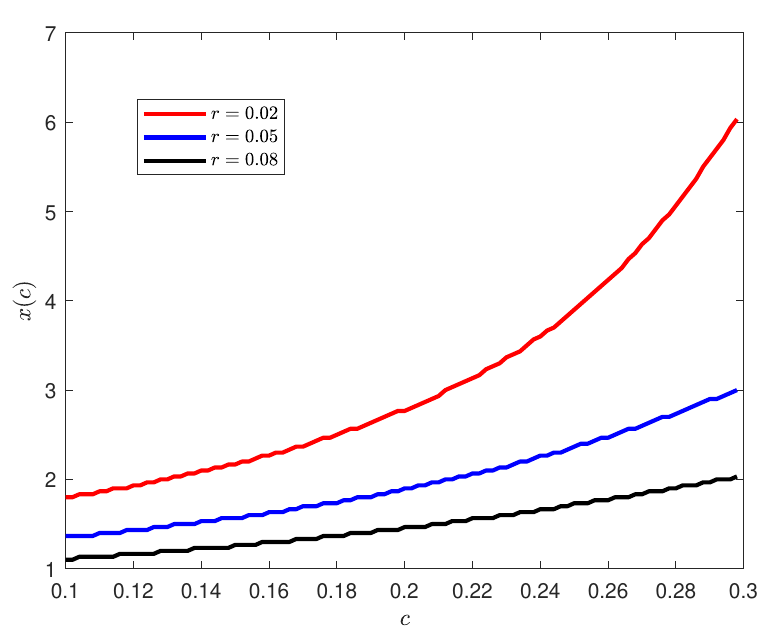}
\caption{The value function $V(x,c)$ and free boundary $\sw(\cdot)$ with different $r$ where $c=0.1$, $\cc=0.3$, $\mu=0.4$, $\sigma=0.4$.} 
\label{comparer}
\end{figure}

Figure \ref{comparer} exams the effect of the discount rate $r$ on the value function $V$ and the free boundary $\sw(\cdot)$. 
The upper panel displays the value functions $V$ (sliced at $c=0.1$). As expected, the smaller the rate $r$, the bigger the value. It can seen from the figure that the effect of $r$ becomes less important as the surplus $x$ becomes bigger. 
The lower panel depicts the free boundary $\sw(\cdot)$. From the figure we see that the smaller the discount rate $r$, the more significant the changes of the free boundary close to the maximum dividend payout rate $\cc$. Therefore, a smaller discount rate may have significant impact on the dividend payout strategy. Intuitively speaking, if a company hopes to apply a stable dividend payout strategy (that is, less sensitive with respect to the surplus level), the company should choose a bigger discount rate $r$.

\begin{figure}[H]
\centering
\includegraphics[width=0.475\linewidth]{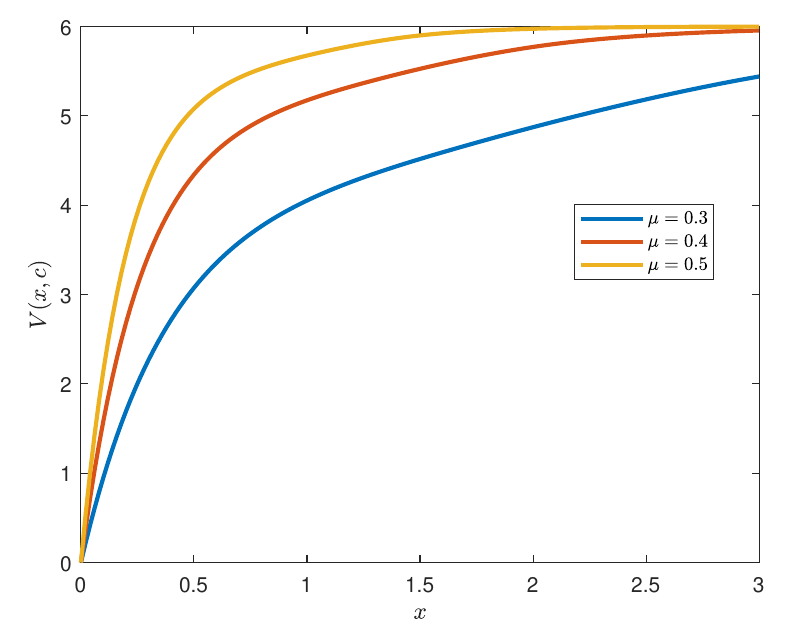} 
\includegraphics[width=0.4885\linewidth]{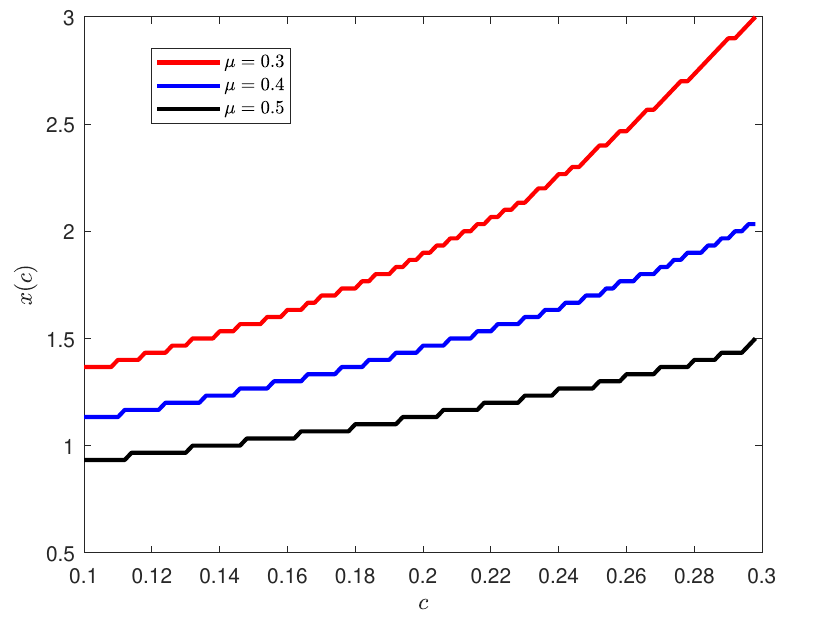}
\caption{The value function $V(x,c)$ and free boundary $\sw(\cdot)$ with different $\mu$ where $c=0.1$, $\cc=0.3$, $r=0.05$, $\sigma=0.4$.} 
\label{comparemu}
\end{figure}

Figure \ref{comparemu} exams the effect of the income rate $\mu$ on the value function $V$ and the free boundary $\sw(\cdot)$. The upper panel displays the value functions $V$ (sliced at $c=0.1$). As expected, the bigger the rate $\mu$, the bigger the value. It can seen from the figure that the effect of $\mu$ becomes less important as the surplus $x$ becomes bigger. 
The lower panel depicts the free boundary $\sw(\cdot)$. 
The figure shows that the dividend payout strategy is more stable if the company's income rate is higher. Intuitively speaking, one should invest into those companies with stable dividend payout strategies since their income rates should be higher.

\begin{figure}[H]
\centering
\includegraphics[width=0.475\linewidth]{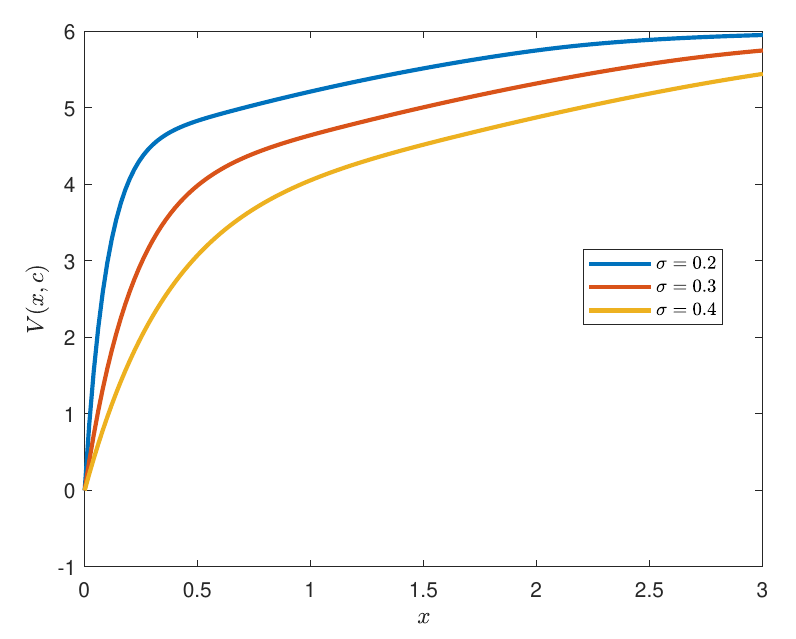} \includegraphics[width=0.485\linewidth]{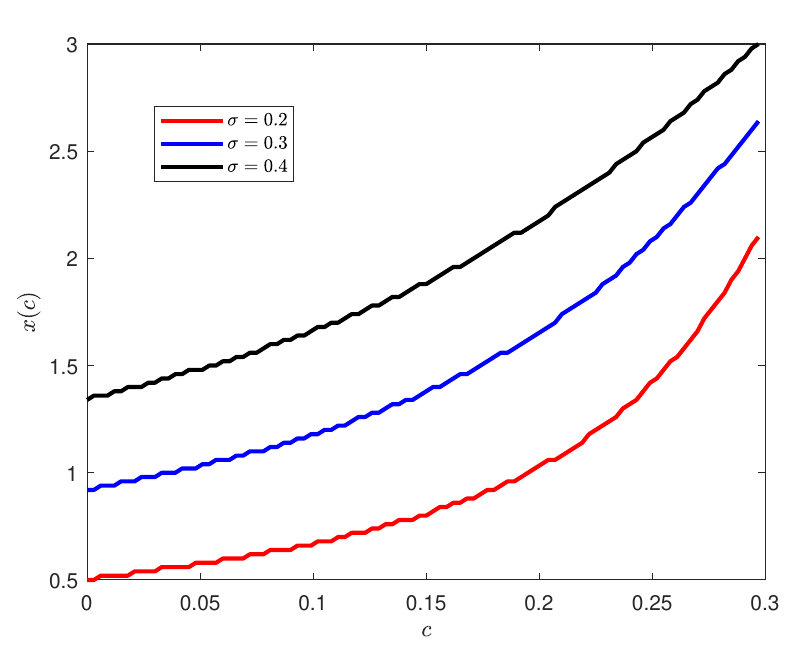}
\caption{The value function $V(x,c)$ and free boundary $\sw(\cdot)$ with different $\sigma$ where $c=0.1$, $\cc=0.3$, $r=0.05$, $\mu=0.3$.} 
\label{comparevol}
\end{figure}

Figure \ref{comparevol} exams the effect of the volatility $\sigma$ on the value function $V$ and the free boundary $\sw(\cdot)$. The upper panel displays the value functions $V$ (sliced at $c=0.1$). As expected, the smaller the volatility $\sigma$, the bigger the value. But the differences are not so significant. 
The lower panel depicts the free boundary $\sw(\cdot)$. The figure shows that the dividend payout strategy is more stable if the company's volatility rate is smaller. Intuitively speaking, one should invest into those companies with stable dividend payout strategies since their volatility rates should be smaller.

\section{Solvability of the HJB Equations \eqref{v_pb} and \eqref{u_pb}}\label{section5}
This section consists of two parts. In the first part, we introduce and study a regime switching system to approximate the PDEs \eqref{v_pb} and \eqref{u_pb}. In the second part, we construct a solution to \eqref{v_pb} and \eqref{u_pb} by a limit argument. We start with a technical lemma that will be frequently used in our subsequent analysis. 
\begin{lemma}\label{lem:tech}
Suppose $a\geq 0$ is a constant, and $h: [a,\infty) \to \R^{+}$ is a given bounded measurable function. If $\nu\in \Wploc[a,\infty)\cap L^\infty([a,\infty))$ for some $p>1$ and satisfies
\begin{align}\label{nu_pb}
\begin{cases}
\min\big\{\!-\LL_c \nu+h,~\nu\big\}=0, & x>a,\medskip\\
\nu(a)=0,
\end{cases}
\end{align}
for some $c\in[0,\cc]$, then $\nu\equiv 0$ in $[a,\infty)$.
\end{lemma}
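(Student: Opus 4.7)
The plan is to establish $\nu\leq 0$ on $[a,\infty)$; combined with the obstacle condition $\nu\geq 0$ already built into \eqref{nu_pb}, this gives $\nu\equiv 0$. The argument has two main ingredients: an a.e.\ sign for $\LL_c\nu$ on the coincidence set, and an exponential barrier that compensates for the lack of a vanishing boundary condition at $+\infty$.

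First I would verify that $\LL_c\nu\geq 0$ a.e.\ on $[a,\infty)$. On $\{\nu>0\}$, the obstacle equation in \eqref{nu_pb} forces $-\LL_c\nu+h=0$, hence $\LL_c\nu=h\geq 0$. On $\{\nu=0\}$, the one-dimensional Sobolev embedding $\Wploc[a,\infty)\hookrightarrow C^{1}_{\rm loc}[a,\infty)$ (recall $p>1$) implies $\nu_x$ is continuous, and every point of $\{\nu=0\}$ is a local minimum of the nonnegative function $\nu$, so $\nu_x=0$ pointwise on this set. Applying the standard vanishing-derivative result for Sobolev functions to $\nu_x\in W^{1,p}_{\rm loc}$ then gives $\nu_{xx}=0$ a.e.\ on $\{\nu_x=0\}\supseteq\{\nu=0\}$, and combined with $\nu=0$ itself this yields $\LL_c\nu=0$ a.e.\ on $\{\nu=0\}$. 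In both cases $\LL_c\nu\geq 0$ a.e.

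Next I would construct an exponential barrier. Let $\alpha>0$ be the positive root of $\tfrac12\sigma^2\alpha^2+(\mu-c)\alpha-r=0$ (which exists because the quadratic takes the value $-r<0$ at $\alpha=0$ and has positive leading coefficient), and set $\psi(x):=e^{\alpha(x-a)}$. A direct computation gives $\LL_c\psi\equiv 0$. For each $\epsilon>0$ consider $w_\epsilon:=\nu-\epsilon\psi$: by the previous step and $\LL_c\psi=0$ we get $\LL_c w_\epsilon\geq 0$ a.e.; at the left endpoint $w_\epsilon(a)=-\epsilon<0$; and since $\nu$ is bounded but $\psi(x)\to\infty$, I can pick $R_\epsilon>a$ so large that $w_\epsilon\leq 0$ on $[R_\epsilon,\infty)$.

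Finally, since the zeroth-order coefficient $-r$ of $\LL_c$ is nonpositive, the standard $\Wp$ weak maximum principle applied on the bounded interval $[a,R_\epsilon]$ yields $\sup_{[a,R_\epsilon]} w_\epsilon \leq \bigl(\max\{w_\epsilon(a),w_\epsilon(R_\epsilon)\}\bigr)^+ = 0$. Combined with $w_\epsilon\leq 0$ on $[R_\epsilon,\infty)$ this shows $\nu(x)\leq \epsilon\psi(x)$ for every $x\geq a$, and letting $\epsilon\downarrow 0$ at fixed $x$ gives $\nu\leq 0$, so $\nu\equiv 0$. The most delicate point is the a.e.\ identity $\LL_c\nu=0$ on the coincidence set $\{\nu=0\}$; everything else is a routine combination of a barrier with the weak maximum principle, and it is precisely the lack of a decay condition at $+\infty$ that forces us to use the barrier rather than apply the maximum principle directly.
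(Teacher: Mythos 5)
Your proof is correct, but it proceeds differently from the paper, which disposes of this lemma in two lines: it observes that $\nu\equiv 0$ is itself a solution of \eqref{nu_pb} (since $h\geq 0$) and then invokes, without proof, the uniqueness of solutions to this obstacle problem in the class $\Wploc[a,\infty)\cap L^\infty([a,\infty))$. What you have done is essentially to supply the comparison argument that underlies that uniqueness claim: the a.e.\ identity $\LL_c\nu=0$ on the coincidence set $\{\nu=0\}$ (via the $C^1$ embedding and the vanishing-derivative property of $\nu_x\in W^{1,p}_{\rm loc}$ on its zero set), giving $\LL_c\nu\geq 0$ a.e., followed by the exponential supersolution $\psi(x)=e^{\alpha(x-a)}$ with $\LL_c\psi=0$ to handle the absence of a boundary condition at $+\infty$, and the weak maximum principle on $[a,R_\epsilon]$ to conclude $\nu\leq\epsilon\psi$. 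All steps check out: the positive root $\alpha$ exists because the quadratic is $-r<0$ at zero with positive leading coefficient, the zeroth-order coefficient $-r\leq 0$ legitimizes the maximum principle, and the only points where $\{\nu=0\}\not\subseteq\{\nu_x=0\}$ could fail (the endpoint $a$) form a null set, so the a.e.\ conclusion is unaffected. The paper's route is shorter but leans on a standard-but-uncited uniqueness result; yours is longer but self-contained and makes explicit exactly where boundedness of $\nu$ and nonnegativity of $h$ are used.
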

\begin{proof}
It is easy to check that $\nu(x)\equiv 0$, $x\geq a$ satisfies \eqref{nu_pb}, by the uniqueness of the solution of \eqref{nu_pb} in $\nu\in \Wploc[a,\infty)\cap L^\infty([a,\infty))$, we get the conclusion.
\end{proof}

\subsection{A Regime Switching Approximation System}\label{sec:approximation}

Suppose $n\geq 1$ and the dividend payout rates can only take the following finite values $c_i=\cc- i\Dc,~ i=0,1,2,\cdots,n,$ with $\Dc=\cc/n>0$.
Let $v_{-1}=-1$. Consider a regime switching ODE system:
\begin{align}\label{vi_pb}
\begin{cases}
\min\big\{\!-\LL_{c_i} v_i- c_i,~v_i-v_{i-1}\big\}=0, & x>0, \medskip\\
v_i(0)=0, & i=0,1,2,\cdots,n,
\end{cases}
\end{align}
under the power growth condition.
Here, we can think of $v_i(x)$ as an approximation of $v(x,c_i)$.
This is a system of single-obstacle problems, so we can solve it.

\begin{lemma}\label{lem:vi}
The system \eqref{vi_pb} has a unique solution $v_i\in \Wploc(\R^+)\cap C^{1+\al}(\R^+)$, $i=0,1,2,\cdots,n,$ for any $p>1$ and $\al\in(0,1)$. Moreover, it holds that
\begin{gather}\label{vi}
0\leq v_i\leq \frac{\cc}{r},\\\label{vix}
v_i' \geq 0, \\
\label{vixx}
v_i'(y)\leq \max\big\{v_i'(x),~1\big\},~\forall\; 0\leq x\leq y.
\end{gather}
Let
\begin{align}
x_{i}=\inf\big\{x> 0\;|\;v_i(x)=v_{i-1}(x) \big\},~ i=0, 1,2,\cdots,n. \label{def:xi}
\end{align}
Then $v_i(x)=v_{i-1}(x)$ if $x\geq x_i$ and $v_i(x)>v_{i-1}(x)$ if $x<x_i$.
Also, $0<x_{i}<\infty$ for $i=1,2,\cdots,n,$ and $x_{0}=+\infty$.
\end{lemma}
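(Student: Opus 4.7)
My plan is to induct on $i \in \{0, 1, \ldots, n\}$, carrying along the extra claim $v_i'(0) > 1$. The base case $i = 0$ is free: since $v_{-1} \equiv -1$, any nonnegative candidate $v_0$ satisfies $v_0 - v_{-1} \geq 1 > 0$, so the obstacle is inactive and \eqref{vi_pb} collapses to $-\LL_{\cc} v_0 - \cc = 0$ with $v_0(0) = 0$ under power growth. By \eqref{g_eq} this forces $v_0 = g$, and all listed estimates follow at once from \eqref{def:g}: $g' = (\cc\ga/r)e^{-\ga x}$ is nonnegative and strictly decreasing in $x$, $g \in [0, \cc/r]$, and $g'(0) = \cc\ga/r > 1$ by \lemref{simpleineq} and the standing hypothesis $2\mu\cc > \sigma^2 r$; also $x_0 = +\infty$ since $g \geq 0 > -1 = v_{-1}$.

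For the inductive step $i \geq 1$, assuming $v_{i-1}$ has been constructed with all the stated properties, existence and uniqueness of $v_i \in \Wploc(\R^+) \cap C^{1+\alpha}(\R^+)$ is obtained by a standard single-obstacle argument: approximate by the penalized PDE with obstacle $v_{i-1}$ on a truncation $(0,R)$ with Dirichlet data $0$ at $0$ and $\cc/r$ at $R$; uniform $L^\infty$ bounds (by comparison with the constant supersolution $\cc/r$ and subsolution $v_{i-1}$) and interior $C^{1+\alpha}$ estimates permit passage to the limit $\eps \to 0$, $R \to \infty$. Uniqueness is a standard comparison argument for obstacle problems. The bound \eqref{vi} is then immediate: $v_i \geq v_{i-1} \geq 0$ gives the lower bound, and $v_i \leq \cc/r$ follows by comparison with the supersolution $\cc/r$, which dominates the obstacle $v_{i-1}$ inductively and satisfies $-\LL_{c_i}(\cc/r) - c_i = \cc - c_i \geq 0$.

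The monotonicity \eqref{vix} and the bound \eqref{vixx} both hinge on the observation that on each connected component of the non-contact set $\{v_i > v_{i-1}\}$, $v_i$ classically solves $-\LL_{c_i} v_i - c_i = 0$, so $w := v_i'$ satisfies the homogeneous equation $-\LL_{c_i} w = 0$. At any interior extremum $y_0$ of $w$ one has $w'(y_0) = 0$ and hence $r w(y_0) = \tfrac{1}{2}\sigma^2 w''(y_0)$, which excludes a negative interior minimum and a positive interior maximum of $w$. Matching this with the boundary values of $w$ — equal to $v_{i-1}'$ at contact points (where induction applies), given at $x = 0$, and $w \to 0$ as $x \to \infty$ in an unbounded non-contact component (from the characteristic-root form of bounded solutions) — yields $v_i' \geq 0$ globally and, after a careful case analysis on the positions of $x \leq y$ relative to the alternating contact/non-contact regions, the desired inequality \eqref{vixx}. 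I expect this case analysis to be the main technical obstacle.

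For the free boundary, suppose first $x_i = +\infty$: then $v_i$ would be the unique bounded solution of $-\LL_{c_i} v_i - c_i = 0$ on $\R^+$ with $v_i(0) = 0$, namely $v_i(x) = (c_i/r)(1 - e^{-\ga_i x})$ with $\ga_i$ the corresponding positive characteristic root; but then $\lim_{x\to\infty} v_i = c_i/r < \cc/r = \lim_{x\to\infty} g \leq \lim_{x\to\infty} v_{i-1}$, contradicting $v_i \geq v_{i-1}$, so $x_i < \infty$. To propagate the auxiliary claim and rule out $x_i = 0$: iterating $v_i \geq v_{i-1} \geq \cdots \geq v_0 = g$ with $v_i(0) = g(0) = 0$ gives $v_i'(0) \geq g'(0) > 1$; if $x_i = 0$ were to hold, contact $v_i = v_{i-1}$ would persist on a right-neighborhood of $0$, so substituting $v_{i-1}$ into the variational inequality $-\LL_{c_i} v_{i-1} - c_i \geq 0$ at $0$ and subtracting $v_{i-1}$'s own equation $-\LL_{c_{i-1}} v_{i-1} - c_{i-1} = 0$ would yield $(c_{i-1}-c_i)(1 - v_{i-1}'(0)) \geq 0$, i.e.\ $v_{i-1}'(0) \leq 1$, contradicting the inductive $v_{i-1}'(0) > 1$.
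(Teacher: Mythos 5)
Your base case, the penalization/comparison construction of $v_i$, the bounds \eqref{vi}, and the proof that $x_i<\infty$ via the explicit solution $(c_i/r)(1-e^{-\gamma_i x})$ all match the paper. But there is a genuine gap: the lemma asserts that the contact set is exactly the ray $[x_i,\infty)$, i.e.\ $v_i=v_{i-1}$ for \emph{all} $x\ge x_i$, and your proposal never proves this --- on the contrary, it explicitly entertains ``alternating contact/non-contact regions''. The paper's proof of the ray structure (and of \eqref{vix}, \eqref{vixx}) hinges on a key estimate that is absent from your outline: at every contact point $x^*$ one has $v_{i-1}'(x^*)\le 1$. This is proved by descending the chain $v_i(x^*)=v_{i-1}(x^*)=\cdots=v_{i-k}(x^*)>v_{i-k-1}(x^*)$ to the first index $i-k$ whose equation $-\LL_{c_{i-k}}v_{i-k}-c_{i-k}=0$ holds in a neighborhood of $x^*$ (note $v_{i-1}$ itself need not satisfy its own equation near $x^*$ if it is in contact with $v_{i-2}$ there), and then showing that $v_i-v_{i-k}$ would be strictly concave near its interior minimum if $v_{i-1}'(x^*)>1$, a contradiction. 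Combined with the inductive \eqref{vixx} this gives $v_{i-1}'\le 1$ on $[x_i,\infty)$, hence $-\LL_{c_i}v_{i-1}-c_i=-\LL_{c_{i-1}}v_{i-1}-c_{i-1}+\Dc\,(1-v_{i-1}')\ge 0$ there, and \lemref{lem:tech} then forces $v_i\equiv v_{i-1}$ on $[x_i,\infty)$. Your deferred ``careful case analysis'' for \eqref{vix} and \eqref{vixx} needs precisely this bound on the boundary values $v_i'=v_{i-1}'$ at contact points (the inductive \eqref{vixx} alone does not cap them by $1$, since e.g.\ $v_{i-1}'(0)\ge g'(0)>1$), so the hard part of the argument has not actually been supplied.

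Your argument that $x_i>0$ is also flawed as written: $x_i=0$ only means contact points accumulate at $0$, not that contact persists on a right-neighborhood of $0$, so you cannot subtract the two equations on an interval near $0$; moreover, evaluating the variational inequality $-\LL_{c_i}v_i-c_i\ge 0$ pointwise at an isolated contact point is not legitimate, since $v_i''$ is only an $L^p$ function there. The correct route (and the paper's) is your observation $v_i'(0)\ge g'(0)>1$, but it closes only after the ray structure and the estimate $v_{i-1}'\le 1$ on $[x_i,\infty)$ are in place: if $x_i=0$ then $v_i\equiv v_{i-1}$ on $\R^+$ and $v_i'(0)=v_{i-1}'(0)\le 1$, a contradiction. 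So the missing estimate at contact points is the single ingredient your proposal needs and does not have.
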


\begin{proof}
Because $g\geq 0>v_{-1}$ and $c_0=\cc$, it follows from \eqref{g_eq} that
\begin{align*}
\begin{cases}
\min\big\{\!-\LL_{c_0} g- c_0,~g-v_{-1}\big\}=-\LL_{\cc} g- \cc=0, & x>0, \\
g(0)=0. &
\end{cases}
\end{align*}
Therefore, $v_0=g$ is the unique solution to \eqref{vi_pb} in $\Wploc(\R^+)\cap C^{1+\al}(\R^+)$ and $x_{0}=+\infty$.

For each $i=1,2,\ldots, n$, the problem \eqref{vi_pb} is a single-obstacle problem. 
By the standard penalty method and $L^p$ theorem, we can prove step by step for $i$ from $1$ to $n$ that
\eqref{vi_pb} admits a unique solution (See, e.g., \cite{Fr75}):
$v_i\in \Wploc(\R^+). $
Moreover, by the embedding theorem (See, e.g., \cite[Theorem 6 on page 286]{Ev17}), we also have
$v_i\in C^{1+\al}(\R^+).$
Since the process is standard, we omit the details.

By \eqref{vi_pb},
\begin{align}\label{increasingvi}
v_i\geq v_{i-1}\geq\cdots \geq v_0\geq 0>v_{-1},
\end{align}
giving the lower bound in \eqref{vi}.

We now prove the remaining claims by mathematical induction.
It is easy to check that $v_0=g $ satisfies all the desired properties.
Suppose all the desired results of the lemma hold for all $i\leq j-1$ (with $1\leq j<n$), we now prove that they also hold when $i=j$.

Because $v_{j-1}\leq \cc/r$, the constant function $\cc/r$ is a super solution to the variational inequality of $v_j$ by \eqref{vi_pb}. Hence we proved the upper bound in \eqref{vi}: $v_{j}\leq \cc/r$.

We now prove the set $$ E_{j}:=\{x> 0\;|\;v_j(x)=v_{j-1}(x) \}$$ is not empty. Suppose, on the contrary, it is empty. Then by the variational inequality \eqref{vi_pb}, we have
$$-\LL_{c_j} v_j- c_j=0,~ x>0.$$
It admits an explicit solution $ v_j(x)=\frac{ c_j }{r}(1-e^{-\ga_j x}), $
under the initial condition $v_j(0)=0$ and power growth condition, where $\ga_j$ is the positive root of
$
-\frac{1}{2}\si^2 \ga_j^2+(\mu-c_j) \ga_j+r=0.
$
It hence follows $v_j(+\infty)=c_j/r<\cc/r=v_0(+\infty),$ contradicting to the order \eqref{increasingvi}.
Therefore, $E_{j}$ is not empty. As a consequence, we have
$x_j:=\inf E_{j}<\infty.$

We next prove
\begin{align}\label{vix<=1}
v_{j-1}'(x)\leq 1,~ x\geq x_j.
\end{align}
Thanks to \eqref{vixx}, it suffices to prove
$v_{j-1}'( x^*)\leq 1$ for all $ x^*\in E_j$.
Suppose, on the contrary, $ v_{j-1}'( x^*) > 1$ for some $x^*\in E_j$.
Then by the order \eqref{increasingvi}, there must exist $1\leq k\leq j$ such that $v_{j}( x^* )=v_{j-1}( x^* )=\cdots=v_{j-k}( x^* )>v_{j-k-1}( x^* ).$
Since $v_{j-k}$ and $v_{j-k-1}$ are continuous, there is a neighborhood $I_\ep=[ x^*-\ep, x^*+\ep]$ such that $v_{j-k}>v_{j-k-1}$ in $I_\ep$, so by \eqref{vi_pb},
$-\LL_{c_{j-k}} v_{j-k}-c_{j-k}=0\quad\hbox{in } I_\ep.$
Together with the variational inequality of $v_j$, we obtain the estimate
\begin{align}\label{v''}
\frac{\si^2}{2}(v_j-v_{j-k})''\leq -(\mu-c_j) (v_j'-v_{j-k}')+r(v_j-v_{j-k})+k\Dc (1- v_{j-k}')
~\hbox{ a.e. in } I_\ep.
\end{align}
Note that both $v_j-v_{j-1}$ and $v_j-v_{j-k}$ attain their minimum value $0$ at $ x^*$, so
$$v_{j}( x^* )=v_{j-1}( x^* )=v_{j-k}( x^* ),~ v_j'( x^* )=v_{j-1}'( x^* )=v_{j-k}'( x^*)>1.$$
which shows that the right hand side (RHS) of \eqref{v''} is negative at $ x^*$. Because the RHS of \eqref{v''} is continuous, we conclude that $(v_j-v_{j-k})''<0$ a.e. in $I_\ep$ when $\ep>0$ is small enough. This means $v_j-v_{j-k}$ is strictly concave in $I_\ep$, which contradicts that $v_j-v_{j-k}$ attains its minimum value at the inner point $ x^* \in I_\ep$. Thus, \eqref{vix<=1} follows.

We next prove
\begin{align}\label{vjj}
v_j(x)=v_{j-1}(x),~ x\geq x_j.
\end{align} 
Let $\nu=v_j-v_{j-1}$. Then, by \eqref{vi_pb}, 
\begin{align*}
\begin{cases}
\min\big\{\!-\LL_{c_j} \nu-\LL_{c_j} v_{j-1}- c_j,~\nu\big\}=
\min\big\{\!-\LL_{c_j} v_{j}- c_j,~v_{j}-v_{j-1}\big\}=0, & x>x_j,\medskip\\
\nu(x_j)=v_{j}(x_j)-v_{j-1}(x_j)=0.
\end{cases}
\end{align*}
Note \eqref{vi_pb} also holds when $i=j-1$. Together with \eqref{vix<=1}, we get
\begin{align*}
-\LL_{c_j} v_{j-1}(x)- c_j
=-\LL_{c_{j-1}} v_{j-1}(x)- c_{j-1}+\Dc ( 1 - v_{j-1}' (x))\geq 0,~ x>x_j,
\end{align*}
Applying \lemref{lem:tech}, we conclude $\nu=0$ for $x>x_j$. 
As a consequence, \eqref{vjj} holds and $x_j$ is the unique free boundary point such that $v_j(x)=v_{j-1}(x)$ if $x\geq x_j$ and $v_j(x)>v_{j-1}(x)$ if $x<x_j$.

Note that $v_j\geq g$, $v_j(0)=g(0)=0$ and $g'(0)>1$ in the complicated case, so
\begin{align}\label{vj'0>1}
v_j'(0)\geq g'(0)>1,
\end{align}
which together with \eqref{vix<=1} gives
\begin{align}\label{xj>0}
x_j>0.
\end{align}

We come to prove $v_j'\geq 0$. Since $v_j=v_{j-1}$ in $[x_j,+\infty)$ by \eqref{vjj} and $ v_{j-1}'\geq 0$ by the inductive hypothesis, we only need to prove
\begin{align}\label{vjx>=0}
v_j'\geq 0\quad\hbox{in}\; [0, x_j).
\end{align}
Differentiating the equation in \eqref{vi_pb} we have
\begin{align}\label{vjx_eq}
-\LL_{c_j} ( v_j')=0\quad\hbox{in}\; (0,x_j).
\end{align}
Notice \eqref{vj'0>1} and $v_j'(x_j)=v_{j-1}'(x_j)\geq 0$ by \eqref{vjj},
we conclude from the maximum principle that \eqref{vjx>=0} holds.

Finally, fix any $x\in\R^{+}$. We now prove, for all $y\geq x$,
\begin{align}\label{vjxx}
v_j'(y)\leq \max\big\{v_j'(x),~1\big\}.
\end{align}
Combining \eqref{vix<=1} and \eqref{vjj}, we know \eqref{vjxx} is true for $y\in [x_j,+\infty)$.
Applying \eqref{vix<=1}, it is easy to check that the constant function $\max\big\{v_j'(x),~1\big\}$ is a super solution to \eqref{vjx_eq} in $[x,x_j]$, so \eqref{vjxx} holds for $y\in [x,x_j]$ as well.
The proof is complete.
\end{proof}
In the rest of this paper, we keep the notations $v_{i}$ and $x_{i}$ given in \lemref{lem:vi}.

We next prove that the inequality \eqref{vix<=1} holds strictly at $x=x_j$.
\begin{lemma}\label{lem:vj'<1}
For each $j=1,2,\cdots,n$, we have
\begin{align}\label{vj'<1}
v_{j-1}'(x_j)< 1.
\end{align}
\end{lemma}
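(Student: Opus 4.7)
I argue by contradiction, assuming $v_{j-1}'(x_j)=1$. Since $v_j\equiv v_{j-1}$ on $[x_j,\infty)$ by \eqref{vjj} and both functions lie in $C^{1+\al}(\R^+)$, $C^1$-matching at $x_j$ forces $v_j'(x_j)=v_{j-1}'(x_j)=1$, so the function $w:=v_j-v_{j-1}\ge 0$ satisfies $w(x_j)=w'(x_j)=0$ while being strictly positive throughout $(0,x_j)$.

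The first task is to identify the linear ODE that $v_{j-1}$ obeys just to the left of $x_j$. Starting from index $j-1$ and descending through successive coincidences exactly as in the cascade argument used within the proof of \lemref{lem:vi}, I select the smallest $k^*\ge 1$ such that $v_{j-k^*}>v_{j-k^*-1}$ on some interval $(x_j-\ep,x_j)$; such $k^*$ exists because the descent bottoms out at $v_0=g>v_{-1}$. The minimality of $k^*$ then forces $v_{j-1}\equiv v_{j-2}\equiv\cdots\equiv v_{j-k^*}$ on $(x_j-\ep,x_j)$, and on this same interval the classical equation $-\LL_{c_{j-k^*}}v_{j-k^*}=c_{j-k^*}$ holds.

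Reading the ODE $-\LL_{c_j}v_j=c_j$ at $x_j^-$ together with $v_j'(x_j)=1$ yields $V'':=v_j''(x_j^-)=\frac{2}{\si^2}\bigl(r v_j(x_j)-\mu\bigr)$; the same calculation applied to $v_{j-k^*}$ gives $v_{j-k^*}''(x_j^-)=V''$. I next show $V''<0$ strictly. By \eqref{vi} and $\cc\le\mu$ one has $v_j(x_j)\le\cc/r\le\mu/r$, so $V''\le 0$; should the equality $v_j(x_j)=\mu/r$ hold, then necessarily $\cc=\mu$ and $v_j(x_j)=\cc/r$, but $v_j'(x_j)=1>0$ would then push $v_j$ strictly above the bound $\cc/r$ immediately past $x_j$, contradicting \eqref{vi}.

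Because $v_{j-k^*}'(x_j)=1$ and $v_{j-k^*}''(x_j^-)=V''<0$, the derivative $v_{j-k^*}'$ is strictly decreasing near $x_j^-$, hence $v_{j-1}'=v_{j-k^*}'>1$ on some smaller interval $(x_j-\delta,x_j)$. Using the algebraic identity $\LL_{c_j}-\LL_{c_{j-k^*}}=k^*\Dc\,\p_x$, a direct calculation on this interval gives $-\LL_{c_j}w=k^*\Dc(v_{j-1}'-1)>0$, so $w$ is a strict classical supersolution of $\LL_{c_j}w=0$ on $(x_j-\delta,x_j)$. Since $w\ge 0$, $w>0$ on $(x_j-\delta,x_j)$, $w(x_j)=0$, and the zeroth-order coefficient $-r$ of $\LL_{c_j}$ is nonpositive, the classical Hopf lemma applied at the right boundary point $x_j$ produces $w'(x_j)<0$, contradicting $w'(x_j)=0$. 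The most delicate step is establishing the strict negativity of $V''$, which crucially combines the uniform bound $v_j\le\cc/r$ from \lemref{lem:vi} with the $C^1$-contact condition $v_j'(x_j)=1$.
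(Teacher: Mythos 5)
Your proof is correct, and it shares the paper's overall skeleton: argue by contradiction from $v_{j-1}'(x_j)=1$, cascade down to an index for which the classical ODE holds to the left of $x_j$, show the source term $k\Dc\,(v'-1)$ is nonnegative (in your case strictly positive) near $x_j$, and then apply the Hopf lemma to the difference at $x_j$ to contradict the vanishing $C^1$-contact derivative. Where you genuinely diverge is in how the sign of that source is obtained. The paper works globally on $[0,x_j]$: it differentiates the equation of $v_{j-k}$ twice, uses \eqref{vixx} and \eqref{vj'0>1} to get $v_{j-k}''(x_j)\leq 0$ and $v_{j-k}''(0)\leq 0$, and invokes the maximum principle to conclude $v_{j-k}''\leq 0$, hence $v_{j-k}'\geq 1$, on all of $[0,x_j]$, before applying Hopf to $\psi=v_j-v_{j-k}$ on $[0,x_j]$. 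You instead stay local: you compute $v_{j-1}''(x_j^-)=v_j''(x_j^-)=\frac{2}{\si^2}\big(rv_j(x_j)-\mu\big)$ from the ODEs and the contact condition $v_j'(x_j)=1$, and deduce strict negativity from $v_j\leq \cc/r\leq \mu/r$ (handling the equality case via monotonicity past $x_j$), which gives $v_{j-1}'>1$ on a small left neighborhood and suffices for Hopf there. Your route avoids the maximum principle on the twice-differentiated equation and the endpoint estimate at $x=0$, but it buys this at the cost of invoking the standing assumption $\cc\leq\mu$ from \eqref{crange}; note the paper's footnote stresses that \eqref{crange} is used only once (in \lemref{lem:vx<1}) and is in principle removable, so your argument introduces an extra dependence the paper deliberately avoids here. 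Your interval-based choice of the cascade index $k^*$ (rather than the paper's equality-of-values at the point $x_j$) is a harmless variant, justified by the dichotomy $v_{i}>v_{i-1}$ on $[0,x_i)$, $v_i=v_{i-1}$ on $[x_i,\infty)$ from \lemref{lem:vi}.
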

\begin{proof}
Suppose, on the contrary, suppose $v_{j-1}'(x_j )=1$.
Same as before, there must exist $1\leq k\leq j$ such that
$v_{j}(x_j )=v_{j-1}(x_j )=\cdots=v_{j-k}(x_j )>v_{j-k-1}(x_j ).$
By continuity, $v_{j-k}>v_{j-k-1}$ near $x_{j}$, so
$-\LL_{c_{j-k}} v_{j-k}- c_{j-k}=0$ by \eqref{vi_pb}, which confirms the continuity of $v_{j-k}''$ near $x_j $.
Note that both $v_j-v_{j-1}$ and $v_j-v_{j-k}$ attain their minimum value $0$ at $x_j$, so
\begin{align}\label{vj'=1}
v_j'(x_j )=v_{j-1}'(x_j )=v_{j-k}'(x_j )=1.
\end{align}
Together with \eqref{vixx} we obtain $v_{j-k}''(x_j )\leq 0$.
Moreover, it follows from \eqref{vj'0>1} and \eqref{vixx} that $v_{j-k}'(x)\leq v_{j-k}'(0)$ for all $x\geq0$; and consequently, $v_{j-k}''(0)\leq 0$.
Differentiating the equation of $v_{j-k}$ in $[0,x_j ]$ twice we obtain
$-\LL_{c_{j-k}} (v_{j-k}'')=0.$
By the maximum principle, we deduce
$v_{j-k}''\leq 0$ in $[0,x_j ]$, so $v_{j-k}'(x)\geq v_{j-k}'(x_j )=1$ for $x\in[0,x_j ]$.

Let $\psi=v_j-v_{j-k}.$
By the equations of $v_j$ and $v_{j-k}$, we see that $\psi$ satisfies
$-\LL_{c_j} \psi=k\Dc (v_{j-k}'-1)\geq0 \quad\hbox{in } [0,x_j ].$
Since $\psi$ attains its minimum value 0 at $x_j $, we conclude from the Hopf lemma that $ \psi' (x_j -)<0$, but this contradicts \eqref{vj'=1}.
The proof is thus complete.
\end{proof}

The next result establishes the monotonicity of the free boundaries $x_i$.
To this end, we define
$$u_i:=\frac{v_i-v_{i-1}}{\Dc },~ i=1,2,\cdots,n.$$

\begin{lemma}\label{lem:xi}
For each $i=1,2,\cdots,n,$ we have
\begin{align}\label{ui_pb}
\begin{cases}
\min\big\{\!-\LL_{c_i} u_i- v_{i-1}'+1,~u_i\big\}=0, & x>0.\medskip\\
u_i(0)=0,
\end{cases}
\end{align}
Moreover, 
\begin{align}\label{xi}
x_i\leq x_{i-1},~ i=1,2,3,\cdots,n.
\end{align}
\end{lemma}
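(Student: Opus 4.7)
The plan is to prove both assertions simultaneously by strong induction on $i$. The base case $i=1$ is essentially free: $x_0 = +\infty$ makes $x_1 \leq x_0$ automatic, and since $v_0 = g$ globally satisfies $-\LL_{\cc} g = \cc$, the operator identity $\LL_{c_1} = \LL_{\cc} + \Dc\,\partial_x$ immediately gives $-\LL_{c_1} u_1 = g' - 1$ on $(0, x_1)$, so the obstacle equation \eqref{ui_pb} follows by the case analysis sketched below.

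For the inductive step, I expect the main difficulty to lie in proving the monotonicity $x_i \leq x_{i-1}$; once that is in hand, the variational inequality for $u_i$ becomes a bookkeeping exercise. By the inductive hypothesis $x_{i-1} \leq x_{i-2} \leq \cdots \leq x_1$, telescoping the coincidence identities from \lemref{lem:vi} gives $v_{i-1} \equiv g$ on $[x_{i-1}, \infty)$. I would then argue by contradiction: assume $x_i > x_{i-1}$ and set $w := v_i - g$ on $[x_{i-1}, x_i]$. By \lemref{lem:vi}, $w > 0$ on $(x_{i-1}, x_i)$ with $w(x_i) = 0$; on this open interval \eqref{vi_pb} gives $-\LL_{c_i} v_i = c_i$, while $-\LL_{\cc} g = \cc$ together with $\cc - c_i = i\Dc$ yields $-\LL_{c_i} g = c_i + i\Dc(1 + g')$, so that
\begin{align*}
\LL_{c_i} w \;=\; i\Dc(1 + g') \;>\; 0 \qquad \text{on } (x_{i-1}, x_i).
\end{align*}
The Hopf boundary-point lemma applied to $-w$, which attains its maximum value $0$ at the right endpoint $x_i$, then delivers the strict inequality $w'(x_i^-) < 0$. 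On the other hand, $v_i \equiv v_{i-1} = g$ on $[x_i, \infty)$ forces $w \equiv 0$ to the right of $x_i$, so the $C^{1+\al}$ regularity of $v_i$ from \lemref{lem:vi} propagates $w'(x_i^+) = 0$ through $x_i$ to give $w'(x_i^-) = 0$, a contradiction. This establishes $x_i \leq x_{i-1}$.

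Granted the monotonicity, I would verify \eqref{ui_pb} by splitting on the coincidence set. Non-negativity $u_i \geq 0$ and the boundary value $u_i(0) = 0$ are immediate from \lemref{lem:vi}. On $\{u_i > 0\} = (0, x_i)$, the inclusion $x_i \leq x_{i-1}$ places us inside the non-coincidence regions of both $v_i$ and $v_{i-1}$, so $-\LL_{c_i} v_i = c_i$ and $-\LL_{c_{i-1}} v_{i-1} = c_{i-1}$; subtracting and using $\LL_{c_i} = \LL_{c_{i-1}} + \Dc\,\partial_x$ produces $-\LL_{c_i} u_i = v_{i-1}' - 1$ exactly. On $\{u_i = 0\} = [x_i, \infty)$, $u_i$ vanishes identically so $\LL_{c_i} u_i = 0$ a.e., and \eqref{vix<=1} applied at index $i$ supplies $v_{i-1}' \leq 1$ on $[x_i, \infty)$, yielding $-\LL_{c_i} u_i - v_{i-1}' + 1 = 1 - v_{i-1}' \geq 0$. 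Combining these two cases yields \eqref{ui_pb}.
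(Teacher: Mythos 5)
Your base case and the final ``bookkeeping'' step (deriving \eqref{ui_pb} once $x_i\le x_{i-1}$ is available, using \eqref{vix<=1} on the coincidence set) are fine and mirror the end of the paper's argument. The genuine gap is in your proof of the monotonicity $x_i\le x_{i-1}$, which is the heart of the lemma. The telescoping claim that $v_{i-1}\equiv g$ on $[x_{i-1},\infty)$ is false for $i\ge 3$: since the thresholds decrease, $v_{i-1}=v_{i-2}$ holds on $[x_{i-1},\infty)$, but $v_{i-2}=v_{i-3}$ holds only on the smaller set $[x_{i-2},\infty)$, so telescoping yields $v_{i-1}\equiv g$ only on $[x_1,\infty)$; on $(x_{i-1},x_{i-2})$ one has $v_{i-1}=v_{i-2}>v_{i-3}\ge\cdots\ge g$. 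Consequently, under your contradiction hypothesis you only know $v_i(x_i)=v_{i-1}(x_i)$, not $v_i(x_i)=g(x_i)$, so neither $w(x_i)=0$ nor $w\equiv 0$ on $[x_i,\infty)$ (hence $w'(x_i-)=0$) is justified, and the whole comparison with $g$ collapses.

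Even granting that identification, the PDE step has a sign error and the Hopf lemma is applied in the wrong direction. Correctly, $-\LL_{c_i}g=c_i+i\Dc\,(1-g')$, so on $(x_{i-1},x_i)$ one gets $\LL_{c_i}w=i\Dc\,(1-g')$, not $i\Dc\,(1+g')$. To extract $w'(x_i-)<0$ at a boundary \emph{minimum} of $w$ via Hopf you need $w$ to be a supersolution, $\LL_{c_i}w\le 0$, i.e.\ $g'\ge 1$ near $x_i$; but in the regime beyond the free boundary one expects $g'\le 1$ (cf.\ \eqref{vix<=1}), which gives $\LL_{c_i}w\ge 0$ — the subsolution direction, for which a nonnegative function may perfectly well attain its zero minimum with vanishing one-sided derivative, so no contradiction arises. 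This is precisely the difficulty the paper must work around: it proves $x_j\le x_{j-1}$ by introducing the auxiliary obstacle problem \eqref{wu_pb} for $\wu$, constructing the explicit supersolution $\ou$ through the terminal-value problem \eqref{ou_pb} with a double zero at $x_{j-1}$, invoking the strict inequality $v_{j-1}'(x_{j-1})<1$ of \lemref{lem:vj'<1} to get $u_{j-1}''>0$ near $x_{j-1}$, running a Hopf-plus-integration argument to show $\ou\ge 0$, and only then identifying $\wu=u_j$ by uniqueness. Your proposal bypasses all of this machinery, and the shortcut does not close.
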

\begin{proof}
By \lemref{lem:vi}, $x_i$ is the number such that $u_{i}(x)>0$ when $x<x_{i}$ and $u_{i}(x)=0$ when $x\geq x_{i}$.

By taking a difference between the equations of $v_1$ and $v_0$, we know \eqref{ui_pb} holds when $i=1$. Clearly, \eqref{xi} holds when $i=1$.
Now suppose \eqref{ui_pb} holds when $i=j-1$ ($2\leq j<n$). We are going to prove $u_j$ satisfies \eqref{ui_pb} and \eqref{xi} holds when $i=j$.

Suppose $\wu$ is the unique bounded solution to the variational inequality
\begin{align}\label{wu_pb}
\begin{cases}
\min\big\{\!-\LL_{c_j} \wu- v_{j-1}'+1,~\wu\big\}=0, & x>0,\medskip\\
\wu(0)=0,
\end{cases}
\end{align}
and let $\wx=\inf\big\{x>0\;|\;\wu(x)=0\big\}.$
It suffices to prove $u_j=\wu$, $x_j=\wx$ and $\wx\leq x_{j-1}$.

We first prove that
\begin{align}\label{vj-1'<=1}
v_{j-1}'(x)\leq 1 \quad \hbox{for all } x\geq \wx.
\end{align}
By virtue of \eqref{vixx}, it suffices to prove $v_{j-1}' (x^*)\leq 1$ for any $x^*>0$ such that $\wu(x^*)=0$.
Suppose, on the contrary, $v_{j-1}'(x^*) > 1$. By \eqref{wu_pb} we have
\begin{align}\label{uj''}
\frac{\si^2}{2}\wu''\leq -(\mu-c_j) \wu'+r \wu- v_{j-1}'+1.
\end{align}
Note that $\wu$ attains its minimum value $0$ at $x^*$, so $\wu(x^*)=\wu'(x^* )=0.$
Then the RHS of \eqref{uj''} is continuous and negative at $x^*$, so we have $\wu''<0$ in a neighborhood of $x^*$. This means that $\wu$ is strictly concave in that neighborhood, which contradicts that $\wu$ attains its minimum value at the inner point $x^*$ of the neighborhood. Hence, we proved \eqref{vj-1'<=1}.

We next prove that $\wu>0$ for $x<\wx$ and $\wu=0$ for $x\geq \wx$.
By the definition of $\wx$, it only needs to prove
\begin{align}\label{wu=0}
\wu(x)=0,~ x\in[\wx,+\infty).
\end{align}
Indeed by \eqref{wu_pb}, we have
$$
\begin{cases}
\min\big\{\!-\LL_{c_j} \wu- v_{j-1}'+1,~\wu\big\}=0, & x>\wx,\medskip\\
\wu(\wx)=0.
\end{cases}
$$
We thus conclude from \eqref{vj-1'<=1} and \lemref{lem:tech} that \eqref{wu=0} holds.

Next, we come to prove
\begin{align}\label{wx}
\wx\leq x_{j-1}.
\end{align}
To this end, 
let $\ou$ be the unique solution in $W^{2, p}([0,x_{j-1}])$ of the following ODE
\begin{align}\label{ou_pb}
\begin{cases}
-\LL_{c_j} \ou - v_{j-1}'+1=0, & 0<x<x_{j-1},\medskip\\
\ou (x_{j-1})=\ou' (x_{j-1})=0,
\end{cases}
\end{align}
and let $\ou=0$ for $x>x_{j-1}$. Then $\ou\in W^{2, p}(\R^+)$.
If we can prove that $\ou$ is a super solution to \eqref{wu_pb}, then \eqref{wx} follows.

Thanks to \eqref{vj'<1} and \eqref{vjj}, we can check that
$$-\LL_{c_j} \ou- v_{j-1}'+1=- v_{j-1}'+1=- v_{j-2}'+1\geq 0\quad \hbox{in}\; [x_{j-1}. \infty).$$
Together with \eqref{ou_pb}, we get
\begin{align}\label{Lou}
-\LL_{c_j} \ou- v_{j-1}'+1\geq 0,~ x>0.
\end{align}
So in order to prove $\ou$ is a super solution to \eqref{wu_pb}, it only needs to prove
\begin{align}\label{ou}
\ou(x)\geq 0,~ x\in[0, x_{j-1}].
\end{align}
The proof of \eqref{ou} is divided into the following four steps.

\textbf{Step 1:}
Note that \eqref{vj'<1} and the equation of $u_{j-1}$ imply
$\frac{\si^2}{2} u_{j-1}'' (x_{j-1}-)=- v_{j-2}' (x_{j-1})+1>0.$
So there is a small $\ep>0$ such that
\begin{align}\label{uj''>0}
u_{j-1}'' (x)>0,~ x\in (x_{j-1}-\ep, x_{j-1}).
\end{align}

\textbf{Step 2:}
Let
$w=\frac{\ou-u_{j-1}}{\Dc },~ x\in[0,x_{j-1}].$ 
We claim there is a small $\ep>0$ such that
\begin{align}\label{wix<0}
w'(x)< 0,~ x\in(x_{j-1}-\ep, x_{j-1}).
\end{align}
Indeed, from the equation of $u_{j-1}$ and $\ou$ we know
\begin{align}\label{wi_pb}
\begin{cases}
-\LL_{c_j} w=2 u_{j-1}', & 0<x<x_{j-1},\medskip\\
w(x_{j-1})=w'(x_{j-1})=0.
\end{cases}
\end{align}
Differentiating the equation in \eqref{wi_pb} and applying \eqref{uj''>0} we have
$$-\LL_{c_j} w'=2 u_{j-1}'' >0,~ x\in (x_{j-1}-\ep, x_{j-1}).$$
Since $w'(x_{j-1})=0$, if $w'(x)\geq 0$ for some $x\in(x_{j-1}-\ep, x_{j-1})$, then by the maximum principle and the Hopf Lemma, we have $w''(x_{j-1}-)<0$. But by the equation in \eqref{wi_pb},
$$\frac{\si^2}{2} w''(x_{j-1}-)=\Big(-2 u_{j-1}' -(\mu-c_j) w'+r w\Big)(x_{j-1})=0,$$
leading to a contradiction. So we established \eqref{wix<0}.

\textbf{Step 3:}
We get from \eqref{wix<0} and $w(x_{j-1})=0$ that, 
for some small $\ep>0$,
\begin{align}\label{w>0}
w(x)> 0,~ x\in(x_{j-1}-\ep, x_{j-1}).
\end{align}

\textbf{Step 4:}
We claim
\begin{align}\label{w>=0}
w(x)\geq 0,~ x\in[0, x_{j-1}].
\end{align}
This together with $u_{j-1}\geq 0$ will lead to the desired estimate \eqref{ou}.

Suppose, on the contrary, \eqref{w>=0} is not true.
Let
$$x^*=\sup\big\{x\in [0,x_{j-1}]\;\big|\; w(x)\leq 0\big\}.$$ 
Then $0<x^*<x_{j-1}$ by virtue of \eqref{w>0}.
By the continuity of $w$, we have $w(x^*)=0$. Integrating the equation in \eqref{wi_pb} in $[x^*, x_{j-1}]$, we have
$$\frac{\si^2}{2} w'(x^*)=-r \int_{x^*}^{x_{j-1}}w (x)\d x -2u_{j-1}(x^*) <0.$$
It follows $w(x^*+\ep)<w(x^*)=0$ for sufficiently small $\ep>0$, which contradicts to the definition of $x^*$.
Therefore, \eqref{w>=0} is true and the desired estimate \eqref{ou} is established.

It is left to prove $\wu=u_j$, which would imply that $\wx=x_j$ because $\wx$ and $x_j$ are the minimum roots for $\wu$ and $u_j$, respectively.

Let $\wv=v_{j-1}+\wu\Dc$. We come to prove that $\wv $ satisfies the same variational inequality as $v_j$, namely
\begin{align}\label{vi+1_pb}
\left\{
\begin{array}{ll}
\min\big\{\!-\LL_{c_j} \wv - c_j,~\wv -v_{j-1}\big\}=0, & x>0, \medskip\\
\wv (0)=0.
\end{array}
\right.
\end{align}
By the uniqueness of this variational inequality, we then have $\wv=v_j$ which is equivalent to the desired equation $\wu=u_j$.

Firstly, $\wv -v_{j-1}=\wu\Dc\geq0$ is clear. Secondly, since
\begin{gather}\label{vi+1_ineq}
-\LL_{c_{j-1}} v_{j-1} - c_{j-1} \geq 0,\\
\label{u_ineq}
\big(-\LL_{c_j} \wu- v_{j-1}'+1\big)\Dc \geq0,
\end{gather}
adding them up yields
$-\LL_{c_j} \wv - c_j \geq0.$
So
$$\min\big\{\!-\LL_{c_j} \wv - c_j,~\wv -v_{j-1}\big\}\geq0.$$
Finally, if $\wv (x)>v_{j-1}(x)$, namely $\wu (x)>0$ at some $x>0$, then $x\in (0,\wx)$. By \eqref{wu_pb}, we see \eqref{u_ineq} is an equation at $x$.
By \eqref{wx}, we have $x<\wx\leq x_{j-1}$ so that \eqref{vi+1_ineq} is also an equation at $x$. Consequently, $-\LL_{c_j} \wv- c_j=0$ at $x$ so that $\wv $ satisfies \eqref{vi+1_pb}.
This completes the proof.
\end{proof}


\begin{lemma}\label{lem:vix_ub}
There is a constant $K > 0$ that is independent of $n$ and $i$ such that
\begin{align}\label{vix_ub}
0\leq v_i' \leq K,~ i=0,1,\cdots,n.
\end{align}
\end{lemma}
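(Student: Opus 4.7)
The lower bound $v_i'\geq 0$ is already given by \eqref{vix}, so my goal is only to produce an upper bound on $v_i'$ that is independent of $n$ and $i$. The key reduction is to take $x=0$ in the concavity-like inequality \eqref{vixx}, which yields
\[
v_i'(y)\leq \max\{v_i'(0),\,1\},\qquad y\geq 0.
\]
Hence it suffices to bound $v_i'(0)$ by a constant depending only on $\mu,\sigma,r,\cc$. I will do this by dominating each $v_i$ from above by a single exponential function that vanishes at $0$.

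The candidate super-solution is
\[
\varphi(x):=\frac{\cc}{r}\bigl(1-e^{-\ga_*x}\bigr),\qquad \ga_*:=\frac{\mu+\sqrt{\mu^2+2\si^2 r}}{\si^2}.
\]
Here $\ga_*$ is the largest of the roots $\ga_i$ of $-\tfrac12\si^2\ga^2+(\mu-c_i)\ga+r=0$ as $c_i$ ranges over $[0,\cc]$. A direct computation gives
\[
-\LL_{c_i}\varphi-c_i=\frac{\cc}{r}e^{-\ga_*x}\Bigl[\tfrac12\si^2\ga_*^2-(\mu-c_i)\ga_*-r\Bigr]+(\cc-c_i),
\]
and by the choice of $\ga_*$ both the bracket and $\cc-c_i$ are non-negative for every $c_i\in[0,\cc]$. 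Moreover $\varphi(0)=0$ and, because $\ga_*\geq \ga$, the elementary inequality $1-e^{-\ga_*x}\geq 1-e^{-\ga x}$ gives $\varphi\geq g=v_0$.

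I will now prove by induction on $i$ that $v_i\leq\varphi$ on $\R^+$, which is the heart of the argument. The base case $i=0$ is the last observation above. For the step, assume $v_{i-1}\leq\varphi$ and let $w:=v_i-\varphi$ and $\Omega:=\{x>0:w(x)>0\}$. On $\Omega$ we have $v_i>\varphi\geq v_{i-1}$, so by the variational inequality \eqref{vi_pb} the equation $-\LL_{c_i}v_i-c_i=0$ holds, and combined with $-\LL_{c_i}\varphi-c_i\geq 0$ this yields $\LL_{c_i}w\geq 0$, i.e.\
\[
\tfrac12\si^2 w''+(\mu-c_i)w'-rw\geq 0\qquad\hbox{a.e.\ in }\Omega.
\]
On the finite part of $\partial\Omega$ (including $x=0$) we have $w=0$, while on the ``boundary at infinity'' we have $\limsup_{x\to\infty}w(x)\leq 0$ since $v_i\leq \cc/r=\lim_{x\to\infty}\varphi(x)$. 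Because the zero-order coefficient of the operator above is $-r<0$, the standard maximum principle (applied on each connected component of $\Omega$, using the bound at infinity for the unbounded component) forces $w\leq 0$ on $\Omega$, contradicting the definition of $\Omega$ unless $\Omega=\emptyset$. Thus $v_i\leq\varphi$.

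Putting the pieces together, $v_i(0)=\varphi(0)=0$ and $v_i\leq \varphi$ on $\R^+$ imply $v_i'(0)\leq \varphi'(0)=\tfrac{\cc}{r}\ga_*$, and the reduction from the first paragraph yields
\[
0\leq v_i'(y)\leq \max\Bigl\{\tfrac{\cc}{r}\ga_*,\,1\Bigr\}=:K,\qquad y\geq 0,\ i=0,1,\dots,n,
\]
with $K$ depending only on $\mu,\si,r,\cc$. The one delicate point is justifying the maximum principle on the possibly unbounded set $\Omega$; this is what forces the choice of a super-solution that converges at infinity to the common upper bound $\cc/r$, so that $\limsup w\leq 0$ at infinity is automatic and the standard Phragm\'en--Lindel\"of style argument for operators with strictly negative zero-order term goes through.
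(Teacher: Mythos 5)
Your proof is correct, and its skeleton is the same as the paper's: dominate $v_i$ from above by a fixed supersolution vanishing at the origin, deduce $v_i'(0)\leq$ (barrier)$'(0)$, and then propagate the bound to all $x$ via \eqref{vixx}. The difference lies in the barrier and in how domination is justified. The paper takes as barrier the explicit solution $\ov$ of the classical (unconstrained) dividend variational inequality $\min\{-\LL_{0}\ov,\ \ov_x-1\}=0$, checks $-\LL_{c_i}\ov-c_i=-\LL_0\ov+c_i(\ov_x-1)\geq 0$, and invokes the supersolution/comparison property of the obstacle problem \eqref{vi_pb} to get $v_i\leq\ov$ and hence $v_i'(0)\leq\ov'(0)$; you instead use the simpler exponential $\varphi=\frac{\cc}{r}(1-e^{-\ga_* x})$ (note your bracket is exactly $c_i\ga_*\geq 0$, so the supersolution inequality is immediate) and prove $v_i\leq\varphi$ by induction on $i$ together with a maximum principle on the set $\{v_i>\varphi\}$, using $v_i\leq\cc/r=\varphi(+\infty)$ and the strictly negative zero-order coefficient to handle the unbounded component. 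Your route is more self-contained (no appeal to the explicit piecewise formula from the dividend-barrier literature) at the cost of spelling out the comparison argument that the paper compresses into ``$\ov$ is a super solution''; in fact your induction makes explicit the point, left implicit in the paper, that the barrier must also dominate the obstacle $v_{i-1}$ for the comparison to apply. Both yield a constant $K$ depending only on $\mu,\si,r,\cc$, so the conclusion is the same.
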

\begin{proof}
The case $i=0$ is evident. We now consider the case $i=1,\cdots,n$.
The lower bound in \eqref{vix_ub} has already established in \eqref{vixx}.
To establish the upper bound, we notice that the following variational inequality (see \cite{Ta00})
\begin{align*} 
\begin{cases}
\min\big\{\!-\LL_{0} \ov,~\ov_x-1\big\}=0, & x>0,\medskip\\
\ov(0)=0,
\end{cases}
\end{align*}
admits a unique solution
\[\ov(x)=
\begin{cases}
K_1 ({\rm e}^{\theta_2 x}-{\rm e}^{\theta_1 x}),& 0\leq x<x_\infty;\\
K_2+x, &x_\infty\leq x<+\infty,
\end{cases}
\]
where $\theta_1<0<\theta_2$ are the roots of
$-\frac{1}{2}\si^2\theta^2-\mu\theta+r=0,$ 
and
\begin{align*}
x_\infty &=\frac{2}{\theta_2-\theta_1}\ln\Big|\frac{\theta_1}{\theta_2}\Big|>0,\\
K_1&=(\theta_2e^{\theta_2x_\infty}-\theta_1e^{\theta_1x_\infty})^{-1}>0,\\
K_2&=K_1 ({\rm e}^{\theta_1x_\infty}-{\rm e}^{\theta_2x_\infty})-x_\infty<0.
\end{align*}
Clearly, $$-\LL_{c_i} \ov-c_i=-\LL_{0} \ov+c_i(\ov_x-1)\geq 0,$$
so $\ov$ is a super solution to \eqref{vi_pb}.
Since $v_i(0)=\ov(0)=0$, it follows that $v_i' (0)\leq \ov' (0)$.
Then we uniformly have $v_i'(x) \leq \max\big\{ \ov' (0),~1\big\}$ by \eqref{vixx}, completing the proof by taking, say, $K=\max\{ \ov' (0),1\}$.
\end{proof}

\begin{lemma}\label{lem:ui_b}
For each $p>1$, there is a constant $K_{p}>0$, which is independent of $N\geq 1$, $n $ and $i$, such that
\begin{align}\label{ui_b0}
0\leq u_i&\leq K_{p},\\
\label{ui_b}
|u_i|_{W^{2, p}([N-1,N])}&\leq K_{p},~ i=0,1,\cdots,n.
\end{align}
\end{lemma}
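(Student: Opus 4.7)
The plan is to prove the two estimates separately, first the uniform $L^\infty$ bound \eqref{ui_b0} by constructing an explicit constant super-solution, and then the local $W^{2,p}$ bound \eqref{ui_b} by invoking standard regularity for single-obstacle variational inequalities. The crucial input that makes everything uniform is \lemref{lem:vix_ub}: the forcing term $1-v_{i-1}'$ in \eqref{ui_pb} is already bounded in $L^\infty$ by a constant independent of $n$ and $i$.

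For \eqref{ui_b0}, let $K$ be the constant from \lemref{lem:vix_ub} and set $M:=(K+1)/r$. Since $M$ is constant,
\[
-\LL_{c_i} M-v_{i-1}'+1=rM-v_{i-1}'+1\geq (K+1)-K+1-1=1\geq 0,
\]
and $M\geq 0$, so $M$ is a bounded classical super-solution to the variational inequality \eqref{ui_pb}. As $u_i(0)=0\leq M$, applying the comparison principle for obstacle problems to $M-u_i$ (the same uniqueness argument underlying \lemref{lem:tech}) yields $u_i\leq M$ on $\R^+$. Combined with $u_i\geq 0$ established earlier, this gives \eqref{ui_b0} with a constant independent of $n$ and $i$.

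For \eqref{ui_b}, I would appeal to the standard $W^{2,p}$ regularity estimate for obstacle problems with a linear second-order operator, smooth obstacle $\psi\equiv 0$, and $L^\infty$ right-hand side (see, e.g., the classical Brezis--Kinderlehrer / Caffarelli theory). Since the coefficients of $\LL_{c_i}$ are constants with $c_i\in[0,\cc]$, the local constant in the estimate depends only on $p,\sigma,\mu,\cc,r$ and not on $i,n$, or the position $N$. Concretely, for each $N\geq 2$ one has
\[
|u_i|_{W^{2,p}([N-1,N])}\leq C_p\Big(|u_i|_{L^\infty([N-\tfrac32,N+\tfrac12])}+|1-v_{i-1}'|_{L^p([N-\tfrac32,N+\tfrac12])}\Big),
\]
and the right-hand side is uniformly controlled by \eqref{ui_b0} and \lemref{lem:vix_ub}. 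For $N=1$ one uses the up-to-boundary version of the same estimate together with the Dirichlet data $u_i(0)=0$, which is admissible because $\{x=0\}$ is a smooth boundary point and the operator is uniformly elliptic.

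The main subtlety is ensuring that $K_p$ is genuinely independent of $N$. This is taken care of by translation invariance of $\LL_{c_i}$: shifting the interval $[N-1,N]$ does not change the coefficients, so the local Calder\'on--Zygmund constant is identical on every such interval. The other potentially delicate point---that $u_i$ is only a $W^{2,p}_{\mathrm{loc}}$-solution across the free boundary $\partial\{u_i>0\}$---is not an actual obstacle, since $W^{2,p}$ (rather than $C^2$) is exactly the natural regularity class delivered by the obstacle-problem theory, which is precisely what is asserted in \eqref{ui_b}.
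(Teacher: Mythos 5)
Your proposal is correct and follows essentially the same route as the paper: the paper rewrites \eqref{ui_pb} as the linear equation $-\LL_{c_i} u_i=(v_{i-1}'-1)\mathbf{1}_{\{x<x_i\}}$ with right-hand side uniformly bounded by \lemref{lem:vix_ub}, obtains \eqref{ui_b0} from the maximum principle (your constant super-solution comparison is the same idea), and gets \eqref{ui_b} from the standard $L^p$ estimate, whose constant is independent of $N$, $n$, $i$ for exactly the translation-invariance and constant-coefficient reasons you give. (Minor cosmetic point: $rM-v_{i-1}'+1\geq 2$, not $1$, but only nonnegativity is needed.)
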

\begin{proof}
We can rewrite the problem \eqref{ui_pb} as
\begin{align*} 
\begin{cases}
-\LL_{c_i} u_i=( v_{i-1}' -1)\mathbf{1}_{\{x<x_i\}}, & x>0,\medskip\\
u_i(0)=0,
\end{cases}
\end{align*}
By \eqref{vix_ub}, we know that $|( v_{i-1}' -1)\mathbf{1}_{\{x<x_i\}}|$ is uniformly bounded. Applying the maximum principle, we obtain \eqref{ui_b0}. Then the $L^p$ estimation gives \eqref{ui_b} (See e.g., \cite[Theorem 9.13 on page 239]{TG11}). 
\end{proof}

By the Sobolev embedding theorem we also have
\begin{corollary}\label{cor:uix_b}
For each $0<\al<1$, we have $ u_i' \in C^\al(\R^+)$ and
\begin{align}\label{ui_b1}
| u_i' |_{C^\al(\R^+)}\leq K_{\al},~ i=0,1,\cdots,n.
\end{align}
where $K_{\al} > 0$ is a constant independent of $n$ and $i$.
\end{corollary}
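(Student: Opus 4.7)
The plan is a direct application of the Sobolev embedding theorem, lifting the uniform local $W^{2,p}$-estimate of \lemref{lem:ui_b} to a uniform $C^{1,\al}$-estimate on unit intervals, and then gluing these local Hölder bounds into a global one on $\R^+$.

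First, for the given $\al\in(0,1)$, I would choose $p>1$ large enough so that $2-\frac{1}{p}>1+\al$. On the unit interval $I_N:=[N-1,N]$ one then has the embedding $W^{2,p}(I_N)\hookrightarrow C^{1,\al}(\overline{I_N})$ with an embedding constant depending only on $p$ and $\al$; in particular it is independent of $N$, since all $I_N$ are translates of $[0,1]$ and the Sobolev norms are translation-invariant. Combining this embedding with \lemref{lem:ui_b} yields a constant $\widetilde{K}_\al$, independent of $n$, $i$, and $N$, such that
\begin{equation*}
|u_i'|_{C^\al(I_N)}\leq \widetilde{K}_\al,\qquad N\geq 1.
\end{equation*}

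To upgrade these local bounds to a uniform Hölder estimate on all of $\R^+$, I would use a standard overlap argument. For any $x,y\in\R^+$ with $x<y$, if $y-x\leq 1$ then $x$ and $y$ lie in a common doubled interval $I_N\cup I_{N+1}$ of length $2$, and the Sobolev embedding applied on this length-$2$ interval (again with an $N$-independent constant) gives $|u_i'(x)-u_i'(y)|\leq C|x-y|^\al$. If instead $y-x>1$, then one uses the pointwise bound $\|u_i'\|_{L^\infty(\R^+)}\leq\widetilde K_\al$ (which follows from the same embedding, now $W^{2,p}\hookrightarrow L^\infty$ on each $I_N$) to estimate $|u_i'(x)-u_i'(y)|\leq 2\widetilde K_\al\leq 2\widetilde K_\al\,|x-y|^\al$. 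Combining the two cases yields \eqref{ui_b1}.

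There is no real obstacle here: once the uniform local $W^{2,p}$-bound of \lemref{lem:ui_b} is in hand, the rest is bookkeeping, and the translation-invariance of the Sobolev norms on the unit intervals $I_N$ is precisely what makes both the embedding constant and the resulting Hölder estimate uniform in $N$ (as well as in $n$ and $i$).
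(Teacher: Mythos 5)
Your proposal is correct and follows essentially the same route as the paper, which simply invokes the Sobolev embedding theorem applied to the uniform $W^{2,p}([N-1,N])$ bounds of \lemref{lem:ui_b}. You merely spell out the standard details (translation-invariance of the embedding constant on unit intervals and the overlap/gluing argument for points far apart) that the paper leaves implicit.
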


\begin{lemma}\label{lem:ui_b2}
We have
\begin{align}\label{ui_b2}
\Big|\frac{u_i-u_{i-1}}{\Dc }\Big|\leq K,~ i=1,\cdots,n,
\end{align}
where $K > 0$ is a constant independent of $n$ and $i$.
\end{lemma}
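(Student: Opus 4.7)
The plan is to sandwich the difference $\phi := u_i - u_{i-1}$ between the constant barriers $\pm M\Dc$ using the maximum principle for the operator $-\LL_{c_i}$, whose zero-order coefficient $r>0$ gives the principle the usable form: $-\LL_{c_i}\psi\le 0$ in the interior and $\psi\le 0$ at the endpoints imply $\psi\le 0$ throughout.

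The first step is to derive the PDEs satisfied by $\phi$ on the two natural subintervals created by the ordering $x_i\le x_{i-1}$ from \lemref{lem:xi}. Using the operator identity $-\LL_{c_i} = -\LL_{c_{i-1}} - \Dc\,\p_x$ (from $c_{i-1}-c_i=\Dc$) together with the identity $\Dc\,u_{i-1}' = v_{i-1}' - v_{i-2}'$, subtracting the two defining ODEs $-\LL_{c_i}u_i = v_{i-1}'-1$ and $-\LL_{c_{i-1}}u_{i-1} = v_{i-2}'-1$ produces
\[
-\LL_{c_i}\phi = 2\Dc\,u_{i-1}' \;\text{on}\; [0,x_i), \qquad -\LL_{c_i}\phi = 1-v_{i-1}' + 2\Dc\,u_{i-1}' \;\text{on}\; [x_i,x_{i-1}),
\]
while $\phi\equiv 0$ past $x_{i-1}$. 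In both formulas $|u_{i-1}'|$ is uniformly bounded by a constant $K_\al$ from \corref{cor:uix_b}.

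For the upper direction I take $M := 2K_\al/r$ and apply the maximum principle to $\phi - M\Dc$ on $[0,x_i]$: the equation there gives $-\LL_{c_i}(\phi-M\Dc)\le 2K_\al\Dc - rM\Dc\le 0$, while the endpoint values $\phi(0)=0$ and $\phi(x_i)=-u_{i-1}(x_i)\le 0$ both lie below $M\Dc$. Hence $\phi\le M\Dc$ on $[0,x_i]$, and on $[x_i,\infty)$ the bound $\phi=-u_{i-1}\le 0\le M\Dc$ is automatic.

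The lower direction is the main obstacle, because the second PDE above contains the term $1-v_{i-1}'$ which a priori has no relation to $\Dc$. The crucial input to overcome this is inequality \eqref{vix<=1} established inside the proof of \lemref{lem:vi}: applied with $j=i$ it reads $v_{i-1}'(x)\le 1$ for all $x\ge x_i$, so $1-v_{i-1}'\ge 0$ on $[x_i,x_{i-1}]$ and the two formulas combine to give $-\LL_{c_i}\phi\ge -2K_\al\Dc$ on the whole of $[0,x_{i-1})$. Since $\phi(0)=0$ and $\phi(x_{i-1}) = u_i(x_{i-1})-u_{i-1}(x_{i-1}) = 0$ (using $u_i\equiv 0$ on $[x_i,\infty)$ and $u_{i-1}(x_{i-1})=0$), applying the maximum principle to $\phi + M\Dc$ on $[0,x_{i-1}]$ delivers $\phi\ge -M\Dc$ there; past $x_{i-1}$ the bound is again trivial. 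Combining the two directions yields $|u_i - u_{i-1}|\le K\Dc$ with $K := 2K_\al/r$ independent of $n$ and $i$.
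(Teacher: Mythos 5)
Your proof is correct, reaches the same constant $K=\frac{2}{r}\sup_i\sup_x|u_i'|$ via \corref{cor:uix_b}, and is built on the same index-shift identity $-\LL_{c_i}u_{i-1}-v_{i-1}'+1=-\LL_{c_{i-1}}u_{i-1}-v_{i-2}'+1-2\Dc\,u_{i-1}'$ as the paper, but the verification mechanism is genuinely different. The paper never looks at the free boundaries: it shows $u_{i-1}+K\Dc$ is a supersolution of the obstacle problem \eqref{ui_pb} for $u_i$ (and, ``similarly,'' that $u_{i-1}-K\Dc$ is dominated by $u_i$) and concludes by comparison for variational inequalities, which needs no information about where $u_i$ or $u_{i-1}$ vanish. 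You instead linearize completely: using $x_i\le x_{i-1}$ from \lemref{lem:xi} you compute the equation for $\phi=u_i-u_{i-1}$ piecewise --- on $[x_i,x_{i-1})$ the correct source of the identity is $u_i\equiv 0$ there rather than the ODE for $u_i$, which is indeed what your formula encodes --- and then apply only the classical maximum principle on intervals. The price of this route is the extra term $1-v_{i-1}'$ on the middle interval, which you handle with $v_{i-1}'\le 1$ on $[x_i,\infty)$; that fact is legitimately available (it is \eqref{vix<=1} in the proof of \lemref{lem:vi}, or follows from \lemref{lem:vj'<1} combined with \eqref{vixx}), whereas the paper's obstacle-comparison route does not need it. One small caveat: for $i=1$ the right endpoint is $x_0=+\infty$, so your lower-bound step should invoke the maximum principle for bounded functions on the half-line (in the spirit of \lemref{lem:tech}) instead of a two-endpoint comparison; this boundary index is equally glossed over in the paper's own proof, which invokes \eqref{ui_pb} at index $i-1=0$.
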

\begin{proof}
By \corref{cor:uix_b},
$K :=\frac{1}{r} \max\limits_{i} \sup\limits_{x\in \R^+}| u_i' |<\infty.$
Let $\ou_i=u_{i-1}+K \Dc $, then
\begin{align*}
-\LL_{c_i} \ou_i- v_{i-1}'+1
=&-\LL_{c_i} u_{i-1}+r K \Dc - v_{i-1}'+1\\
=&-\LL_{c_{i-1}} u_{i-1}- v_{i-2}'+1- u_{i-1}' \Dc+r K \Dc \\
\geq&- u_{i-1}' \Dc+r K \Dc \geq 0.
\end{align*}
Therefore, $\ou_i$ is a super solution to \eqref{ui_pb}, giving $u_i\leq \ou_i=u_{i-1}+K \Dc $.
Similarly, we can prove $u_i\geq u_{i-1}-K \Dc $. Therefore, \eqref{ui_b2} follows.
\end{proof}

\subsection{Proof of \thmref{thm:u}}\label{sec:solvability}

Now we are ready to prove \thmref{thm:u}. \medskip\\ 
For each $n\in Z^+$, rewrite $u_i(x)$ and $v_i(x)$ as $u^n_i(x)$ and $v^n_i(x)$, respectively.
Let $u^n(x,c)$ and $v^n(x,c)$ be the linear interpolation functions of $u^n_i(x)$ and $v^n_i(x)$, respectively.
Note \lemref{lem:ui_b}, \corref{cor:uix_b} and \lemref{lem:ui_b2} imply $\{u^n\}$ are uniformly bounded and Lipschitz continuous in $\Q$. Applying the Arzela-Ascoli theorem, there is $u\in C(\Q )$, and a subsequence $\{u^{n_k}\}\subset\{u^n\}$ such that, for each $L>0$, the sequence $\{u^{n_k}\}$ converges to $u$ in $C([0,L]\times [0,\cc])$.
Also, it is easy to prove that 
$u$ and $v$ satisfy the relationship \eqref{v_u}. Clearly, $u$ is nonnegative and bounded by \eqref{ui_b0}.
Moreover, \lemref{lem:ui_b} implies $u(\cdot,c), v(\cdot,c)\in \Wploc(\R^+)$ for any $c\in [0,\cc]$ and $p>1$. Also, \eqref{vi} and \eqref{ui_b0} imply $v$ and $u$ are bounded in $\Q $.

We come to prove that $(u,v)$ defined above satisfies, for every $c\in [0,\cc]$,
\begin{align}\label{u_pb2}
\min\big\{\!-\LL_c u-v_{x}+1,~u\big\}=0, \hbox{\; a.e.\;in $\R^+$.}
\end{align}
For each $c\in [0,\cc]$, there is a sequence $c^k=\cc-i_k \cc/n_k$ such that $c^k\to c$ and
$u^{n_k}_{i_k}(\cdot)\to u(\cdot,c)$ in $C[0,L]$
for any $L>0$.
Moreover, from \lemref{lem:ui_b} and \corref{cor:uix_b} we have
$u^{n_k}_{i_k}(\cdot)$ or its subsequence $\to u(\cdot,c)~\hbox{weakly in }W^2_{p }[0,L]\hbox{ and uniformly in }C^{1+\al}[0,L]$
for any $L>0$.
Let $k\to \infty$ in the inequality
$$-\LL_{c^k} u^{n_k}_{i_k}-\p_x v^{n_k}_{i_k-1}+1\geq 0$$
we get
$$-\LL_c u(\cdot,c)-v_x(\cdot,c)+1\geq 0,\hbox{\; a.e.\;in $\R^+$.}$$
Hence,
$$\min\big\{\!-\LL_c u-v_x+1,~u\big\}\geq 0,\hbox{\; a.e.\;in $\R^+$.}$$
On the other hand, suppose $u>0$ at some point $(x,c)\in\Q$. Fix this $c$. Then by continuity $u^{n_k}_{i_k}(x)> 0$ for $k$ large enough. Thanks to \lemref{lem:vi}, $u^{n_k}_{i_k}(y)> 0$ for all $y\leq x$.
It hence follows
$$-\LL_{c^k} u^{n_k}_{i_k}(y)-\p_x v^{n_k}_{i_k-1}(y)+1=0\hbox{\; for all $y\leq x$.}$$
Taking limit yields
$$-\LL_c u(y,c)-v_x(y,c)+1=0\hbox{\; for a.e. $y\leq x$.}$$
Hence, we proved that \eqref{u_pb2} holds.
The estimates \eqref{v}-\eqref{vxx} follow from \eqref{vi}-\eqref{vixx}, \eqref{vix_ub}, \eqref{ui_b0}, estimate \eqref{ub} from \eqref{ui_b0} and \eqref{ui_b1}, and estimate \eqref{ucb} from \eqref{ui_b2}.

In particular, the continuity of $u_{x}$, $v_{x}$ and $v_{c}$ follows from \eqref{ub} and \eqref{v_u}. Since $u$ and $v$ satisfy \eqref{v_u} and $u>0$ is an open set, \eqref{u_pb2} implies that $u$ and $v$ are smooth in the region $u>0$. It is then easy to verify that $u$ satisfies \eqref{u_pb} in the sense of Definition \ref{hjb2}
and $v$ satisfies \eqref{v_pb} in the sense of Definition \ref{hjb1}.
The proof of \thmref{thm:u} is thus complete.


\section{Properties of the dividend ratcheting free boundary $\sw(\cdot)$}\label{section6}
Let $(u,v)$ be given in \thmref{thm:u} throughout this section.
Recall that $\sw(\cdot)$ is given in \thmref{thm:freeboundary}:
\begin{align*}
\sw(c)=\inf\big\{x>0\mid v_{c}(x,c)=0\big\}=\inf\big\{x>0\mid u(x,c)=0\big\},~ c\in[0,\cc].
\end{align*}
We now establish the following \propref{smooth} and \propref{lip}.
\begin{proposition}\label{smooth}
It holds that $\sw(\cdot)\in C^\infty[0,\cc].$
\end{proposition}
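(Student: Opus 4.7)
The plan is a two-phase argument: apply the implicit function theorem (IFT) to obtain $\sw \in C^1$, then bootstrap to $C^\infty$. Because the $C^{1+\alpha}$ estimate \eqref{ub} provides the smooth-fit $u_x(\sw(c), c) = 0$ in addition to $u(\sw(c), c) = 0$, the equation $u = 0$ is second-order degenerate and unsuitable for IFT directly. Inside the open set $\{u > 0\}$, $u$ satisfies the linear ODE (in $x$, with $c$ a parameter)
\[-\LL_c u = g'(x) + \int_c^{\cc} u_x(x,s)\,\d s - 1,\]
so iteratively bootstrapping Schauder estimates from \eqref{ub} shows $u$ is $C^\infty$ in $(x,c)$ up to the free boundary from the left. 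I would extend $u$ across the free boundary by continuing the same ODE with Cauchy data $\tilde u = \tilde u_x = 0$ at $x = \sw(c)$, yielding a function $\tilde u$ smooth on both sides, and then apply IFT to
\[\Phi(x,c) := \tilde u_x(x,c) = 0.\]

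The crucial non-degeneracy is $\Phi_x(\sw(c),c) = \tilde u_{xx}(\sw(c),c) > 0$. Evaluating the ODE at $x = \sw(c)$ using $u = u_x = 0$ gives
\[\tfrac{\sigma^2}{2}\,u_{xx}(\sw(c)^-, c) \;=\; 1 - g'(\sw(c)) - \int_c^{\cc} u_x(\sw(c), s)\,\d s,\]
and strict positivity should follow from a standard non-degeneracy argument: if this quantity vanished at some $c_0$, then $u(\cdot,c_0)$ and all derivatives at $\sw(c_0)^-$ read off by differentiating the ODE (using $u \geq 0$, $u = u_x = 0$, and the sign $\phi \leq 0$ of the source $\phi(x,c) := g'(x) + \int_c^{\cc} u_x(x,s)\,\d s - 1$ on the coincidence set) would vanish, forcing $u(\cdot,c_0) \equiv 0$ on a left neighborhood of $\sw(c_0)$ and contradicting the minimality in $\sw(c_0) = \inf\{x > 0 : u(x,c_0) = 0\}$. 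The IFT then yields $\sw \in C^1[0,\cc]$ together with the explicit formula
\[\sw'(c) \;=\; -\frac{u_{xc}(\sw(c),c)}{u_{xx}(\sw(c)^-,c)}.\]

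For the bootstrap, the change of variables $y = x - \sw(c)$ straightens the free boundary to $\{y = 0\}$, and the transformed equation on $\{y < 0\}$ has coefficients depending smoothly on $\sw$ and $\sw'$. The nonlocal source $\int_c^{\cc} u_x(x,s)\,\d s$ couples only to values $s > c$, so the family can be resolved backward starting from $c = \cc$ where the boundary data $v(\cdot,\cc) = g$ is explicit. Schauder estimates then give that $\sw \in C^k$ implies $u \in C^{k+1}$ near $y = 0$, and the IFT formula above promotes $\sw$ to $C^{k+1}$; iterating yields $\sw \in C^\infty[0,\cc]$. The main obstacle, in my view, is the rigorous non-degeneracy step, which is sensitive to the nonlocal integral term and its coupling across different values of $c$; the bootstrap itself is routine once non-degeneracy and the IFT formula are in hand.
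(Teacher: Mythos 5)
Your outline has the same skeleton as the paper's proof (smooth fit $u=u_x=0$ at $\sw(c)$, non-degeneracy $u_{xx}(\sw(c)-,c)>0$ via the boundary identity $\frac{\sigma^2}{2}u_{xx}(\sw(c)-,c)=1-v_x(\sw(c),c)$, the formula $\sw'(c)=-u_{xc}(\sw(c),c)/u_{xx}(\sw(c),c)$, then a bootstrap), but the two steps that carry the weight are not actually established. First, the non-degeneracy. Your jet argument does not close: if $u_{xx}(\sw(c_0)-,c_0)=0$, differentiating the ODE gives $\frac{\sigma^2}{2}u_{xxx}=-v_{xx}$ at the boundary, and the available sign information ($v\le\cc/r$, $\cc\le\mu$) only yields $u_{xxx}\ge 0$ against $u_{xxx}\le 0$ from $u\ge0$; nothing forces the \emph{higher} derivatives to vanish, and even if every derivative vanished at $\sw(c_0)$ this would not force $u(\cdot,c_0)\equiv0$ on a left neighborhood without analyticity, so the contradiction with the definition of $\sw(c_0)$ is not reached. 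The paper proves instead the uniform strict bound $v_x(\sw(c),c)\le 1-\delta$ (\lemref{lem:vx<1}) by a \emph{global} argument on $[0,\sw(c)]$: $-\LL_c v_{xx}=0$ there with $v_{xx}\le0$ at both endpoints gives $v_{xx}\le0$, hence the source $v_x-1\ge0$, and the Hopf lemma at $\sw(c)$ contradicts $u_x(\sw(c),c)=0$; non-degeneracy \eqref{uxx>0} then follows. Note this presupposes that $\sw$ is finite, positive, monotone and continuous (\lemref{lem:x'c}, \lemref{lem:xcc}), facts your proposal never addresses but which are needed even to make uniform statements on $[0,\cc]$.

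Second, and more fundamentally, the claim that "Schauder estimates show $u$ is $C^\infty$ in $(x,c)$ up to the free boundary," and later that $\sw\in C^k$ implies $u\in C^{k+1}$ by Schauder after straightening, is unsupported: the equation is an ODE in $x$ with $c$ entering only as a parameter (and through the moving domain and the nonlocal term), so elliptic estimates give regularity in $x$ only, never in $c$. Differentiability of $u$ in $c$ up to the free boundary --- which you need merely to have $\Phi=\tilde u_x$ of class $C^1$ in $(x,c)$ so that the implicit function theorem applies --- is exactly the delicate point. In the paper it is obtained only \emph{after} proving that $\sw$ is Lipschitz by direct difference-quotient estimates inherited from the discrete approximation (\eqref{ub}, \eqref{ucb}, \lemref{lem:x'c1}), and then identifying $u_c$ with the solution $w$ of an auxiliary Dirichlet problem on $\Om$ (\lemref{lem:w_pb}) through the uniqueness \lemref{lem:wuu}, whose proof itself uses $|\sw'|\le K$. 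As stated, your route is circular: the IFT step needs $c$-regularity of $u_x$ near the moving boundary, and nothing in your outline produces it; the same objection recurs at every bootstrap stage, where the paper replaces "Schauder in $c$" by successive differentiation in $c$ (the functions $w^{(n)}$, Lemmas \ref{lem:gn}--\ref{lem:xn+1}) with the identification lemma invoked at each step. Repairing your proposal essentially requires importing these ingredients.
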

\begin{proposition}\label{lip}
There exit two constants $0<K_{1}<K_{2}$ such that $K_{1}\leq \sw'(c)\leq K_{2}$ for all $c\in[0,\cc]$.
\end{proposition}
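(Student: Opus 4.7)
The plan is to derive an explicit formula for $\sw'(c)$ by implicit differentiation of the free-boundary identities, and then bound the resulting quantities using the PDE satisfied by $u$ in the continuation region. Two identities hold along the free boundary: $u(\sw(c),c)=0$ by definition, and the smooth-fit condition $u_x(\sw(c),c)=0$, which follows from $u\equiv 0$ on $\{x\geq\sw(c)\}$ together with the uniform $C^{1+\alpha}$ regularity of $u(\cdot,c)$ from \eqref{ub}. On the continuation region $\{u>0\}$, the obstacle problem \eqref{u_pb} reduces via \eqref{v_u} to the linear elliptic equation $-\LL_c u=v_x-1$ with smooth coefficients and data, so $u$ is smooth up to the free boundary from the interior side. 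Since $\sw\in C^\infty[0,\cc]$ by \propref{smooth}, differentiating $u_x(\sw(c),c)\equiv 0$ in $c$ yields
$$\sw'(c)=-\frac{u_{xc}(\sw(c),c)}{u_{xx}(\sw(c),c)},$$
where the second derivatives are understood as one-sided limits from the continuation side $x<\sw(c)$.

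The first step is to bound the denominator. Evaluating $-\LL_c u=v_x-1$ at $x=\sw(c)$ and using $u=u_x=0$ there gives
$$\tfrac{1}{2}\sigma^2\,u_{xx}(\sw(c),c)=1-v_x(\sw(c),c).$$
The upper bound on $u_{xx}$ is immediate from $v_x\geq 0$ in \eqref{vx}. The matching lower bound $u_{xx}(\sw(c),c)\geq c_1>0$ is the nondegeneracy estimate $v_x(\sw(c),c)\leq 1-\delta$, uniform in $c$. I would obtain it by upgrading the qualitative strict inequality in \lemref{lem:vj'<1} to a quantitative one at the level of the discrete approximants $v_j^n$: the same Hopf-lemma contradiction argument, quantified using the uniform $W^{2,p}$ and $C^{1+\alpha}$ bounds of \lemref{lem:ui_b} and \corref{cor:uix_b}, produces an $n$-independent gap $\delta>0$, which survives the passage to the limit carried out in Section \ref{sec:solvability}.

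The second step is to bound the numerator. Differentiating $-\LL_c u=v_x-1$ in $c$, using $v_{xc}=-u_x$ and $\partial_c(\LL_c u)=\LL_c u_c-u_x$, shows that
$$\LL_c u_c=2\,u_x\quad\text{in }\{u>0\},$$
with $u_c(0,c)=0$ and $u_c(\sw(c),c)=0$ (the latter from differentiating $u(\sw(c),c)\equiv 0$ along the free boundary and using $u_x(\sw(c),c)=0$). With $u_x$ bounded by \eqref{ub} and the continuation domain smooth and bounded (since $\sw(\cc)<\infty$), classical Schauder estimates up to the smooth curve $\sw$ provide the upper bound $|u_{xc}(\sw(c),c)|\leq K'$, which combined with Step 1 gives $\sw'(c)\leq K_2$. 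The matching strict lower bound $-u_{xc}(\sw(c),c)\geq\delta'>0$, required for $\sw'(c)\geq K_1>0$, comes from a Hopf-lemma argument applied to $u_c$ on the continuation region: the sign structure of the RHS $2u_x$ (which cannot vanish on a neighborhood of $\sw(c)$, or else iteratively using the equation one would contradict the nondegeneracy from Step 1) forces $u_c$ to approach $0$ at the free boundary with a strictly negative one-sided $x$-derivative, and a compactness/continuity argument on $c\in[0,\cc]$ makes this bound uniform.

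The main obstacle is clearly the uniform non-degeneracy $v_x(\sw(c),c)\leq 1-\delta$: the proof of \lemref{lem:vj'<1} is by contradiction and gives no explicit rate, so the real work is extracting a quantitative Hopf-lemma estimate on $v_j^n$ that is uniform in the partition parameter $n$ and tracking the constants through the limiting procedure. The remaining ingredients—smoothness of $\sw$, smoothness of $u$ up to the free boundary from the continuation side, and the Schauder machinery for $u_c$—are already available from \propref{smooth} and Section \ref{sec:solvability}.
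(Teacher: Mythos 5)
Your skeleton is the right one — the identity $\sw'(c)=-u_{xc}(\sw(c),c)/u_{xx}(\sw(c),c)$, the relation $\tfrac12\si^2u_{xx}(\sw(c),c)=1-v_x(\sw(c),c)$, and bounds on $u_{xc}$ are exactly the ingredients in \eqref{C1}, \lemref{lem:uxx>0} and \lemref{lem:xms} — but two of your key steps are gaps. First, the uniform nondegeneracy $v_x(\sw(c),c)\leq 1-\delta$: your plan to ``quantify'' the contradiction/Hopf argument of \lemref{lem:vj'<1} at the level of the discrete approximants, uniformly in $n$, is not an argument. A proof by contradiction produces no rate, and nothing in your sketch controls how a quantitative Hopf constant would depend on $n$ and on the free boundary points $x_j$, nor why it would survive the limit of Section \ref{sec:solvability}. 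The paper never attempts this: it proves the pointwise strict inequality directly in the continuum (\lemref{lem:vx<1}, which first establishes $v_{xx}\leq 0$ on $[0,\sw(c)]$ by the maximum principle — using $r\,v(\sw(c),c)\leq\cc\leq\mu$, i.e.\ condition \eqref{crange} — and then applies the strong maximum principle and Hopf to $u$), and the uniform gap $\delta$ then comes for free because $c\mapsto v_x(\sw(c),c)$ is continuous on the compact interval $[0,\cc]$. You use exactly this compactness device elsewhere, so the missing step is replaceable, but as written it does not stand.

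Second, the strict bound $-u_{xc}(\sw(c),c)\geq\delta'>0$: your Hopf argument applied to $u_c$ presupposes a sign for $u_c$ near the free boundary (you need $u_c\geq 0$, $\not\equiv 0$, on $(\sw(c)-\ep,\sw(c))$ to conclude a strictly negative one-sided $x$-derivative at the minimum point $\sw(c)$), and your parenthetical about the source $2u_x$ not vanishing does not supply it: $u_c$ solves a two-point problem on $[0,\sw(c)]$ with vanishing data at both ends and a sign-changing right-hand side ($u_x\geq 0$ at $x=0$ while $u_x<0$ just inside $\sw(c)$, since $u_{xx}>0$ there), so the sign of $u_c$ near the boundary is precisely what must be proved. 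The paper avoids this entirely: \lemref{lem:xms} only proves $u_{xc}(\sw(c),c)\neq 0$, by assuming it vanishes, deducing $\p_x u_c>0$ and hence $u_c<0$ near the boundary via a Hopf-type step, and reaching a contradiction through an integral identity; the sign of $\sw'$ is then inherited from the monotonicity of $\sw$ (\lemref{lem:x'c}) through \eqref{C1}, a fact you never invoke. Finally, note that once $\sw\in C^1[0,\cc]$ with $\sw'>0$ pointwise, the uniform constants $K_1,K_2$ follow from compactness of $[0,\cc]$ alone — this is how the paper concludes — so the quantitative Schauder and nondegeneracy machinery you organize Steps 1--2 around is needed only for continuity of $\sw'$ and its pointwise positivity, not for uniformity.
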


\subsection{On the smoothness of the dividend ratcheting free boundary}\label{sec:smoothfreeboundary}

We prove \propref{smooth} in this section.
More precisely, we will use a bootstrap method to establish the smoothness of $\sw(\cdot)$.
To this end, we first prove several lemmas.

\begin{lemma} 
If $\sw(c)<\infty$ for some $c\in[0,\cc]$, then
\begin{align}\label{vxc<=1}
v_x(x,c)\leq 1\quad\hbox{for all } x\geq \sw(c).
\end{align}
\end{lemma}
\begin{proof}
Assume $\sw(c)<\infty$ for some $c\in[0,\cc]$.
To establish \eqref{vxc<=1}, it suffices to prove $v_x(\sw(c),c)\leq 1$ by \eqref{vxx}.
By the definition of $\sw(c)$, there is a positive sequence $\{x_k\}$ such that $u(x_k,c)=0$ and $\lim_{k}x_k=\sw(c)$.
Since $u(\cdot,c)$ attains its minimum value 0 at (the inner point) $x_k>0$, we have $u_x(x_k,c)=0$.
By \eqref{u_pb}, we have
\begin{align}\label{vxx<=}
\frac{\si^2}{2}u_{xx}(x,c)\leq -(\mu-c) u_x(x,c)+r u(x,c)+1-v_x(x,c).
\end{align}
Suppose $v_x(x_k,c)>1$. 
Then the RHS of \eqref{vxx<=} is continuous and negative at $x=x_k$, so we conclude $u_{xx}(\cdot,c)<0$ in a neighborhood of $x_k$. Hence, $u(\cdot,c)$ is strictly concave in the neighborhood. But this contradicts $u(\cdot,c)$ takes its minimum value 0 at the inner point $x_k$ of the neighborhood, so we have $v_x(x_k,c)\leq 1.$
Now, taking limit leads to the desired inequality $v_x(\sw(c),c)\leq 1$.
\end{proof}

\begin{lemma}\label{lem:xc}
For any $(x,c)\in\Q$, we have
$u(x,c)=0$ if $x\geq \sw(c)$ and $u(x,c)>0$ if $x< \sw(c)$.
Also,\begin{align}\label{xc>0}
\sw(c)>0,~ c\in[0,\cc].
\end{align}
\end{lemma}
\begin{proof}
By the definition of $\sw(\cdot)$, we have $u(x,c)>0$ when $x< \sw(c)$.
If $\sw(c)=\infty$, then $x< \sw(c)$ always holds. If $\sw(c)<\infty$, then
by virtue of \eqref{u_pb2}, \eqref{vxc<=1} and $u(\sw(c))=0$, we conclude from \lemref{lem:tech} that $u(x,c)=0$ if $x\geq \sw(c)$.

Notice that $v\geq g$, $v(0,c)=g(0)=0$ and $g'(0)>1$ in the complicated case, so
\begin{align}\label{vx0>1}
v_x(0,c)\geq g'(0)>1.
\end{align}
Because $\sw(c)\geq 0$ and \eqref{vxc<=1}, we get \eqref{xc>0}. 
The last assertion follows from \eqref{v_u}.
\end{proof}

\begin{lemma}\label{lem:x'c}
The curve $\sw(\cdot)$ is non-decreasing in $[0,\cc]$.
\end{lemma}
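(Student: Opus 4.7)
The plan is to lift the discrete monotonicity established in \lemref{lem:xi}---namely $x^n_i \leq x^n_{i-1}$---to the continuous limit via the approximation scheme of Section~\ref{sec:approximation}. Fix $0 \leq c_1 < c_2 \leq \cc$. Working along the subsequence $\{n_k\}$ from the proof of \thmref{thm:u} (along which $u^{n_k} \to u$ uniformly on compacts in $\Q$), I will choose indices $m_j(k) \in \{0,1,\ldots,n_k\}$ such that $c^{n_k}_{m_j(k)} := \cc - m_j(k)\cc/n_k \to c_j$ as $k \to \infty$. Since $\cc - c_1 > \cc - c_2$, we can arrange $m_1(k) > m_2(k)$ for $k$ large, and iterating \lemref{lem:xi} yields $x^{n_k}_{m_1(k)} \leq x^{n_k}_{m_2(k)}$ for all such $k$.

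Next I would establish uniform boundedness of the sequences $\{x^{n_k}_{m_j(k)}\}$ using \lemref{lem:vix_ub}, \lemref{lem:vj'<1}, and the decay $g'(x)\to 0$ as $x\to\infty$. Passing to a further common subsequence (which I continue to index by $k$), I may assume $x^{n_k}_{m_j(k)} \to \tilde x_j$, whence $\tilde x_1 \leq \tilde x_2$. The key step is then to identify $\tilde x_j = \sw(c_j)$. The inequality $\sw(c_j) \leq \tilde x_j$ is immediate: from $u^{n_k}_{m_j(k)}(x^{n_k}_{m_j(k)}) = 0$ and the uniform convergence $u^{n_k}_{m_j(k)} \to u(\cdot, c_j)$ on compact sets, passing to the limit gives $u(\tilde x_j, c_j) = 0$.

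The reverse inequality $\sw(c_j) \geq \tilde x_j$ is the hard part. I would argue by contradiction: if $\sw(c_j) < \tilde x_j$, then on $(\sw(c_j), \tilde x_j)$ the limit $u(\cdot, c_j)$ vanishes identically, yet each pre-limit $u^{n_k}_{m_j(k)}$ is strictly positive there and satisfies the PDE $-\LL_{c^{n_k}_{m_j(k)}} u^{n_k}_{m_j(k)} = v^{n_k}_{m_j(k)-1,x} - 1$. Combining the $C^{1+\al}$-local convergence $u^{n_k}_{m_j(k)} \to 0$ with the continuity of $v_x$ (from \thmref{thm:u}), passing to the limit in this PDE would force $v_x(y, c_j) \equiv 1$ on $(\sw(c_j), \tilde x_j)$. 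This should be inconsistent with a Hopf-type analysis at $\sw(c_j)$---specifically, the strict inequality $v_x > 1$ on the continuation region $\{u(\cdot, c_j) > 0\}$ (derivable from \eqref{vxx} and $v_x(0, c_j) > 1$) juxtaposed with the smooth-pasting value $v_x(\sw(c_j), c_j) = 1$ produces the required contradiction, so $\tilde x_j = \sw(c_j)$.

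Once $\tilde x_j = \sw(c_j)$ is in hand, taking limits in $x^{n_k}_{m_1(k)} \leq x^{n_k}_{m_2(k)}$ yields $\sw(c_1) \leq \sw(c_2)$; since every subsequence admits a further subsequence converging to the same limits, the conclusion is robust. The principal technical difficulty lies in the identification $\tilde x_j = \sw(c_j)$: the discrete free boundaries need not converge pointwise to the continuous one without a careful non-degeneracy argument at $\sw(c_j)$, and the contradiction step must exploit both the explicit structure of $\LL_c$ and the smooth pasting of $v_x$ at the free boundary. An alternative approach might bypass this by a direct comparison between $u(\cdot, c_1)$ and $u(\cdot, c_2)$ at the PDE level, though this would need to handle the nonlocal integral term in \eqref{-Lu>=0}.
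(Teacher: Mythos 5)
Your overall strategy (lifting the discrete monotonicity $x_i\leq x_{i-1}$ of \lemref{lem:xi} through the approximation of Section \ref{sec:approximation}) is genuinely different from the paper, which never returns to the discrete scheme: the paper argues directly with the limit pair $(u,v)$, noting that $-\LL_c v-c\geq 0$ with $\p_c(-\LL_c v-c)=\LL_c u+v_x-1\leq 0$, so the equality $-\LL_a v-a=0$ valid on $(0,\sw(a))$ propagates to all $c\in[a,\cc]$; if $\sw(b)<\sw(a)$ for some $b>a$, then on $(\sw(b),\sw(a))$ one has simultaneously $u(\cdot,b)=0$, $-\LL_b u-v_x+1=0$ and $-\LL_b v-b=0$, forcing $v_x\equiv 1$ and $rv\equiv\mu$ there, which is impossible. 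You correctly identify that the whole difficulty in your route is the identification $\tilde x_j=\sw(c_j)$, but your proposed resolution of it contains a genuine gap.

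Concretely, in the contradiction step you derive $v_x(\cdot,c_j)\equiv 1$ on $(\sw(c_j),\tilde x_j)$ and then claim a contradiction from ``$v_x>1$ on the continuation region'' together with ``smooth pasting $v_x(\sw(c_j),c_j)=1$.'' First, $v_x>1$ on $\{u(\cdot,c_j)>0\}$ does not follow from \eqref{vxx} and $v_x(0,c_j)>1$: \eqref{vxx} is only an upper bound on $v_x$ to the right and cannot prevent $v_x$ from dipping below $1$ before $\sw(c_j)$ (ruling that out needs a Hopf/strong-maximum-principle argument on $u$, as in the proof of \lemref{lem:vx<1}, not a one-line consequence of \eqref{vxx}). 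Second, even granting it, there is no contradiction: $v_x>1$ on $(0,\sw(c_j))$, $v_x(\sw(c_j),c_j)=1$ and $v_x\equiv 1$ on $(\sw(c_j),\tilde x_j)$ are mutually consistent with continuity of $v_x$; note also that the paper later shows $v_x(\sw(c),c)\leq 1-\delta$, so ``smooth pasting'' is not a fact you can lean on. The repair is to pass to the limit not only in the $u_i$-equation but also in $-\LL_{c_i}v_i-c_i=0$ on $(0,x_i)$, obtaining $-\LL_{c_j}v-c_j=0$ a.e.\ on $(0,\tilde x_j)$; combined with $v_x\equiv 1$ (hence $v_{xx}\equiv 0$) this gives $rv\equiv\mu$ on $(\sw(c_j),\tilde x_j)$, contradicting $v_x\equiv 1$ — i.e.\ exactly the paper's mechanism, at which point your detour through the discrete boundaries (plus the unaddressed uniform positive lower bound on $x^{n_k}_{m_j(k)}$ needed to ensure $\tilde x_j>0$, and the uniform upper bound you only sketch) buys nothing over the paper's short direct argument.
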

\begin{proof}
We argue by contradiction. Suppose there exist $0\leq a<b\leq \cc$ such that $\sw(b)<\sw(a)$. Then for any $x\in (0,\sw(a))$, by \lemref{lem:xc} and \eqref{v_pb}, we have
$-\LL_a v(x,a) - a=0.$
By \eqref{v_pb} and \eqref{u_pb},
$$-\LL_c v - c\geq 0,~ \p_c(-\LL_c v - c)=\LL_c u+v_x-1\leq 0, \hbox{ in $\Q$},$$
it hence follows
$$-\LL_c v(x,c) - c=0,~ -\LL_c u(x,c)-v_x(x,c)+1=0,~(x,c)\in (0,\sw(a))\times[a,\cc].$$
This particularly yields
$$0=-\LL_{b} v(x,b)-b,~ -\LL_b u(x,b)-v_x(x,b)+1=0,~ x\in (\sw(b),\sw(a)).$$
But we have $u(x,b)=0$ for all $x\in [\sw(b),\sw(a)]$, so the above leads to
$$0=-\LL_b u(x,b)-v_x(x,b)+1=-v_x(x,b)+1,~ x\in (\sw(b),\sw(a)),$$ 
and consequently,
$$0=-\LL_{b} v(x,b)-b=r v(x,b)-\mu,~ x\in (\sw(b),\sw(a)). $$
But the above two equations clearly cannot hold simultaneously,
completing the proof.
\end{proof} 

\begin{lemma}\label{lem:xcc}
The curve $\sw(\cdot)$ is bounded and continuous in $[0,\cc]$.
\end{lemma}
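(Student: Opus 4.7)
The plan is to establish boundedness by a Liouville-type argument at $c=\cc$, and then to use boundedness to handle left- and right-continuity separately; the right-continuity step is where the real work lies.

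\textbf{Boundedness.} Since $\sw$ is non-decreasing by \lemref{lem:x'c}, it suffices to show $\sw(\cc)<\infty$. If instead $\sw(\cc)=+\infty$, then $u(\cdot,\cc)>0$ on $(0,\infty)$, so \eqref{-Lu=0} (together with $\int_\cc^\cc u_x\,ds=0$) forces $u(\cdot,\cc)$ to be a bounded solution of the linear ODE
$$-\LL_\cc u(\cdot,\cc)=g'(x)-1,\qquad u(0,\cc)=0,\qquad x\in(0,\infty).$$
A direct calculation, using that $\ga$ is already a root of the characteristic polynomial of $\LL_\cc$, yields the particular solution $u_p(x)=\alpha x e^{-\ga x}-\tfrac{1}{r}$ with $\alpha=\tfrac{\cc\ga}{r\sqrt{(\mu-\cc)^2+2\sigma^2r}}>0$, so every solution has the form $u_p+Ae^{y_+x}+Be^{y_-x}$, with $y_+>0>y_-$ the roots of the homogeneous characteristic equation. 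The boundedness \eqref{ub} kills the $e^{y_+x}$ term ($A=0$), hence $u(x,\cc)\to u_p(\infty)=-\tfrac{1}{r}<0$, contradicting $u\geq 0$. So $\sw(\cc)<\infty$, and monotonicity gives $\sw(c)\leq\sw(\cc)$ for every $c\in[0,\cc]$.

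\textbf{Left continuity.} Fix $c_0\in(0,\cc]$ and $c_k\uparrow c_0$. By monotonicity $\sw(c_k)\uparrow L\leq\sw(c_0)$; continuity of $u$ applied to $u(\sw(c_k),c_k)=0$ gives $u(L,c_0)=0$, whence $L\geq\sw(c_0)$ by the infimum definition of $\sw(c_0)$, so $L=\sw(c_0)$.

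\textbf{Right continuity.} Fix $c_0\in[0,\cc)$ and $c_k\downarrow c_0$. By monotonicity and the boundedness proved above, $\sw(c_k)\downarrow L\in[\sw(c_0),\infty)$; assume for contradiction $L>\sw(c_0)$ and pick $x_0\in(\sw(c_0),L)$. For all large $k$ we have $x_0\in(0,\sw(c_k))$, so by \eqref{-Lu=0} the PDE $-\LL_{c_k}u(\cdot,c_k)=v_x(\cdot,c_k)-1$ holds on an open interval $I\ni x_0$ with $I\subset(\sw(c_0),L)$. Since $u(\cdot,c_0)\equiv 0$ on $I$, the continuity of $u$, $u_x$ and $v_x$ on $\Q$ from \thmref{thm:u} implies that $u(\cdot,c_k)\to 0$ and $u_x(\cdot,c_k)\to 0$ uniformly on any compact subinterval of $I$. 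To avoid dealing with the pointwise limit of $u_{xx}(\cdot,c_k)$ (for which only weak $L^p$ control is available), I integrate the PDE once in $x$ over $[y_1,y_2]\subset I$ and pass to the limit $k\to\infty$; the $u_x$, $u$, $u_{xx}$ contributions all vanish, leaving
$$\int_{y_1}^{y_2}\bigl(v_x(x,c_0)-1\bigr)\,dx=0$$
for every such $[y_1,y_2]$. By continuity of $v_x$, this forces $v_x(\cdot,c_0)\equiv 1$ on $(\sw(c_0),L)$, so $v_{xx}(\cdot,c_0)=0$ there, and the inequality \eqref{-Lv>=0} reduces to
$$0\leq-\LL_{c_0}v(\cdot,c_0)-c_0=-(\mu-c_0)+rv(\cdot,c_0)-c_0=rv(\cdot,c_0)-\mu,$$
giving $v(\cdot,c_0)\geq\mu/r$ on $(\sw(c_0),L)$. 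Combined with the global bound $v\leq\cc/r\leq\mu/r$ from \eqref{v} and \eqref{crange}, we conclude $v(\cdot,c_0)\equiv\mu/r$ on $(\sw(c_0),L)$, contradicting $v_x(\cdot,c_0)\equiv 1$. Hence $L=\sw(c_0)$. The main obstacle is precisely this limit step, resolved by first integrating the PDE in $x$ so that only the uniformly convergent quantities $u$, $u_x$ and the continuous $v_x$ appear.
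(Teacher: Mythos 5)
Your proof is correct, and its first half (boundedness via the explicit resonant solution of $-\LL_{\cc}u=g'-1$ at $c=\cc$, tending to $-1/r<0$) is exactly the paper's argument. The continuity half reaches the same contradiction as the paper ($v_x\equiv 1$ on a gap interval versus $v$ being forced to the constant $\mu/r$ there), but by a somewhat different route: the paper treats a jump at a single point $a$ and asserts that the equalities $-\LL_a v-a=0$ and $-\LL_a u-v_x+1=0$ pass to the limit on $(\sw(a-),\sw(a+))$, then reuses the computation of \lemref{lem:x'c}; you instead split into left and right continuity, and in the right-continuity step you avoid the delicate limit of second derivatives by integrating the $u$-equation in $x$ (so only $u$, $u_x$, $v_x$ appear, all continuous on $\Q$ by \thmref{thm:u}), which is arguably cleaner than the paper's unexplained limit passage. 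The price is that, having only the inequality \eqref{-Lv>=0} at $c_0$ rather than the equality, you must invoke $v\leq\cc/r\leq\mu/r$, i.e.\ a second use of the standing assumption \eqref{crange}; this is harmless here, but note the paper's footnote advertises that \eqref{crange} is used only in \lemref{lem:vx<1}, and the paper's own equality-based argument ($rv=\mu$ directly) does not need it, so your variant slightly weakens that remark while gaining rigor in the limit step.
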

\begin{proof} 
To prove $\sw(\cdot)$ is bounded in $[0,\cc]$, it suffices to prove $\sw(\cc)$ is finite
since $\sw(\cdot)$ is positive and non-decreasing. 
Suppose $\sw(\cc)=+\infty$. Then 
$$u(0,\cc)=0,~ -\LL_{\cc} u(x,\cc)=v_x(x,\cc)-1=g'(x)-1=\frac{\cc \ga}{r}e^{-\ga x}-1,~ x>0.$$
This ODE with Dirichlet boundary condition
admits a unique (power growth) solution
$$u(x,\cc)=\frac{1}{r}(e^{-\ga x}-1)+\frac{\ga \cc}{r(\si^2 \ga-\mu+\cc)} x e^{-\ga x}.$$
It thus follows $u(+\infty,\cc)=-1/r<0$, contradicting to $u\geq 0$. Hence, we proved that $\sw(\cdot)$ is bounded in $[0,\cc]$.
Now we prove the continuity.
Suppose there is one $a\in [0,\cc]$ such that $\sw(a-)<\sw(a+)$. Then $u(x,a)=0$, $-\LL_a v(x,a) - a=0$ and $-\LL_a u(x,a)-v_x(x,a)+1=0$ for $x\in (\sw(a-),\sw(a+))$. A similar argument to the proof of \lemref{lem:x'c} leads to a contradiction. So $\sw(\cdot)$ is continuous.
\end{proof}

To establish the smoothness of $\sw(\cdot)$, 
define a compact region
$$\Om=\Big\{(x,c)\in \Q\;\Big|\; 0\leq x\leq \sw(c)\Big\},$$
and a function space
$$\seta=\bigg\{\varphi:\Om \to \R \;\bigg|\; \frac{\p^m \varphi}{\p x^m}\in C(\Om) \mbox{\; for all integers $m\geq 1$}\bigg\}.$$
For $\varphi\in \seta$, define the maximum norm
\[|\varphi(\cdot,c)|_{C^m[0,\sw(c)]}=\max_{0\leq x\leq \sw(c)}\bigg\vert\frac{\p^m \varphi}{\p x^m}(x,c)\bigg\vert.\]
Define another function space
\begin{align*}
\setb=\Big\{\varphi:\Om\to \R \;\Big|\; & \varphi\in C(\Om), 
\mbox{ and $\varphi(\cdot,c)\in \Wp([0,\sw(c)]) $}\\
&\mbox{for each $c\in[0,\cc]$ and each $p>1$}\Big\}.
\end{align*}

\begin{lemma}\label{lem:Cinf}
Suppose $f\in \seta$ and $h\in C[0,\cc]$.
Then there is a unique $\varphi\in\setb$ such that
\begin{align}\label{varphi_pb}
\begin{cases}
-\LL_c \varphi(x,c)=f(x,c),~& (x,c)\in \Om,\medskip\\
\varphi(0,c)=0,~
\varphi(\sw(c),c)=h(c),~& c\in[0,\cc].
\end{cases}
\end{align}
Moreover, $\varphi\in \seta$.
\end{lemma}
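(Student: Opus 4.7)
The plan is to solve the two-point boundary value problem \eqref{varphi_pb} pointwise in $c$ by classical linear ODE theory, then use continuous dependence on parameters together with the continuity of $\sw(\cdot)$ established in \lemref{lem:xcc} to upgrade to joint regularity on $\Om$.

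For fixed $c\in[0,\cc]$, the equation $-\LL_c\varphi(\cdot,c)=f(\cdot,c)$ is a non-degenerate second-order linear ODE with constant-in-$x$ coefficients on the compact interval $[0,\sw(c)]$, which is non-trivial because $\sw(c)>0$ by \lemref{lem:xc}. Since $r>0$, the maximum principle applied to the homogeneous problem with zero Dirichlet data on $\{0,\sw(c)\}$ forces the trivial solution, so the Fredholm alternative yields a unique classical solution $\varphi(\cdot,c)\in C^2([0,\sw(c)])$, which in particular lies in $\Wp([0,\sw(c)])$ for every $p>1$. Explicitly, I would write
$$\varphi(x,c)=A(c)e^{\lambda_1(c)x}+B(c)e^{\lambda_2(c)x}+\varphi_p(x,c),$$
where $\lambda_1(c)<0<\lambda_2(c)$ are the roots of $-\tfrac{\si^2}{2}\lambda^2-(\mu-c)\lambda+r=0$ (smooth in $c$), $\varphi_p$ is a particular solution produced by variation of parameters, and $A(c),B(c)$ are determined by $\varphi(0,c)=0$ and $\varphi(\sw(c),c)=h(c)$ via a $2\times 2$ linear system whose determinant is $e^{\lambda_2(c)\sw(c)}-e^{\lambda_1(c)\sw(c)}$. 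This stays bounded away from zero uniformly in $c$ because $\lambda_1(c)\neq\lambda_2(c)$ and $\min_{c\in[0,\cc]}\sw(c)>0$ (by continuity of $\sw$ from \lemref{lem:xcc} and positivity from \lemref{lem:xc}).

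To obtain $\varphi\in\setb$, each ingredient in the explicit representation is continuous in $(x,c)$: the roots $\lambda_i(c)$ smoothly, the particular solution $\varphi_p$ via continuity of $f$, and the coefficients $A(c),B(c)$ via continuity of $\sw(\cdot)$ and $h$. For the upgrade $\varphi\in\seta$, I would bootstrap directly from the ODE, rewritten as
$$\p_x^2\varphi(x,c)=\frac{2}{\si^2}\Big[-(\mu-c)\p_x\varphi(x,c)+r\varphi(x,c)-f(x,c)\Big].$$
Successively differentiating in $x$ expresses $\p_x^m\varphi$ as a linear combination of $\p_x^k\varphi$ ($k<m$) and $\p_x^\ell f$ ($\ell\le m-2$) with coefficients smooth in $c$. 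Since $\varphi$ and $\p_x\varphi$ are continuous on $\Om$ (by differentiating the explicit formula in $x$) and $f\in\seta$, induction on $m$ yields $\p_x^m\varphi\in C(\Om)$ for every $m\ge 1$.

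The main technical point is controlling behavior up to the moving boundary $x=\sw(c)$. This is handled cleanly by the explicit representation, whose dependence on $c$ enters only through the smooth quantities $\lambda_i(c)$, the continuous $\sw(c)$, and the continuous boundary data $h(c)$ — crucially, no differentiation of $\sw$ itself is required, which is essential since the smoothness of $\sw(\cdot)$ is not yet available at this stage (indeed, it is the goal of \propref{smooth}, whose proof will invoke the present lemma as the bootstrap tool).
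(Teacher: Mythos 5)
Your proposal is correct, but it takes a genuinely different route from the paper. The paper dispenses with existence and uniqueness as "standard," and obtains $\varphi\in\seta$ by a stability argument: it first asserts uniform-in-$c$ bounds $\sup_{c}|\varphi(\cdot,c)|_{C^m[0,\sw(c)]}<\infty$ for every $m$, and then shows that $\psi=\varphi(\cdot,c+\Delta)-\varphi(\cdot,c)$ solves an equation whose right-hand side and boundary datum tend to $0$ as $\Delta\to0$ (using the monotonicity and continuity of $\sw$ and the boundedness of $\varphi_x$), so that $|\psi|_{C^m[0,\sw(c)]}\to0$ by regularity theory; this yields continuity of every $\p_x^m\varphi$ in $c$. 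You instead exploit the fact that $\LL_c$ has $x$-constant coefficients: the explicit representation $A(c)e^{\lambda_1(c)x}+B(c)e^{\lambda_2(c)x}+\varphi_p(x,c)$, with the $2\times2$ boundary system nondegenerate uniformly in $c$ (since $\lambda_1<0<\lambda_2$ and $\min_c\sw(c)>0$), gives existence, uniqueness, and joint continuity of $\varphi$ and $\varphi_x$ on $\Om$ directly, and then all higher $x$-derivatives come for free by bootstrapping through the ODE $\p_x^2\varphi=\tfrac{2}{\si^2}\bigl[-(\mu-c)\p_x\varphi+r\varphi-f\bigr]$ and its $x$-differentiated versions, using $f\in\seta$. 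Your route is more elementary and self-contained, reducing the whole joint-regularity issue to continuity of $\varphi$ and $\varphi_x$ only, but it is tied to the constant-coefficient structure; the paper's maximum-principle/$L^p$-estimate/embedding scheme is heavier yet more robust (it would survive variable coefficients) and is the same machinery reused later in Lemmas \ref{lem:wuu} and \ref{lem:x'c1}. Both arguments correctly avoid differentiating $\sw(\cdot)$, whose smoothness is not yet available at this stage.
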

\begin{proof}
We omit the proof of the existence and uniqueness as it is standard. 

By a standard argument of the regularity of equation, we have for each $m\geq 1$,
\begin{align}\label{cm}
\sup_{c\in[0,\cc]}|\varphi(\cdot,c)|_{ C^m[0,\sw(c)]}=\sup_{(x,c)\in\Om}\bigg\vert\frac{\p^m \varphi}{\p x^m}(x,c)\bigg\vert<\infty.
\end{align}
Fix any $c\in[0,\cc)$ and $m\geq 1$. Suppose $ \Delta\in [0, \cc-c]$ and set
$$\psi(x)=\varphi(x,c+\Delta)-\varphi(x,c).$$
By the monotonicity of $\sw(\cdot)$ and \eqref{cm}, $\varphi_x(x,c+\Delta)$ is uniformly bounded for all $x\in[0,\sw(c)]$ and $ \Delta\in [0, \cc-c]$. This together with $f\in \seta$
and
$$-\LL_{c} \psi(x)=f(x,c+\Delta)-f(x,c)-\varphi_x(x,c+\Delta)\Delta,~ x\in[0, \sw(c)],$$
implies
$$\lim\limits_{\Delta\to 0+}|-\LL_{c} \psi|_{ C^m[0,\sw(c)]}=0.$$
Also,
$$\psi(\sw(c))=\varphi(\sw(c),c+\Delta)-\varphi(\sw(c),c)=h(c+\Delta)-h(c)-
\int_{\sw(c)}^{\sw(c+\Delta)}\varphi_x(y,c+\Delta)\dy,$$
so it follows from the continuity of $h(\cdot)$ and $\sw(\cdot)$ and boundedness of $\varphi_x$ that
$$\lim\limits_{\Delta\to 0+}\psi(\sw(c))=0.$$
Applying the regularity theorem of the equation, we obtain (please refer to \cite[Section 6.3.2 on Part II]{Ev17}))
$$\lim\limits_{\Delta\to 0+}|\psi|_{ C^m[0,\sw(c)]}=0,$$
so $\varphi\in \seta$.
\end{proof}

\begin{lemma}\label{lem:vuinf}
It holds that $v$, $u\in \seta$.
\end{lemma}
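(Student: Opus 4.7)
The plan is to apply Lemma \ref{lem:Cinf} twice: once to conclude $v \in \seta$, and then once more to conclude $u \in \seta$.

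First, I would identify the boundary value problem satisfied by $v$ on the closed region $\Om$. On the interior, where $x < \sw(c)$, we have $v_c(x,c) < 0$ by \lemref{lem:xc}, so by \defref{hjb1} the equation $-\LL_c v = c$ holds in the classical sense. Since $v(\cdot,c) \in \Wploc(\R^+)$, this equation in fact holds almost everywhere on all of $[0,\sw(c)]$. The left boundary condition $v(0,c)=0$ is built into the problem. For the right boundary, observe that $v_c(x,c) = 0$ for all $x \geq \sw(c)$ (from \eqref{v_u} and $u(\cdot,c) \equiv 0$ on $[\sw(c),\infty)$), so $v(x,c) = v(x,\cc) = g(x)$ for $x \geq \sw(c)$; in particular $v(\sw(c),c) = g(\sw(c))$. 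Thus $v$ solves \eqref{varphi_pb} with $f(x,c) \equiv c$ and $h(c) = g(\sw(c))$. The input $f$ trivially lies in $\seta$ (all its $x$-derivatives vanish), and $h$ is continuous on $[0,\cc]$ because $\sw(\cdot)$ is continuous by \lemref{lem:xcc} and $g$ is smooth. Applying \lemref{lem:Cinf} produces a unique solution in $\setb$ which automatically lies in $\seta$; by the uniqueness part of \lemref{lem:Cinf}, this solution must coincide with $v$, so $v \in \seta$.

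Next, I turn to $u$. On the interior $\{x < \sw(c)\}$ we have $u(x,c) > 0$, so by \defref{hjb2} (combined with \eqref{u_pb2} and the relation \eqref{v_u}) the equation $-\LL_c u = v_x - 1$ holds in the classical sense, and by the $\Wploc$ regularity this equation holds a.e.\ on $[0,\sw(c)]$. The boundary conditions are $u(0,c) = 0$ and $u(\sw(c),c) = 0$, the latter by continuity from the fact that $u \equiv 0$ on $[\sw(c),\infty)$. Thus $u$ solves \eqref{varphi_pb} with $f(x,c) = v_x(x,c) - 1$ and $h(c) \equiv 0$. Now that $v \in \seta$ is established, differentiating once in $x$ shows $v_x \in \seta$ as well (all higher $x$-derivatives of $v_x$ are just higher $x$-derivatives of $v$, hence continuous on $\Om$). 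Consequently $f \in \seta$ and $h$ is trivially continuous, so a second application of \lemref{lem:Cinf} together with uniqueness gives $u \in \seta$.

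The main subtlety is the right boundary datum for $v$. One must be careful to read off $v(\sw(c),c) = g(\sw(c))$ from the structure of the solution rather than guess it; this uses the decomposition \eqref{v_u} together with the characterization of $\sw(c)$ as the first zero of $u(\cdot,c)$. Beyond this, the rest is a clean two-step bootstrap through \lemref{lem:Cinf}, relying on the continuity of the free boundary established in \lemref{lem:xcc}.
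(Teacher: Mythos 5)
Your overall strategy---two successive applications of \lemref{lem:Cinf}, first to $v$ with $f\equiv c$, then to $u$ with $f=v_x-1$ and $h\equiv 0$, using continuity of $\sw(\cdot)$ and uniqueness---is exactly the paper's argument. However, the step you single out as the main subtlety is wrong: it is not true that $v(x,c)=v(x,\cc)=g(x)$ for all $x\geq\sw(c)$. By \eqref{v_u}, $v(x,c)=g(x)+\int_c^{\cc}u(x,s)\ds$, and to make the integral vanish you need $u(x,s)=0$ for \emph{every} $s\in[c,\cc]$, i.e. $x\geq\sw(s)$ for all $s\in[c,\cc]$. Since $\sw(\cdot)$ is non-decreasing (\lemref{lem:x'c}), this forces $x\geq\sw(\cc)$, not merely $x\geq\sw(c)$: knowing $v_c(x,c)=-u(x,c)=0$ at the single parameter value $c$ gives no information about $v_c(x,s)$ for $s>c$. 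Indeed the paper only claims $v(x,c)=g(x)$ for $x\geq\sw(\cc)$, and for $c<\cc$ one has $v(\sw(c),c)=g(\sw(c))+\int_c^{\cc}u(\sw(c),s)\ds>g(\sw(c))$, because $u(\sw(c),s)>0$ whenever $\sw(c)<\sw(s)$. Consequently $v$ does \emph{not} solve \eqref{varphi_pb} with datum $h(c)=g(\sw(c))$, and your appeal to uniqueness in \lemref{lem:Cinf} would identify $v$ with a different function, so the argument as written fails.

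The repair is immediate and is precisely what the paper does: take $h(c)=v(\sw(c),c)$ itself as the right boundary datum; it is continuous on $[0,\cc]$ because $v$ is continuous on $\Q$ and $\sw(\cdot)$ is continuous (\lemref{lem:xcc}), and no explicit formula for it is needed. With this correction, the rest of your two-step bootstrap (the interior equation $-\LL_c v=c$ on $[0,\sw(c)]$ from \defref{hjb1}, the fact that $v\in\seta$ implies $v_x-1\in\seta$, and the boundary condition $u(\sw(c),c)\equiv 0$ for the second application) coincides with the paper's proof.
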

\begin{proof}
Since $-\LL_c v=c$ in $\Om$ and $v(\sw(\cdot),~\cdot)$ is continuous in $[0,\cc]$, we conclude from \lemref{lem:Cinf} that $v\in \seta$.
This further implies $-\LL_c u=v_x-1\in \seta$, which together with $u(\sw(c),c)\equiv 0$ and \lemref{lem:Cinf} confirms $u\in \seta$.
\end{proof}

We next further strengthen the estimate \eqref{vxc<=1} to the following.
\begin{lemma}\label{lem:vx<1}
there is a constant $0<\delta<1$, such that for all $c\in [0,\cc],$
\begin{align}\label{vx<1}
v_x(\sw(c),c)\leq 1-\delta.
\end{align}
\end{lemma}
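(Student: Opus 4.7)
The plan is to argue by contradiction using a Taylor expansion of $u(\cdot,c)$ at the free boundary, a continuous analogue of the Hopf-lemma argument used to prove \lemref{lem:vj'<1}. Since the free boundary $\sw(\cdot)$ is continuous on $[0,\cc]$ by \lemref{lem:xcc} and $v_x$ is continuous on $\Q$ by \thmref{thm:u}, the map $c\mapsto v_x(\sw(c),c)$ is continuous on the compact set $[0,\cc]$. Combined with the upper bound $v_x(\sw(c),c)\leq 1$ already provided by \eqref{vxc<=1}, it therefore suffices to prove the strict inequality $v_x(\sw(c),c)<1$ for every fixed $c\in[0,\cc]$; the uniform gap $\delta>0$ then follows from compactness.

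Fix $c\in[0,\cc]$, set $x^*=\sw(c)>0$ (using \eqref{xc>0}), and suppose for contradiction that $v_x(x^*,c)=1$. By \lemref{lem:vuinf}, both $v(\cdot,c)$ and $u(\cdot,c)$ are $C^\infty$ on $[0,x^*]$, so all one-sided derivatives invoked below exist and agree with the limits dictated by the PDEs. By \lemref{lem:xc}, $u(\cdot,c)>0$ on $(0,x^*)$ and $u(\cdot,c)\equiv 0$ on $[x^*,\infty)$; together with $u\in C^{1+\al}$ from \eqref{ub}, this forces $u(x^*,c)=u_x(x^*,c)=0$. Evaluating the equation $-\LL_c u=v_x-1$ (valid on $(0,x^*)$ by \eqref{-Lu=0} and extended to $x=x^*$ by smoothness) at $x^*$ gives $u_{xx}(x^*,c)=0$. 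Differentiating the same equation once in $x$ and evaluating at $x^*$ produces
\begin{equation*}
-\tfrac{\si^2}{2}\,u_{xxx}(x^*,c)=v_{xx}(x^*,c),
\end{equation*}
while from $-\LL_c v=c$ with $v_x(x^*,c)=1$ one obtains $v_{xx}(x^*,c)=\tfrac{2}{\si^2}\bigl(rv(x^*,c)-\mu\bigr)$. Combining these,
\begin{equation*}
u_{xxx}(x^*,c)=\tfrac{4}{\si^4}\bigl(\mu-rv(x^*,c)\bigr).
\end{equation*}

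By \eqref{v} and \eqref{crange}, $rv(x^*,c)\leq \cc\leq\mu$; moreover $rv(x^*,c)=\mu$ would force $v(\cdot,c)$ to attain its global maximum $\mu/r$ at the interior point $x^*>0$, giving $v_x(x^*,c)=0$ and contradicting $v_x(x^*,c)=1$. Hence $u_{xxx}(x^*,c)>0$ strictly. Since $u(\cdot,c)$ is $C^\infty$ on $[0,x^*]$ and vanishes to order three at $x^*$ from the left, Taylor expansion yields
\begin{equation*}
u(x,c)=\tfrac{1}{6}u_{xxx}(x^*,c)(x-x^*)^3+O\bigl((x-x^*)^4\bigr)\quad\text{as } x\uparrow x^*,
\end{equation*}
which is strictly negative for $x$ in a left neighborhood of $x^*$. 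This contradicts $u(\cdot,c)>0$ on $(0,x^*)$ and completes the proof of strictness.

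The main potential obstacle would be justifying that the third-order one-sided derivative of $u$ at the free boundary is genuinely controlled by the equation; without this one would be reduced to a discrete Hopf-lemma calculation as in \lemref{lem:vj'<1}. Here this difficulty is bypassed by the bootstrapped smoothness $v,u\in\seta$ from \lemref{lem:vuinf}, which makes the above Taylor calculation routine and allows us to promote the nonstrict bound \eqref{vxc<=1} to a uniform strict one.
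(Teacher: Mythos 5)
Your argument is correct, and it reaches the conclusion by a genuinely different route from the paper. Both proofs reduce, via continuity of $c\mapsto v_x(\sw(c),c)$ and compactness, to showing the strict inequality $v_x(\sw(c),c)<1$ and then argue by contradiction from $v_x(\sw(c),c)=1$; but the paper's argument is global on $[0,\sw(c)]$: it differentiates $-\LL_c v=c$ twice, checks $v_{xx}\leq 0$ at both endpoints (the right endpoint using $rv\leq\cc\leq\mu$, i.e.\ \eqref{crange}), invokes the maximum principle to get $v_{xx}(\cdot,c)\leq 0$ on the whole interval, hence $-\LL_c u=v_x-1\geq 0$ there, and finally applies the strong maximum principle/Hopf lemma to force $u_x(\sw(c)-,c)<0$, contradicting $u_x(\sw(c),c)=0$. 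You instead work purely locally at the free boundary: using the bootstrapped smoothness $u,v\in\seta$ from \lemref{lem:vuinf}, you read off $u=u_x=u_{xx}=0$ at $x^*=\sw(c)$, compute $u_{xxx}(x^*,c)=\tfrac{4}{\si^4}(\mu-rv(x^*,c))$ from the two equations, and note that $rv(x^*,c)=\mu$ is impossible (it would make $x^*$ an interior global maximizer of $v(\cdot,c)$, forcing $v_x(x^*,c)=0$), so the third derivative is strictly positive and the Taylor expansion makes $u$ negative just left of $x^*$, contradicting $u>0$ on $(0,\sw(c))$. Your route trades the maximum principle and Hopf lemma for one more derivative of $u$ at the boundary (legitimately available from \lemref{lem:vuinf}, which precedes this lemma, so there is no circularity) plus the extra strictness step $rv(x^*,c)<\mu$; the paper's route needs only $C^2$-level information and the non-strict endpoint sign, and yields as a by-product the concavity of $v(\cdot,c)$ on the continuation region under the contradiction hypothesis. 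Both hinge on the standing condition $\cc\leq\mu$ at exactly the analogous spot (cf.\ the paper's footnote that this is the only use of \eqref{crange}). Two cosmetic remarks: $u(\cdot,c)$ vanishes at $x^*$ together with its first and second derivatives (saying it ``vanishes to order three'' is loose wording), and the $O((x-x^*)^4)$ remainder is justified because all $x$-derivatives of $u$ are continuous on the compact set $\Om$; alternatively one can avoid Taylor altogether by integrating the sign of $u_{xxx}$ three times from $x^*$ leftwards.
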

\begin{proof}
Since $c\mapsto v_x(\sw(c),c)$ is a continuous function on $[0,\cc]$, it suffices to prove
that $v_x(\sw(c),c)<1$ for every $c\in[0,\cc]$.
Suppose, on the contrary, $v_x(\sw(c),c)=1$ for some $c\in[0,\cc]$.
We get from \lemref{lem:xc} and \eqref{v_pb} that
\begin{align}\label{Lv<=0}
\frac{\si^2}{2}v_{xx}(x-,c)=-c(1- v_x(x,c))-\mu v_x(x,c)+r v(x,c),~ x\leq \sw(c).
\end{align}
This together with \eqref{v} gives 
$$\frac{\si^2}{2}v_{xx}(\sw(c)-,c)=-\mu+r v(\sw(c),c)\leq -\mu+\cc\leq 0.$$
Also, by \eqref{vx0>1} and \eqref{vxx} we have $v_x(x,c)\leq v_x(0,c)$ for all $x\geq 0$, so $v_{xx}(0+,c)\leq 0$.
Differentiating \eqref{Lv<=0} w.r.t. $x$ twice gives
$$-\LL_c v_{xx}(\cdot,c)=0\quad\hbox{in }[0,\sw(c)].$$
Then applying the maximum principle yields
$v_{xx}(x,c)\leq 0$ for $x< \sw(c).$
This together with \eqref{u_pb} leads to
$$-\LL_c u(x,c)=v_x(x,c)-1 \geq v_x(\sw(c),c)-1=0,~ x\in [0,\sw(c)].$$
Since $u\geq0$ and $u(\sw(c),c)= 0$, the strong maximum principle gives $u_x(\sw(c),c)<0$. But $u(\cdot,c)$ gets its minimum value 0 at $\sw(c)$, so $u_x(\sw(c),c)=0$, contradicting the above. 
\end{proof}

\begin{lemma}\label{lem:uxx>0}
there is a constant $\delta>0$, such that for all $c\in [0,\cc],$
\begin{align}\label{uxx>0}
\frac{\si^2}{2} u_{xx}(\sw(c)-,c)\geq\delta.
\end{align}
Furthermore,
there is $0<\ep<\sw(0)$ such that 
\begin{align}\label{uxx>02}
\frac{\si^2}{2}u_{xx}(x-,c)\geq \frac{1}{2}\delta,~ u_{x}(y,c)-u_{x}(x,c)\geq \frac{\delta}{\si^2} (y-x)
\end{align}
for all $\sw(c)-\ep\leq x\leq y\leq \sw(c)$ and $c\in[0,\cc]$.
\end{lemma}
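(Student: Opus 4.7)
The plan is to derive a pointwise identity for $u_{xx}$ at the left limit of the free boundary by exploiting the PDE that $u$ satisfies in the non-coincidence region $\{x < \sw(c)\}$, and then invoke \lemref{lem:vx<1} to obtain the uniform lower bound \eqref{uxx>0}.

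Concretely, on $\{x < \sw(c)\}$, \lemref{lem:xc} combined with the complementarity in \eqref{u_pb} gives the equality $-\LL_c u = v_x - 1$, which rearranges to
\[\frac{\si^2}{2} u_{xx}(x, c) = -(\mu - c)\, u_x(x, c) + r\, u(x, c) + 1 - v_x(x, c).\]
Since $u(\cdot, c)\in C^{1+\al}$ by \eqref{ub} with $u\equiv 0$ on $[\sw(c),\infty)$, we have $u(\sw(c), c) = 0$ and $u_x(\sw(c)-, c) = 0$. Taking the left limit $x \to \sw(c)-$ in the display above yields the key identity
\[\frac{\si^2}{2} u_{xx}(\sw(c)-, c) = 1 - v_x(\sw(c), c).\]
\lemref{lem:vx<1} then bounds this quantity below by $\delta > 0$ uniformly in $c$, proving \eqref{uxx>0}.

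For the uniform neighborhood estimate \eqref{uxx>02}, I would use compactness. By \lemref{lem:vuinf}, $u,v \in \seta$, so the right-hand side of the rearranged equation for $u_{xx}$ extends continuously up to the graph of the free boundary, uniformly in $c\in[0,\cc]$. Since $\sw(\cdot)$ is continuous on the compact interval $[0,\cc]$ by \lemref{lem:xcc}, the graph $\{(\sw(c), c): c \in [0,\cc]\}$ is compact in $\Om$, and a standard uniform-continuity argument on $\Om$ produces an $\ep > 0$ (taken small enough that $\ep < \sw(0)$, which is positive by \eqref{xc>0} and the monotonicity from \lemref{lem:x'c}) such that the strict inequality $\frac{\si^2}{2} u_{xx}(x-, c) \geq \delta/2$ extends from the free boundary to the whole $\ep$-strip $[\sw(c) - \ep, \sw(c)]$ for every $c\in[0,\cc]$.

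Finally, the slope inequality in \eqref{uxx>02} follows directly by integrating the lower bound on $u_{xx}$: for $\sw(c) - \ep \leq x \leq y \leq \sw(c)$,
\[u_x(y, c) - u_x(x, c) = \int_x^y u_{xx}(s, c) \ds \geq \int_x^y \frac{\delta}{\si^2}\ds = \frac{\delta}{\si^2}(y - x).\]
The only subtle point in this argument is justifying that $u_{xx}$ extends continuously up to and including the free boundary uniformly in $c$, which is precisely the content of \lemref{lem:vuinf}; once that is in hand, the result is essentially a one-sided Taylor expansion combined with \lemref{lem:vx<1}.
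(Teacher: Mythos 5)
Your proposal is correct and follows essentially the same route as the paper's proof: the identity $\frac{\si^2}{2}u_{xx}(x-,c)=-(\mu-c)u_x+ru+1-v_x$ in $\Om$, evaluated at the free boundary where $u=u_x=0$ and combined with \lemref{lem:vx<1}, gives \eqref{uxx>0}, and the strip estimate plus integration (the paper uses the mean value theorem, which is the same step) gives \eqref{uxx>02}. No gaps.
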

\begin{proof}
For $(x,c)\in\Om$, we have
$$\frac{\si^2}{2}u_{xx}(x-,c)=-(\mu-c) u_x(x,c)+r u(x,c)+1-v_x(x,c).$$
By \lemref{lem:vuinf}, the RHS is continuous and bounded in $\Om$, so is the left hand side.
Since $u_x(\sw(c),c)=u(\sw(c),c)=0$, the claim \eqref{uxx>0} follows from the above equation and \lemref{lem:vx<1}.
Recall that $\sw$ is continuous, increasing and positive on $[0,\cc]$, so it follows
$\frac{\si^2}{2}u_{xx}(x-,c)\geq \frac{1}{2}\delta$ 
in the compact set
$\{(x,c)\mid \sw(c)-\ep\leq x\leq \sw(c),~c\in[0,\cc]\},$ if $\ep\in(0, \sw(0))$ is sufficient small.
The second estimate in \eqref{uxx>02} then follows from the first one and the mean value theorem.
\end{proof}

\begin{lemma}\label{lem:x'c1}
The function $\sw(\cdot)$ is Lipschitz continuous in $[0,\cc]$, i.e. there is a constant $K>0$ such that for any $0\leq a< b\leq \cc$,
\begin{align}\label{Dxc}
|\sw(a)-\sw(b)|\leq K |a-b|.
\end{align}
\end{lemma}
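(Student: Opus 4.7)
The plan is to evaluate the single quantity $u_x(\sw(a),b)$ in two complementary ways, and combine the resulting inequalities into the Lipschitz bound. Both evaluations rest on the smooth-pasting identity $u_x(\sw(c),c)=0$, which is automatic because $u(\cdot,c)\in C^{1+\al}$ by \eqref{ub} attains its minimum value $0$ at $\sw(c)$. By the continuity of $\sw$ established in \lemref{lem:xcc}, I may assume $b-a$ is small enough that $\sw(b)-\sw(a)\le\ep$, where $\ep$ is the constant from \lemref{lem:uxx>0}.

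For the lower bound, the nondegeneracy \eqref{uxx>02} combined with $u_x(\sw(b),b)=0$ yields
\[
u_x(\sw(a),b)=-\int_{\sw(a)}^{\sw(b)}u_{xx}(s,b)\,\d s\le -\frac{\delta}{\si^2}\bigl(\sw(b)-\sw(a)\bigr).
\]
For the matching upper bound, since $u_x(\sw(a),a)=0$,
\[
\bigl|u_x(\sw(a),b)\bigr|=\left|\int_a^b u_{xc}(\sw(a),s)\,\d s\right|\le M(b-a),
\]
where $M$ is a uniform bound on $|u_{xc}|$ over $\Om$. Combining the two lines gives $\sw(b)-\sw(a)\le(M\si^2/\delta)(b-a)$, which is the desired Lipschitz estimate.

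The hard part, and the main obstacle, is establishing the uniform bound $|u_{xc}|\le M$ on $\Om$. My approach is to differentiate the interior equation $-\LL_c u=v_x-1$ (which holds on $\{u>0\}$) with respect to $c$; using $v_c=-u$ from \eqref{v_u}, this produces the linear second-order ODE
\[
-\LL_c u_c=-2 u_x\quad\text{in }\{0<x<\sw(c)\},
\]
with uniformly bounded coefficients, right-hand side uniformly bounded by \eqref{ub}, and domains $[0,\sw(c)]$ whose lengths are uniformly bounded above by $\sw(\cc)$ and uniformly bounded below by $\sw(0)>0$ thanks to \lemref{lem:xcc} and \eqref{xc>0}. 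Given the boundary conditions $u_c(0,c)=0$ (from $u(0,c)\equiv 0$) and $u_c(\sw(c)-,c)=0$, a uniform-in-$c$ $W^{2,p}$ estimate together with Sobolev embedding then produces the required bound on $u_{xc}$.

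The delicate point is justifying the boundary value $u_c(\sw(c)-,c)=0$ in a non-circular manner, since \eqref{ucb} a priori only guarantees $|u_c|\le K$. One natural route is a bootstrap: first use the straightforward bound $u(\sw(a),b)\le K(b-a)$ together with the quadratic lower bound $u(\sw(a),b)\ge\frac{\delta}{2\si^2}(\sw(b)-\sw(a))^2$ coming from the nondegeneracy to obtain H\"older-$1/2$ continuity of $\sw$; this partial regularity then pins down the boundary behaviour of $u_c$ finely enough to force its trace to vanish uniformly at $x=\sw(c)$, closing the argument and upgrading H\"older-$1/2$ to the full Lipschitz conclusion.
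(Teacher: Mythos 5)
Your skeleton is the same as the paper's: evaluate $u_x(\sw(a),b)$ two ways, using the nondegeneracy \eqref{uxx>02} together with the smooth-pasting identity $u_x(\sw(\cdot),\cdot)\equiv 0$ to get the lower bound $\sw(b)-\sw(a)\lesssim |u_x(\sw(a),b)-u_x(\sw(a),a)|$, and then a bound of order $b-a$ on that same difference. The lower-bound half and the H\"older-$1/2$ observation are fine. The gap is the upper bound: it rests entirely on a uniform bound for $u_{xc}$ on $\Om$, and your route to it — differentiate $-\LL_c u=v_x-1$ in $c$ to get $-\LL_c u_c=-2u_x$ on $(0,\sw(c))$ with boundary data $u_c(0,c)=0$, $u_c(\sw(c)-,c)=0$, then uniform $W^{2,p}$ estimates — is not available at this stage. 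At this point in the development one only knows \eqref{ub} and \eqref{ucb}, i.e.\ bounds on $u_x$ and on $u_c$ (the latter essentially as a Lipschitz-in-$c$ bound inherited from \eqref{ui_b2}); neither the existence and continuity of $u_{xc}$ up to the free boundary, nor the validity of the differentiated equation for $u_c$ in a strong sense, nor the vanishing of the trace $u_c(\sw(c)-,c)$ is established. These facts are exactly the content of the paper's later \lemref{lem:w_pb}, whose proof goes through \lemref{lem:wuu} and there uses $|\sw'(\cdot)|\le K$ a.e., i.e.\ precisely the Lipschitz estimate \eqref{Dxc} you are trying to prove — so as written your argument is circular, and the closing ``bootstrap'' paragraph (H\"older-$1/2$ regularity of $\sw$ ``pins down the boundary behaviour of $u_c$'') contains no actual argument for the vanishing trace or for uniform elliptic estimates for $u_c$.

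The paper avoids all of this by never touching $u_{xc}$: it works with the difference quotient $\phi(x)=[u(x,b)-u(x,a)]/(b-a)$ on the fixed interval $[0,\sw(a)]$, where both equations $-\LL_b u(\cdot,b)=v_x(\cdot,b)-1$ and $-\LL_a u(\cdot,a)=v_x(\cdot,a)-1$ hold (by monotonicity of $\sw$), so that $\phi$ satisfies the honest ODE $-\LL_a\phi=[v_x(x,b)-v_x(x,a)]/(b-a)-u_x(x,b)$ with $\phi(0)=0$; writing the difference quotients as $\frac{1}{b-a}\int_a^b u_x(x,s)\ds$ and $\frac{1}{b-a}\int_a^b u_c(x,s)\ds$, the estimates \eqref{ub} and \eqref{ucb} bound the right-hand side and the boundary value $\phi(\sw(a))$ uniformly in $a,b$, and then the maximum principle, $L^p$ estimates and Sobolev embedding give $|\phi_x|\le K$ on $[0,\sw(a)]$, hence $|u_x(\sw(a),b)-u_x(\sw(a),a)|\le K|b-a|$. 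Replacing your $u_{xc}$ step by this difference-quotient argument (or else supplying an independent, non-circular proof that $u_c$ solves the Dirichlet problem above with uniform $W^{2,p}$ control) is what is needed to close the proof.
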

\begin{proof}
Since $\sw(\cdot)$ is continuous,
it suffices to consider the case $0<|\sw(a)-\sw(b)|<\ep$, where $\ep$ is given in \lemref{lem:uxx>0}.
Denote $$\phi (x)=[u(x,b)-u(x,a)]/(b-a).$$ We first prove there is a constant $K>0$, which is independent of $a$ and $b$, such that
\begin{align}\label{wuxc}
|\phi_x (x)| \leq K,~ x\in [0,\sw(a)].
\end{align}
Note that $\phi$ satisfies
\begin{align*}
\begin{cases}
-\LL_{a} \phi=[v_x(x,b)-v_x(x,a)]/(b-a)-u_x(x,b), & x\in [0,\sw(a)],\medskip\\
\phi(0)=0,~ \phi(\sw(a))=[u(\sw(a),b)-u(\sw(a),a)]/(b-a),\\
\end{cases}
\end{align*}
First write
\begin{gather*}
[v_x(x,b)-v_x(x,a)]/(b-a)=\frac{1}{b-a}\int_{a}^{b}u_{x}(x,s)\ds,\\ 
\phi(x)=[u(x,b)-u(x,a)]/(b-a)=\frac{1}{b-a}\int_{a}^{b}u_{c}(x,s)\ds,
\end{gather*}
then by \eqref{ub}, \eqref{ucb}, we see that $|-\LL_{a} \phi|$ and $|\phi(\sw(a))|$ are uniformly bounded, independent of $a$ and $b$. For each $p>1$, by the maximum principle and the $L^p$ estimation, there is a constant $K_{p}>0$, which is independent of $a$ and $b$, such that
$|\phi|_{W^{2, p}([0,\sw(a)])}\leq K_{p}.$
Apply the embedding theorem, then \eqref{wuxc} follows. Consequently,
$$|u_x(\sw(a),b)-u_x(\sw(a),a)|\leq K |b-a|.$$
Meanwhile, by recalling $|\sw(b)-\sw(a)|<\ep$, it follows from \eqref{uxx>02} that
$$|u_x(\sw(b),b)-u_x(\sw(a),b)|\geq \frac{1}{2}\delta |\sw(b)-\sw(a)|.$$
Since $u_x(\sw(\cdot),\cdot)\equiv 0,$ we have
$$ u_x(\sw(b),b)-u_x(\sw(a),b)=u_x(\sw(a),a)-u_x(\sw(a),b).$$
Combining the above three estimates, we obtain \eqref{Dxc}.
\end{proof}

\begin{lemma}\label{lem:wuu}
Suppose $\varphi \in \setb$, $\varphi_x$ is bounded a.e. in $\Om$,
and
\begin{align}\label{cp}
\begin{cases}
\displaystyle{-\frac{\p (\LL_c \varphi)}{\p c}- \varphi_x=0,} & \hbox{ a.e. in $\Om$},\medskip\\
\displaystyle{\frac{\d [ \varphi(\sw(c),c)]}{\dc}- \varphi_x(\sw(c),c)\sw'(c)=0,}& \hbox{ a.e. $c\in[0,\cc]$},\medskip\\
\varphi(0,c)=0, & c\in[0,\cc],\medskip\\
\varphi(x,\cc)=0, & x\in[0,\sw(\cc)].
\end{cases}
\end{align}
Then $\varphi=0$ in $\Om$.
\end{lemma}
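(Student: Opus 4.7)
The plan is to reduce the system (cp) to a one-dimensional homogeneous boundary value problem for the slice $\varphi_c(\cdot,c)$, apply a maximum principle to conclude $\varphi_c\equiv 0$, and finally integrate from the terminal condition $\varphi(x,\cc)=0$.

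First, I would rewrite the first equation of (cp). Since $\LL_c=\frac{\si^2}{2}\p_{xx}+(\mu-c)\p_x-r$ depends on $c$ only through the drift coefficient, one has the identity $\p_c(\LL_c\varphi)=\LL_c\varphi_c-\varphi_x$. Substituting into the first equation collapses it to
$$\LL_c\varphi_c=0\quad\hbox{a.e.\ in }\Om.$$
Next, I would expand the second equation of (cp) by the chain rule,
$$\frac{d[\varphi(\sw(c),c)]}{dc}=\varphi_x(\sw(c),c)\sw'(c)+\varphi_c(\sw(c),c),$$
which reduces that equation to $\varphi_c(\sw(c),c)=0$ a.e. Differentiating $\varphi(0,c)\equiv 0$ in $c$ yields $\varphi_c(0,c)=0$.

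Thus, for a.e.\ $c\in[0,\cc]$, the function $\nu(x):=\varphi_c(x,c)$ satisfies the homogeneous linear ODE
$$-\LL_c\nu=0\text{ on }[0,\sw(c)],\qquad \nu(0)=\nu(\sw(c))=0.$$
Because the zeroth-order coefficient of $\LL_c$ is $-r<0$, the strong maximum principle (applied to both $\nu$ and $-\nu$) forces $\nu\equiv 0$. Consequently $\varphi_c\equiv 0$ a.e.\ in $\Om$, and combining with the terminal data $\varphi(x,\cc)=0$ yields
$$\varphi(x,c)=\varphi(x,\cc)-\int_c^{\cc}\varphi_c(x,s)\ds=0,\quad (x,c)\in\Om.$$

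The principal technical obstacle will be justifying the $c$-differentiations used in the first two steps, since the space $\setb$ only guarantees regularity in $x$ and does not, a priori, give pointwise existence of $\varphi_c$. The hypotheses of (cp) already presuppose that $\p_c(\LL_c\varphi)$ and $\frac{d}{dc}[\varphi(\sw(c),c)]$ exist a.e., and I expect this, together with the boundedness of $\varphi_x$ and the Lipschitz continuity of $\sw(\cdot)$ from \lemref{lem:x'c1}, to suffice for validating both the operator identity $\p_c(\LL_c\varphi)=\LL_c\varphi_c-\varphi_x$ and the boundary chain rule rigorously, for instance through difference quotients in $c$ combined with the interior $W^{2,p}$ estimate built into the definition of $\setb$.
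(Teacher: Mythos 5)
Your reduction has a genuine gap: everything hinges on replacing the hypothesis on $\p_c(\LL_c\varphi)$ by the statement $\LL_c\varphi_c=0$, and on the chain rule $\frac{\d}{\dc}\varphi(\sw(c),c)=\varphi_x\sw'+\varphi_c$, but the lemma's hypotheses do not provide the objects these identities require. Membership in $\setb$ plus a.e.\ boundedness of $\varphi_x$ gives regularity in $x$ only; it does not give the existence of $\varphi_c$, let alone $\varphi_c(\cdot,c)\in \Wp$ in $x$ or the interchange $\p_c\p_{xx}\varphi=\p_{xx}\p_c\varphi$ needed for the operator identity and for applying the maximum principle to $\nu=\varphi_c(\cdot,c)$. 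The hypothesis is deliberately phrased in terms of the $c$-derivative of the composite function $(x,c)\mapsto\LL_c\varphi(x,c)$ and of the boundary trace $c\mapsto\varphi(\sw(c),c)$, precisely because in the intended application (\lemref{lem:w_pb}, with $\varphi=\wu-u$) one does \emph{not} yet know that $u_c$ exists as a function with $x$-regularity --- that is exactly what \lemref{lem:w_pb} is trying to establish, so assuming the regularity of $\varphi_c$ there is circular. Your closing remark that difference quotients in $c$ plus the $W^{2,p}$ bounds should validate the two differentiations is where the real work lies and it cannot be carried out from the stated hypotheses alone: the equations in \eqref{cp} hold only a.e.\ in $c$, and a.e.\ differentiability with bounded derivative does not even yield absolute continuity in $c$ (Cantor-type behaviour), nor any control of $x$-derivatives of the $c$-difference quotients of $\varphi$, which is what your slice-wise ODE argument needs.

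The paper avoids forming $\varphi_c$ altogether. It sets $\calE=\{c:\varphi(\cdot,c)\equiv0 \mbox{ on }[0,\sw(c)]\}$, notes $\cc\in\calE$, and argues by continuation: if $a=\sup\{c:c\notin\calE\}$, then on a short interval $I_\ep=[a-\ep,a]$ one integrates the two identities in \eqref{cp} backward from the slice $c=a$ (where $\varphi$ vanishes) to get $|\LL_c\varphi|\le\eta\ep$ in the strip and $|\varphi(\sw(c),c)|\le K\eta\ep$, with $\eta=\esssup|\varphi_x|$ over the strip; the maximum principle, $L^p$ estimates and the embedding theorem then give $\sup_{c\in I_\ep}|\varphi(\cdot,c)|_{C^{1+\al}[0,\sw(c)]}\le K_2\eta\ep$, and choosing $\ep<1/K_2$ forces $\eta=0$, hence $\varphi\equiv0$ on the strip, contradicting the definition of $a$. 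If you want to keep your slice-wise picture, you would have to first prove (not assume) that $\varphi$ has a $c$-derivative with the needed $x$-regularity, which in effect reproduces the work of \lemref{lem:w_pb}; the contraction-in-$c$ argument is the missing idea that makes the lemma usable under its weak hypotheses.
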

\begin{proof}
Denote $$\calE=\Big\{c\in [0,\cc]\;\Big|\; \varphi(x,c)=0 \hbox{\;for all\;} x\in [0,\sw(c)]\Big\}.$$
By the last equality in \eqref{cp}, $\cc\in \calE$.

Our target is to show $\calE=[0,\cc]$.
Now suppose $\calE\neq[0,\cc]$.
Then $a=\sup\big\{c\in[0,\cc]: c\notin\calE\big\}$ exists.
If $a=\cc$, then $a\in\calE$. Otherwise $a<\cc$, then $a+\ep\in\calE$ for all sufficiently small $\ep>0$. By the continuity of $\varphi$, we also have $a\in\calE$. Hence, we proved $[a,\cc]\subseteq\calE$.
Since $\calE\neq[0,\cc]$, we have $a>0$.

Let $I_\ep=[a-\ep, a]$, where $\ep\in(0,a)$ is a small constant to be chosen shortly. Also, let
$$\eta=\esssup\limits_{x\in [0,\sw(c)],~c\in I_\ep}|\varphi_x(x,c)|.$$
Since $ \varphi_x$ is bounded a.e. in $\Om$, we have $\eta<\infty$.
Thanks to the first equality in \eqref{cp}, we have
$$\sup\limits_{x\in [0,\sw(c)],~c\in I_\ep}\bigg|\frac{\p [\LL_c \varphi ]}{\p c}\bigg|\leq \eta.$$
Since $a\in\calE$, we have $\varphi(x,a)=0$ for all $x\in[0,\sw(a)]$. By integrating the above over $[c,a]$ for $c\in I_\ep$, it follows
$$\sup\limits_{x\in [0,\sw(c)],~c\in I_\ep}|\LL_c \varphi|\leq \eta \ep.$$
Moreover, from \eqref{Dxc} we obtain $|\sw'(\cdot)|\leq K$ a.e. for some positive constant $K>0$.
Since $\varphi(\sw(a),a)=0$, applying the second equality in \eqref{cp} leads to
\begin{align*}
\sup\limits_{c\in I_\ep}|\varphi(\sw(c),c)|=\sup\limits_{c\in I_\ep}\Bigg|-\int_{c}^{a} \varphi_x(\sw(s),s)\sw'(s) \ds\Bigg|
\leq K\eta \ep.
\end{align*}
Now using the maximum principle and $L^p$ estimation, there is a constant $K_1>0$, which is independent of $\eta$ and $\ep$, such that
$$\sup\limits_{c\in I_\ep}|\varphi(\cdot,c)|_{W^{2, p}[0,\sw(c)]}\leq K_1 \eta \ep.$$
Furthermore, the embedding theorem implies, for each $\al\in(0,1)$, there is a constant $K_{2}>0$, which is independent of $\eta$ and $\ep$, such that
$$\sup\limits_{c\in I_\ep}|\varphi(\cdot,c)|_{C^{1+\al}[0,\sw(c)]}\leq K_{2} \eta \ep.$$
Now choose $0<\ep<\min\{a, 1/K_{2}\}$. On recalling the definition of $\eta$ and that $\eta<\infty$, the above inequality leads to $\eta=0$.
This implies $I_\ep\subseteq \calE$, which contradicts the definition of $a$. Therefore, we must have $\calE=[0,\cc]$, completing the proof.
\end{proof}

\begin{lemma}\label{lem:w_pb}
there is a unique $w\in\setb$ such that
\begin{align}\label{w_pb}
\begin{cases}
-\LL_c w=-2 u_x, & \hbox{in \;} \Om,\medskip\\
w(0,c)=0,~
w(\sw(c),c)=0, &c\in [0,\cc].
\end{cases}
\end{align}
Moreover, $w\in \seta$ and $w=u_c$ in $\Om$.
\end{lemma}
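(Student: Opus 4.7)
These follow immediately from Lemma \ref{lem:Cinf}. Since $u\in\seta$ by Lemma \ref{lem:vuinf} and $\seta$ is closed under differentiation in $x$, the function $u_x$ also lies in $\seta$. Applying Lemma \ref{lem:Cinf} with $f:=-2u_x\in\seta$ and boundary datum $h(c):=0\in C[0,\cc]$ yields a unique $w\in\setb$ satisfying \eqref{w_pb}, and moreover $w\in\seta$.

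\textbf{Part 2 (Identification $w=u_c$).} The plan is to show that the forward difference quotient $(u(x,c+h)-u(x,c))/h$ converges uniformly on $[0,\sw(c)]$ to $w(x,c)$ as $h\to 0^+$. Fix $c\in[0,\cc)$ and $h>0$ small. By Lemma \ref{lem:x'c}, $[0,\sw(c)]\subseteq[0,\sw(c+h)]$, so on this interval $-\LL_s u(\cdot,s)=v_x(\cdot,s)-1$ holds for both $s=c$ and $s=c+h$ (the interior equation from \thmref{thm:u} extends to the boundary by continuity). Combining the identity $\LL_{c+h}=\LL_c - h\,\p_x$ with $v_{xc}=-u_x$ yields
\[
-\LL_c\!\left(\frac{u(\cdot,c+h)-u(\cdot,c)}{h}\right)
=-\frac{1}{h}\int_c^{c+h} u_x(\cdot,s)\,\ds - u_x(\cdot,c+h),
\]
whose right-hand side converges uniformly on $[0,\sw(c)]$ to $-2u_x(\cdot,c)$. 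For the boundary data, the quotient vanishes at $x=0$; at $x=\sw(c)$, Taylor-expanding $u(\cdot,c+h)$ about $\sw(c+h)$ and using $u(\sw(c+h),c+h)=u_x(\sw(c+h),c+h)=0$ together with boundedness of $u_{xx}$ on the compact set $\Om$ (from $u\in\seta$) and the Lipschitz continuity of $\sw(\cdot)$ from \lemref{lem:x'c1} gives $u(\sw(c),c+h)=O(h^2)$, hence the quotient's value at $x=\sw(c)$ is $O(h)\to 0$. Applying $L^p$ estimates and the maximum principle to the difference between the quotient and $w(\cdot,c)$ then forces uniform convergence, so $u_c(x,c)=w(x,c)$ in $\Om$; the case $c=\cc$ is handled by a symmetric backward-difference argument together with the continuity of $w$ in $\seta$.

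\textbf{Main obstacle.} The delicate step is the $O(h^2)$ estimate at the moving free boundary: it crucially combines the Lipschitz bound on $\sw(\cdot)$ from \lemref{lem:x'c1} with the $C^2$-regularity of $u$ in $x$ from \lemref{lem:vuinf}. Without this sharp quadratic bound the difference quotient would only have $O(1)$ boundary data and the limit procedure would collapse. A more robust alternative is to define $\varphi:=v-\tilde v$, where $\tilde v$ is built from $w$ in such a way that $u_c=w$ would force $\tilde v\equiv v$, and to verify the hypotheses of \lemref{lem:wuu} for $\varphi$; the uniqueness principle in \lemref{lem:wuu} then delivers $\varphi\equiv 0$ and hence $u_c=w$. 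This bypasses the explicit difference-quotient estimate but still relies on the Lipschitz continuity of $\sw(\cdot)$ supplied by \lemref{lem:x'c1}.
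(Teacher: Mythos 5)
Your Part 1 coincides with the paper's: both obtain existence, uniqueness and $\seta$-regularity of $w$ by feeding $f=-2u_x\in\seta$ (via \lemref{lem:vuinf}) and $h\equiv 0$ into \lemref{lem:Cinf}. For the identification $w=u_c$, however, your primary argument is genuinely different from the paper's. The paper never touches difference quotients: it sets $\wu(x,c)=u(x,\cc)-\int_c^{\cc}w(x,s)\ds$, checks that $\wu-u$ satisfies all hypotheses of the uniqueness principle \lemref{lem:wuu} (same $c$-differentiated equation, same boundary relation along $x=\sw(c)$ using $u(\sw(c),c)\equiv u_x(\sw(c),c)\equiv 0$, zero data at $x=0$ and $c=\cc$, bounded $x$-derivative), and concludes $\wu=u$, hence $u_c=\wu_c=w$. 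You instead show directly that the forward difference quotient $(u(\cdot,c+h)-u(\cdot,c))/h$ converges uniformly on $[0,\sw(c)]$ to $w(\cdot,c)$: the interior equation for the quotient is exactly as you write (using $\LL_{c+h}=\LL_c-h\p_x$, \eqref{v_u} and monotonicity of $\sw$ from \lemref{lem:x'c}), and the key boundary input is the quadratic vanishing $u(\sw(c),c+h)=O(h^2)$, which indeed follows from $u(\sw(c+h),c+h)=u_x(\sw(c+h),c+h)=0$, boundedness of $u_{xx}$ on the compact set $\Om$ ($u\in\seta$, \lemref{lem:vuinf}), and the Lipschitz bound of \lemref{lem:x'c1}; a barrier/maximum-principle comparison then gives the claim, and passing from a continuous one-sided derivative to genuine differentiability in $c$ is a standard (Dini-derivative) fact worth stating explicitly. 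What each approach buys: yours is more elementary and self-contained for this lemma (it does not need \lemref{lem:wuu} at all), but it leans on the sharp $O(h^2)$ estimate at the moving boundary; the paper's route trades that estimate for the abstract uniqueness lemma, which pays off later since the same device is reused verbatim for the higher-order bootstrap in \lemref{lem:wn_pb}, where a difference-quotient argument would have to be redone at each order. Your sketched ``robust alternative'' is essentially the paper's proof, except that it should be run at the level of $u$ (compare $u$ with $u(\cdot,\cc)-\int_c^{\cc}w\,\ds$), not at the level of $v$, so that the boundary condition in \eqref{cp} along $x=\sw(c)$ can be verified from $u(\sw(c),c)\equiv u_x(\sw(c),c)\equiv 0$; as written with $\varphi=v-\tilde v$ that verification is not spelled out, but this is a presentational rather than a substantive gap.
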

\begin{proof}
Thanks to \lemref{lem:vuinf}, we have $-2u_x\in \seta$. Applying \lemref{lem:Cinf}, we see
$w\in \seta$.
To prove $w=u_c$ in $\Om$, it suffices to prove $\wu=u$ as $w=\wu_c$, where
$$\wu(x,c)=u(x,\cc)-\int_c^{\cc} w(x,s)\ds.$$
Indeed, we first have
$$
\frac{\p (-\LL_c \wu)}{\p c}- \wu_x=-\LL_c \wu_c=-\LL_c w=-2 u_x.
$$
Also, since $-\LL_c u=v_x-1,$
$$
\frac{\p (-\LL_c u)}{\dc}- u_x=\frac{\p (v_x-1)}{\p c}- u_x=-2 u_x.
$$
It hence follows $$\frac{\p (-\LL_c \wu)}{\p c}- \wu_x=\frac{\p (-\LL_c u)}{\p c}- u_x.$$
Next, we have $\wu_c(\sw(c),c)\equiv w(\sw(c),c)\equiv 0$. 
Since $u(\sw(c),c)\equiv u_x(\sw(c),c)\equiv0$, we also have $u_{c}(\sw(c),c)\equiv0$. 
We thus have
\begin{align*}
\frac{\d \wu(\sw(c),c)}{\dc}-\wu_x(\sw(c),c)\sw'(c)=\frac{\d u(\sw(c),c)}{\dc}-u_x(\sw(c),c)\sw'(c)~ \hbox{ a.e. } c\in[0,\cc].
\end{align*}
Clearly,
$\wu_{x}(x,c)=u_{x}(x,\cc)-\int_c^{\cc} w_{x}(x,s)\ds$ 
is bounded.
Therefore,
it follows from \lemref{lem:wuu} that $\wu=u$, completing the proof.
\end{proof}

\begin{lemma}\label{xc1}
It holds that $\sw(\cdot)\in C^1[0,\cc]$ and
\begin{align}\label{C1}
\sw'(c)=-\frac{u_{xc}(\sw(c),c)}{u_{xx}(\sw(c),c)},~ c\in[0,\cc].
\end{align}
\end{lemma}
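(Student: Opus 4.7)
The key identity to exploit is the smooth-fit condition at the free boundary: since $u(\cdot,c)\ge 0$ attains its minimum $0$ at the interior point $\sw(c)>0$ (by \lemref{lem:xc}), we have both $u(\sw(c),c)\equiv 0$ and $u_x(\sw(c),c)\equiv 0$ for every $c\in[0,\cc]$. The plan is to differentiate the second of these identities along $c$ using an implicit-function-type argument, with the non-degeneracy $u_{xx}(\sw(c),c)\ge\delta>0$ from \lemref{lem:uxx>0} playing the role of the standard non-vanishing Jacobian.

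First I would fix $c\in[0,\cc)$ and a small $h>0$ with $c+h\le\cc$, and write
\begin{align*}
0 &= u_x(\sw(c+h),c+h)-u_x(\sw(c),c)\\
&= \big[u_x(\sw(c+h),c+h)-u_x(\sw(c),c+h)\big]+\big[u_x(\sw(c),c+h)-u_x(\sw(c),c)\big].
\end{align*}
By the mean value theorem applied in $x$, the first bracket equals $u_{xx}(\xi_h,c+h)\big(\sw(c+h)-\sw(c)\big)$ for some $\xi_h$ between $\sw(c)$ and $\sw(c+h)$; since $w=u_c\in\seta$ by \lemref{lem:w_pb}, the mixed derivative $u_{xc}=w_x$ is continuous on $\Om$, and the second bracket equals $\int_c^{c+h}u_{xc}(\sw(c),s)\ds$. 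Rearranging,
\begin{align*}
\frac{\sw(c+h)-\sw(c)}{h}=-\frac{1}{u_{xx}(\xi_h,c+h)}\cdot\frac{1}{h}\int_c^{c+h}u_{xc}(\sw(c),s)\ds.
\end{align*}
As $h\to 0^+$, the continuity of $\sw(\cdot)$ (\lemref{lem:xcc}) gives $\xi_h\to \sw(c)$; the continuity of $u_{xx}$ on $\Om$ (since $u\in\seta$ by \lemref{lem:vuinf}) together with $u_{xx}(\sw(c),c)\ge\delta>0$ (from \lemref{lem:uxx>0}) makes the first factor converge to $-1/u_{xx}(\sw(c),c)$, while the continuity of $s\mapsto u_{xc}(\sw(c),s)$ makes the averaged integral converge to $u_{xc}(\sw(c),c)$. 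An identical argument handles $h\to 0^-$, yielding the formula \eqref{C1} for both one-sided derivatives; they coincide, so $\sw'(c)$ exists pointwise on $[0,\cc]$.

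Finally, I would show $\sw'$ is continuous on $[0,\cc]$ to upgrade pointwise differentiability to $C^1$. The map $c\mapsto \sw(c)$ is continuous (\lemref{lem:xcc}), the maps $(x,c)\mapsto u_{xx}(x,c)$ and $(x,c)\mapsto u_{xc}(x,c)=w_x(x,c)$ are continuous on $\Om$ (from $u,w\in\seta$), and the denominator is bounded below by $\delta>0$ along the free boundary; hence the composition $c\mapsto -u_{xc}(\sw(c),c)/u_{xx}(\sw(c),c)$ is continuous on $[0,\cc]$, which is exactly $\sw'(c)$. Thus $\sw\in \DD^1[0,\cc]$ and \eqref{C1} holds. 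The main substantive obstacle is already disposed of before this lemma — namely, establishing the continuity and strict positivity of $u_{xx}$ at the free boundary (\lemref{lem:uxx>0}) and the regularity $u_c=w\in\seta$ (\lemref{lem:w_pb}), without which the classical implicit-function argument above would not close; given those, the remaining work is just a careful application of the mean value theorem.
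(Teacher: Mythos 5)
Your overall strategy is exactly the paper's: differentiate the smooth--fit identity $u_x(\sw(c),c)\equiv 0$ in $c$, using the continuity of $u_{xx}$ and $u_{xc}=w_x$ supplied by \lemref{lem:vuinf} and \lemref{lem:w_pb} and the non-degeneracy $u_{xx}(\sw(c)-,c)\geq\delta$ from \lemref{lem:uxx>0}; your difference-quotient computation is, if anything, a more explicit justification than the paper's one-line chain-rule identity $0\equiv u_{xx}(\sw(c),c)\sw'(c)+u_{xc}(\sw(c),c)$. For $h>0$ your decomposition is correct: by monotonicity of $\sw$ both the segment $[\sw(c),\sw(c+h)]\times\{c+h\}$ and the segment $\{\sw(c)\}\times[c,c+h]$ lie in $\Om$, where the needed derivatives are continuous.

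There is, however, a concrete flaw in the sentence ``an identical argument handles $h\to 0^-$.'' For $h<0$ one has $\sw(c+h)\leq\sw(c)$, so the frozen point $(\sw(c),c+h)$ lies \emph{outside} $\Om$, where $u(\cdot,c+h)\equiv 0$ by \lemref{lem:xc}; hence $u_x(\sw(c),c+h)=0$ and \emph{both} brackets in your decomposition vanish identically, so the identity degenerates to $0=0$ and yields no information about the difference quotient. Moreover the mean-value point $\xi_h$ would then lie in the region where $u_{xx}=0$, and $u_{xx}$ is \emph{not} continuous across the free boundary (it jumps from $\geq\delta$ inside $\Om$ to $0$ outside), so the limit $u_{xx}(\xi_h,c+h)\to u_{xx}(\sw(c)-,c)$ cannot be invoked there; note also that $u_{xx}(\sw(c),c)$ in \eqref{C1} must be read as the one-sided limit from inside $\Om$, as in \eqref{uxx>0}. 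The repair is simple: for the left derivative swap the roles of the two increments and write
\begin{align*}
0=\big[u_x(\sw(c+h),c+h)-u_x(\sw(c+h),c)\big]+\big[u_x(\sw(c+h),c)-u_x(\sw(c),c)\big],
\end{align*}
applying the integral representation $\int_{c+h}^{c}u_{xc}(\sw(c+h),s)\ds$ to the first bracket (valid since $\sw(c+h)\leq\sw(s)$ for $s\in[c+h,c]$) and the mean value theorem in $x$ at the level $c$ to the second (the segment $[\sw(c+h),\sw(c)]\times\{c\}$ lies in $\Om$). Passing to the limit as before gives the same formula for the left derivative, and with this correction your proof is complete and coincides in substance with the paper's.
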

\begin{proof}
By \lemref{lem:w_pb}, both $u_{xc}(\sw(c),c)$ and $u_{xx}(\sw(c),c)$ are continuous in $[0,\cc]$.
Notice $u_x(\sw(c),c)\equiv0,$ so we have
$$
0\equiv\frac{\d [u_x(\sw(c),c)]}{\dc}\equiv u_{xx}(\sw(c),c)\sw'(c)+u_{xc}(\sw(c),c).
$$
By virtue of \eqref{uxx>0}, the claim follows.
\end{proof}

Define $w^{(0)}(x,c)=u_{c}(x,c)$, 
$\sw^{(0)}(c)=\sw(c)$, and for $n=1,2,\cdots,$ 
$$w^{(n)}(x,c)=\frac{\p w^{(n-1)}(x,c)}{\p c},~\sw^{(n)}(c)=\frac{\d \sw^{(n-1)}(c)}{\dc}.$$

\begin{lemma}\label{lem:gn}
If $\sw(\cdot)\in C^n[0,\cc]$ and $w^{(n-1)} \in \seta$ for some $n\geq 1$, then we have $w^{(n)}(\sw(c),c)\in C[0,\cc]$.
\end{lemma}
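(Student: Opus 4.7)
My plan is to prove the lemma by finite induction using an iterated chain-rule identity. Define
\[
\Phi_k(c) := w^{(k-1)}(\sw(c), c), \qquad k=0,1,\ldots,n+1,
\]
with the convention $w^{(-1)} := u$, so that $\Phi_{n+1}(c) = w^{(n)}(\sw(c),c)$ is the function whose continuity is sought. Since $u(\sw(c), c) \equiv 0$ by definition of $\sw$, the base case gives $\Phi_0 \equiv 0 \in C^\infty[0, \cc]$. Whenever $\Phi_k$ is $C^1$, the chain rule and the definition $w^{(k)} = \partial_c w^{(k-1)}$ yield
\[
\Phi_k'(c) = w^{(k-1)}_x(\sw(c), c)\,\sw'(c) + w^{(k)}(\sw(c),c),
\]
or equivalently
\[
\Phi_{k+1}(c) = \Phi_k'(c) - w^{(k-1)}_x(\sw(c), c)\,\sw'(c).
\]

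I would then prove by induction on $k \in \{0, 1, \ldots, n+1\}$ the stronger statement $\Phi_k \in C^{n+1-k}[0,\cc]$. The base case $k = 0$ is trivial from $\Phi_0 \equiv 0$. For the inductive step from $k$ to $k+1$, the term $\Phi_k'$ lies in $C^{n-k}$ by the inductive hypothesis, while $\sw^{(j)}$ is continuous for all $j \leq n$ (since $\sw \in C^n$), and $w^{(k-1)}_x(\sw(\cdot),\cdot)$ has at least the required $C^{n-k}$-regularity in $c$. Hence $\Phi_{k+1} \in C^{n-k}$, completing the step. Applied with $k = n+1$, this yields $\Phi_{n+1} \in C^0[0,\cc]$, which is exactly the claim.

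The main obstacle is justifying the $c$-regularity of the evaluated $x$-derivative $c \mapsto w^{(k-1)}_x(\sw(c), c)$ at each intermediate level. The function class $\seta$ only encodes $(x,c)$-continuity together with smoothness in $x$; it does not deliver $c$-smoothness. To bridge this gap, I would differentiate the governing PDE $-\LL_c w^{(k-1)} = -(k+1)\,w^{(k-2)}_x$ (obtained by successive $c$-differentiation of $-\LL_c u = v_x - 1$) in $c$ and appeal to the linear elliptic regularity of \lemref{lem:Cinf}: the forcing inherits $c$-regularity from the previous level of the hierarchy, and the lateral boundary value $w^{(k-1)}(\sw(c),c) = \Phi_k(c)$ is already in $C^{n+1-k}$ by the inductive hypothesis, so the lemma upgrades $w^{(k-1)}$ to have the requisite joint $(x,c)$-regularity, whence $w^{(k-1)}_x(\sw(\cdot),\cdot) \in C^{n-k}$.

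This elliptic bootstrap in $c$, threaded through the chain-rule recursion and relying on the preceding fact (from the overall bootstrap) that $w^{(i)} \in \seta$ for all $i \leq n-1$, is the technical core of the argument. The reason the lemma is stated with only $w^{(n-1)} \in \seta$ as hypothesis is that the lower-level regularities $w^{(i)} \in \seta$ for $i \leq n-2$ have already been supplied by earlier bootstrap iterations producing the present stage, so the chain of deductions closes naturally.
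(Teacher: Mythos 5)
Your overall skeleton---differentiating the identity $u_c(\sw(c),c)\equiv 0$ repeatedly in $c$ and controlling the terms produced by the chain rule---is exactly the paper's argument (the paper does it in one shot: the $n$-th derivative of $w^{(0)}(\sw(c),c)\equiv 0$ expresses $w^{(n)}(\sw(c),c)$ as a polynomial in $\partial_x^m w^{(k)}(\sw(c),c)$, $k\le n-1$, $m\le n-k$, and $\sw^{(m)}(c)$, $m\le n$, all continuous). However, the step you single out as the technical core is where your proposal breaks down: you claim the needed $C^{n-k}$-regularity in $c$ of $c\mapsto w^{(k-1)}_x(\sw(c),c)$ by ``differentiating the governing PDE in $c$ and appealing to the linear elliptic regularity of \lemref{lem:Cinf}.'' \lemref{lem:Cinf} cannot deliver this: its conclusion is only membership in $\seta$, i.e.\ joint continuity on $\Om$ of all $x$-derivatives of the solution, and it says nothing about differentiability in $c$. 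Moreover, differentiating the PDE in $c$ presupposes the very $c$-differentiability you are trying to produce; in this paper that circularity is broken only by the candidate-plus-uniqueness mechanism of \lemref{lem:wuu} inside \lemref{lem:wn_pb}, and \lemref{lem:wn_pb} itself needs the present lemma as input (to know the lateral boundary datum is continuous), so invoking that upgrade here would be circular.

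The good news is that no elliptic regularity is needed at this point at all. At the stage where the lemma is applied, the bootstrap has already supplied $w^{(j)}\in\seta$ for every $j\le n-1$ (equivalently, $\partial_c^{\,j+1}u$ exists with all $x$-derivatives continuous on $\Om$) together with $\sw\in C^n[0,\cc]$. Differentiating $c\mapsto w^{(k-1)}_x(\sw(c),c)$ up to $n-k$ times via the chain rule only ever produces terms of the form $\partial_x^m w^{(j)}(\sw(c),c)$ with $j\le n-1$ and $\sw^{(m)}(c)$ with $m\le n$ (the mixed-partial interchange $\partial_c\partial_x w^{(j)}=\partial_x w^{(j+1)}$ is justified by the integral representation $w^{(j)}(x,c)=w^{(j)}(x,\cc)-\int_c^{\cc}w^{(j+1)}(x,s)\ds$ and differentiation under the integral, since $\partial_x w^{(j+1)}$ is continuous and bounded on $\Om$). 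All of these are continuous, which closes your induction and reduces your argument to the paper's. So the result you want at the intermediate levels is true, but your proposed justification via \lemref{lem:Cinf} is not a valid route to it and should be replaced by the direct chain-rule computation from the previously established $\seta$-memberships.
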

\begin{proof}
Notice $w^{(0)}(\sw(c),c)\equiv u_{c}(\sw(c),c)\equiv 0$, so by taking its $n$-th derivative, we see that $w^{(n)}(\sw(c),c)$ can be expressed as a polynomial of
$$
\frac{\p w^{(k)}(\sw(c),c)}{\p x^m},~ k=0,1,\cdots,n-1,~ m=1,2,\cdots,n-k,
$$
and
$\sw^{(m)}(c),~ m=1,2,\cdots, n,$
which all are continuous in $[0,\cc]$, so $w^{(n)}(\sw(c),c)\in C[0,\cc]$.
\end{proof}

\begin{lemma}\label{lem:wn_pb}
If $\sw(\cdot)\in C^n[0,\cc]$ and $w^{(n-1)} \in \seta$ for some $n\geq 1$, then $w^{(n)}\in \seta$.
\end{lemma}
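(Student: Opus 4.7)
The plan mirrors the proof of \lemref{lem:w_pb}: first construct a candidate $\varphi\in\seta$ as the unique solution of an auxiliary linear boundary value problem furnished by \lemref{lem:Cinf}, and then identify $\varphi=w^{(n)}=\p_c w^{(n-1)}$ through the uniqueness principle \lemref{lem:wuu}.

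By the preceding inductive steps I may assume $w^{(k)}=\p_c w^{(k-1)}\in\seta$ for $1\leq k\leq n-1$, so that formally differentiating $-\LL_c u=v_x-1$ in $c$ repeatedly yields the chain of identities
\[
-\LL_c w^{(k)}=-(k+2)\,w^{(k-1)}_x,\qquad k=0,1,\ldots,n-1,
\]
with the convention $w^{(-1)}:=u$. Hence $\p_c w^{(n-1)}$ should solve the boundary value problem
\[
-\LL_c\varphi=-(n+2)\,w^{(n-1)}_x\ \text{in }\Om,\qquad\varphi(0,c)=0,\qquad\varphi(\sw(c),c)=h_n(c),
\]
where $h_n\in C[0,\cc]$ is the explicit polynomial expression for the value $w^{(n)}(\sw(c),c)$ obtained in the proof of \lemref{lem:gn} (continuous because $\sw\in C^n$ and $w^{(k)}\in\seta$ for $k\leq n-1$). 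Since $-(n+2)w^{(n-1)}_x\in\seta$, \lemref{lem:Cinf} delivers a unique $\varphi\in\seta$ meeting these conditions.

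Next I set
\[
\widetilde w(x,c):=w^{(n-1)}(x,\cc)-\int_c^{\cc}\varphi(x,s)\,\ds,
\]
which is $C^1$ in $c$ with $\widetilde w_c=\varphi$ and $\widetilde w(\cdot,\cc)=w^{(n-1)}(\cdot,\cc)$. The aim is $\widetilde w\equiv w^{(n-1)}$, which will give $w^{(n-1)}_c=\widetilde w_c=\varphi\in\seta$ and thus $w^{(n)}\in\seta$. I apply \lemref{lem:wuu} to $\psi:=\widetilde w-w^{(n-1)}\in\setb$ (with $\psi_x$ bounded). The endpoint identities $\psi(x,\cc)=0$ and $\psi(0,c)=0$ are immediate (the latter uses the inherited fact $w^{(k)}(0,\cdot)\equiv 0$ for all $k$, coming from $u(0,\cdot)\equiv 0$, together with $\varphi(0,\cdot)\equiv 0$). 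For the PDE condition, I compute each side separately:
\[
\p_c(-\LL_c\widetilde w)-\widetilde w_x=-\LL_c\widetilde w_c=-\LL_c\varphi=-(n+2)\,w^{(n-1)}_x,
\]
and, using the inductive identity $\p_c w^{(n-2)}=w^{(n-1)}$,
\[
\p_c(-\LL_c w^{(n-1)})-w^{(n-1)}_x=-(n+1)\,w^{(n-1)}_x-w^{(n-1)}_x=-(n+2)\,w^{(n-1)}_x,
\]
so the two sides agree and $-\p_c(\LL_c\psi)-\psi_x=0$ a.e.\ in $\Om$.

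The most delicate check is the boundary identity $\frac{\d[\psi(\sw(c),c)]}{\dc}-\psi_x(\sw(c),c)\,\sw'(c)=0$. A direct computation using $\widetilde w_x(\sw(c),c)=w^{(n-1)}_x(\sw(c),\cc)-\int_c^{\cc}\varphi_x(\sw(c),s)\,\ds$ yields
\[
\frac{\d\,[\widetilde w(\sw(c),c)]}{\dc}-\widetilde w_x(\sw(c),c)\,\sw'(c)=\varphi(\sw(c),c)=h_n(c),
\]
while the same quantity with $w^{(n-1)}$ in place of $\widetilde w$ also equals $h_n(c)$: this is precisely the content of \lemref{lem:gn}, since $h_n$ was defined as the polynomial expansion of $\frac{\d[w^{(n-1)}(\sw(c),c)]}{\dc}-w^{(n-1)}_x(\sw(c),c)\,\sw'(c)$ in terms of $x$-derivatives of the $w^{(k)}$ with $k\leq n-1$ and derivatives of $\sw$, a legitimate identity in $C[0,\cc]$ even before $w^{(n-1)}_c$ is known to exist. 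Therefore \lemref{lem:wuu} forces $\psi\equiv 0$, completing the proof. I expect this last cross-check — aligning $h_n$ from \lemref{lem:gn}'s polynomial expansion with the chain-rule expression for $\widetilde w$ at the free boundary — to be the main technical obstacle, precisely because $w^{(n-1)}_c$ is not confirmed to exist until the very end of the argument.
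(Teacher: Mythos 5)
Your proposal is correct and follows essentially the same route as the paper: construct the candidate $\varphi$ via \lemref{lem:Cinf} with source $-(n+2)w^{(n-1)}_x$ and boundary data $w^{(n)}(\sw(c),c)$ (continuous by \lemref{lem:gn}), integrate in $c$ to form $\widetilde w$, and identify $\widetilde w=w^{(n-1)}$ through \lemref{lem:wuu}. Your extra care in defining $h_n$ as the polynomial expansion from \lemref{lem:gn} and checking $\psi(0,c)=0$ only makes explicit what the paper leaves implicit.
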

\begin{proof}
By \lemref{lem:gn} we know $w^{(n)}(\sw(c),c)\in C[0,\cc]$.
By \lemref{lem:Cinf}, we see there exits a unique $\phi\in\seta$ such that
\begin{align*} 
\begin{cases}
-\LL_c \phi=-(n+2) w^{(n-1)}_x & \hbox{in \;} \Om,\medskip\\
\phi(0,c)=0,~
\phi(\sw(c),c)=w^{(n)}(\sw(c),c) &c\in[0,\cc].
\end{cases}
\end{align*}
We come to prove $\phi=w^{(n)}$ in $\Om$, which will complete the proof of the lemma.

Let $$\Phi(x,c)=w^{(n-1)}(x,\cc)-\int_c^{\cc} \phi(x,s)\ds,$$
it suffices to prove $\Phi=w^{(n-1)}$ in $\Om$ since $\phi=\Phi_c$.

First, we have
$$
\frac{\p (-\LL_c \Phi)}{\p c}- \Phi_x=-\LL_c \Phi_c=-\LL_c \phi=-(n+2) w^{(n-1)}_x;
$$
and, since $-\LL_c w^{(n-1)}=-(n+1)w^{(n-2)}$,
$$
\frac{\p (-\LL_c w^{(n-1)})}{\p c}- w^{(n-1)}_x=-\frac{\p[(n+1)w^{(n-2)}]}{\p c}- w^{(n-1)}_x=-(n+2) w^{(n-1)}_x.
$$
Hence,
$$\frac{\p (-\LL_c \Phi)}{\p c}- \Phi_x=\frac{\p (-\LL_c w^{(n-1)})}{\p c}- w^{(n-1)}_x.$$
Second, we have $$\Phi_c(\sw(c),c)=\phi(\sw(c),c)=w^{(n)}(\sw(c),c)=w^{(n-1)}_c(\sw(c),c).$$ 
Thus,
\begin{align*}
\frac{\d \Phi(\sw(c),c)}{\dc}-\Phi_x(\sw(c),c)\sw'(c)=\frac{\d w^{(n-1)}(\sw(c),c)}{\dc}-w^{(n-1)}_x(\sw(c),c)\sw'(c).
\end{align*}
Finally, applying \lemref{lem:wuu} yields $\Phi=w^{(n-1)}$, completing the proof.
\end{proof}

Applying \eqref{C1}, we further have

\begin{lemma}\label{lem:xn+1}
If $\sw(\cdot)\in C^n[0,\cc]$ and $w^{(n)} \in \seta$ for some $n\geq 1$, then $\sw(\cdot)\in C^{n+1}[0,\cc]$.
\end{lemma}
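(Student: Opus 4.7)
The plan is to differentiate the identity
\begin{equation*}
\sw'(c) = -\frac{u_{xc}(\sw(c),c)}{u_{xx}(\sw(c),c)},\quad c \in [0,\cc],
\end{equation*}
supplied by \lemref{xc1}, a total of $n$ more times in $c$, and read off that $\sw \in C^{n+1}[0,\cc]$. The denominator $u_{xx}(\sw(c)-,c)$ stays bounded below by a strictly positive constant uniformly in $c$ by \lemref{lem:uxx>0}, so no singularity can arise, and \lemref{lem:vuinf} gives $u \in \seta$, supplying continuous $x$-derivatives of all orders.

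Before running the induction I would upgrade the hypothesis to the statement that $w^{(k)} \in \seta$ for every $k \in \{0,1,\ldots,n\}$. The base case $w^{(0)}=u_{c}\in\seta$ is \lemref{lem:w_pb}; for $k=1,\ldots,n$, since $\sw \in C^{n}\subseteq C^{k}$, iterating \lemref{lem:wn_pb} yields $w^{(k)} \in \seta$. Thus every partial derivative $\partial_{x}^{i}w^{(j)}$ for $0 \le j \le n$ and $i \ge 0$, together with every $\partial_{x}^{i}u$, is continuous on $\Om$.

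The main induction runs on $m \in \{0,1,\ldots,n\}$ and asserts that $\sw \in C^{m+1}[0,\cc]$ together with the representation that $\sw^{(m+1)}(c)$ is a polynomial in the quantities
\begin{equation*}
\partial_{x}^{i}u(\sw(c),c),\ \partial_{x}^{i}w^{(j)}(\sw(c),c)\ (0 \le j \le m),\ \sw'(c),\ldots,\sw^{(m)}(c),\ [u_{xx}(\sw(c),c)]^{-1},
\end{equation*}
all of which are continuous on $[0,\cc]$. The base case $m=0$ is \lemref{xc1}. For the inductive step $m\to m+1$, I would differentiate once more via the chain rule $\tfrac{d}{dc}\varphi(\sw(c),c)=\varphi_{x}(\sw(c),c)\sw'(c)+\varphi_{c}(\sw(c),c)$; the crucial algebraic facts $\partial_{c}[\partial_{x}^{i}u]=\partial_{x}^{i}w^{(0)}$ and $\partial_{c}[\partial_{x}^{i}w^{(j)}]=\partial_{x}^{i}w^{(j+1)}$ show that each differentiation raises the highest $w$-index by exactly one, preserving the representation provided $m+1 \le n$. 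Taking $m=n$ then yields $\sw \in C^{n+1}[0,\cc]$.

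The main difficulty is really just careful book-keeping rather than any genuine obstacle — the index count is tight and works precisely because the $w$-index increases by exactly one per $c$-differentiation, making the hypothesis $w^{(n)}\in\seta$ the minimum information needed to close $n$ derivatives.
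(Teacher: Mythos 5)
Your proposal is correct and follows essentially the argument the paper intends: the paper derives this lemma by repeatedly differentiating the identity \eqref{C1}, using the uniform positive lower bound on $u_{xx}(\sw(c)-,c)$ from \lemref{lem:uxx>0} and the continuity of all $x$-derivatives of $u$ and the $w^{(j)}$, exactly as in your induction (your bookkeeping mirrors the polynomial-representation device the paper already uses in \lemref{lem:gn}). Your preliminary step recovering $w^{(k)}\in\seta$ for $k\le n$ via \lemref{lem:w_pb} and \lemref{lem:wn_pb} is a harmless and sensible addition that matches how the lemma is used in the overall bootstrap.
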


Using $\sw^{(0)}(c)\in C^1[0,\cc]$ (by \lemref{xc1}), $w^{(0)}\in \seta$ (by \lemref{lem:w_pb}), \lemref{lem:wn_pb} and \lemref{lem:xn+1}, one can establish \propref{smooth} easily by mathematical induction.

\subsection{On the Lipschitz continuity of the free boundary}\label{sec:lipschitz}
Finally, we prove \propref{lip}.

\begin{lemma}\label{lem:xms}
We have $u_{xc}(\sw(c),c)\neq 0$ for all $c\in[0,\cc]$.
\end{lemma}
\begin{proof}
Write $w$ instead of $w^{(0)}$ for notation simplicity.
We then need to prove $w_x(\sw(c),c)\neq 0$.
Suppose, on the contrary, $w_x(\sw(c),c)=0$ for some $c\in [0,\cc]$.
Now fix this $c$.
It follows from \eqref{uxx>02} that
$$u_{xx}(x,c )>0,~ x\in (\sw(c)-\ep, \sw(c)).$$
Notice that $w=u_{c}$ satisfies \eqref{w_pb}, so differentiating the equation gives
$$-\LL_{c} w_x(x,c )=-2u_{xx}(x,c ) <0,~ x\in (\sw(c)-\ep, \sw(c)).$$
We claim
\begin{align}\label{wx>0}
w_x(x,c )> 0,~ x\in(\sw(c)-\ep, \sw(c)).
\end{align}
Indeed, if $w_x(x,c )\leq 0$ for some $x\in(\sw(c)-\ep, \sw(c))$, then since $w_x(\sw(c),c )=0$, the maximum principle and the Hopf Lemma imply $w_{xx}(\sw(c)-,c )>0$. But the equation in \eqref{w_pb} implies
$$\frac{\si^2}{2} w_{xx}(\sw(c)-,c )=\Big(2 u_x -(\mu-c_j) w_x+r w\Big)(\sw(c),c )=0,$$
leading to a contradiction. So \eqref{wx>0} holds, from which and $w(\sw(c),c )=0$ we conclude
\begin{align}\label{w<0}
w(x,c )<0,~ x\in(\sw(c)-\ep, \sw(c)).
\end{align}

Define
$$x^*=\sup\Big\{x\in [0,\sw(c)]\;\Big|\; w(x,c )> 0\Big\},$$
then \eqref{w<0} implies $x^*<\sw(c)$.
If $x^*>0$, then by the continuity of $w$, we have $w(x^*,c )=0$; otherwise, we have $x^*=0$ so that $w(x^*,c )=w(0,c )=0$. Integrating the equation in \eqref{w_pb} in $[x^*, \sw(c)]$, and recalling $w(\sw(c),c)=u_x(\sw(c),c)=0$, we have
$$\frac{\si^2}{2} w_{x}(x^*,c )=-r \int_{x^*}^{\sw(c)}w (x,c )\d x+2u (x^*,c ) >0.$$
It hence follows that $w(x^*+\ep',c)>w(x^*,c )=0$ for any sufficiently small $\ep'>0$. But this contradicts to the definition of $x^*$, completing the proof.
\end{proof}

Combining \eqref{C1}, \lemref{lem:xms}, \lemref{lem:x'c} and \propref{smooth}, \propref{lip} follows.

\section*{Acknowledgments} We thank Dr. Jiacheng Fan for his help on performing the numerical analysis. We also thank the associate editor and two anonymous referees for their valuable comments and suggestions that lead to a better version of this paper. 

\bibliographystyle{siamplain}
\bibliography{references}

\end{document}